\definecolor{note_fontcolor}{rgb}{1, 0, 0}
\numberwithin{equation}{section}
\numberwithin{figure}{section}
\theoremstyle{plain}
\newtheorem{thm}{\protect\theoremname}
\theoremstyle{definition}
\newtheorem{defn}{\protect\definitionname}[section]
\theoremstyle{remark}
\newtheorem{rem}[defn]{\protect\remarkname}
\theoremstyle{plain}
\newtheorem{prop}[defn]{\protect\propositionname}
\theoremstyle{remark}
\newtheorem{claim}[defn]{\protect\claimname}
\theoremstyle{plain}
\newtheorem{fact}[defn]{\protect\factname}
\theoremstyle{definition}
\newtheorem{problem}[defn]{\protect\problemname}
\theoremstyle{plain}
\newtheorem{lem}[defn]{\protect\lemmaname}
\theoremstyle{plain}
\newtheorem{cor}[defn]{\protect\corollaryname}
\theoremstyle{plain}
\newtheorem{exm}[defn]{\protect\examplename}
\newenvironment{customthm}[1]
{\innercustomthm}
{\endinnercustomthm}
\let\originalleft\left
\let\originalright\right
\renewcommand{\left}{\mathopen{}\mathclose\bgroup\originalleft}
\renewcommand{\right}{\aftergroup\egroup\originalright}
\date{}
\providecommand{\claimname}{Claim}
\providecommand{\corollaryname}{Corollary}
\providecommand{\definitionname}{Definition}
\providecommand{\factname}{Fact}
\providecommand{\lemmaname}{Lemma}
\providecommand{\problemname}{Problem}
\providecommand{\propositionname}{Proposition}
\providecommand{\remarkname}{Remark}
\providecommand{\theoremname}{Theorem}
\providecommand{\examplename}{Example}
\DeclareMathOperator{\SL}{SL}
\DeclareMathOperator{\Sym}{Sym}
\let\U\relax
\DeclareMathOperator{\U}{U}
\DeclareMathOperator{\id}{id}
\DeclareMathOperator{\T}{(T)}
\DeclareMathOperator{\ptau}{(\tau)}
\DeclareMathOperator{\SAS}{SAS}%
\DeclareMathOperator{\LSM}{LSM}%
\DeclareMathOperator{\TV}{TV}%
\DeclareMathOperator{\SOL}{Sol}%
\DeclareMathOperator{\FGSOL}{FGSol}%
\DeclareMathOperator{\GSOL}{GSol}%
\DeclareMathOperator{\HSOL}{HSol}%
\DeclareMathOperator{\res}{res}%
\DeclareMathOperator{\stab}{Stab}%
\DeclareMathOperator{\pstab}{PStab}%
\DeclareMathOperator{\last}{last}%
\DeclareMathOperator{\suff}{suff}%
\DeclareMathOperator{\paths}{Paths}
\DeclareMathOperator{\spaths}{Simple-Paths}
\DeclareMathOperator{\inv}{inv}
\DeclareMathOperator{\bp}{Bipaths}
\DeclareMathOperator{\Binomial}{Binomial}
\DeclareMathOperator{\Var}{Var}
\DeclareMathOperator{\flex}{flex}
\DeclareMathOperator{\BS}{BS}
\DeclareMathOperator{\poly}{poly}
\DeclareMathOperator{\comm}{comm}
\DeclareMathOperator{\FinSubsets}{FinSubsets}
\DeclareMathOperator{\Subsets}{Subsets}
\DeclareMathOperator{\TotalSize}{TotalSize}
\DeclareMathOperator{\IRS}{IRS}
\DeclareMathOperator{\ham}{H}
\DeclareMathOperator{\GL}{GL}
\DeclareMathOperator{\Bad}{Bad}
\DeclareMathOperator{\Prob}{Prob}
\DeclareMathOperator{\Emp}{Empirical}
\newcommand{\eq}[1]{\mathsf{#1}}%
\newcommand{\ol}[1]{\overline{#1}}%
\newcommand{\lla}{\langle\!\langle}%
\newcommand{\rra}{\rangle\!\rangle}%
\newcommand{\PG}{\mathcal{PG}}%
\newcommand{\eps}{\varepsilon}
\newcommand{\EE}{\mathbb{E}}
\newcommand{\NN}{\mathbb{N}}
\newcommand{\RR}{\mathbb{R}}
\newcommand{\ZZ}{\mathbb{Z}}
\newcommand{\calA}{\mathcal{A}}
\newcommand{\calB}{\mathcal{B}}
\newcommand{\calD}{\mathcal{D}}
\newcommand{\calG}{\mathcal{G}}
\newcommand{\calH}{\mathcal{H}}
\newcommand{\calM}{\mathcal{M}}
\newcommand{\calR}{\mathcal{R}}
\newcommand{\cM}{\mathcal{M}}
\newcommand{\rightedge}{{\hspace{-0.05em}\longrightarrow\hspace{-0.05em}}}%
\begin{document}

\title{Testability of relations between permutations}

\author[O.\ Becker]{Oren Becker}
\address{Oren Becker,
Centre for Mathematical Sciences,
Wilberforce Road, Cambridge CB3 0WA, United Kingdom}
\email{oren.becker@gmail.com}

\author[A.\ Lubotzky]{Alexander Lubotzky}
\address{Alexander Lubotzky,
	Einstein Institute of Mathematics,
	The Hebrew University, Jerusalem 91904, Israel}
\email{alex.lubotzky@mail.huji.ac.il}

\author[J.\ Mosheiff]{Jonathan Mosheiff}
\address{Jonathan Mosheiff,
%	Computer Science Department,
	Carnegie Mellon University,
	5000 Forbes Avenue, Pittsburgh, PA, USA
	}
\email{jmosheif@cs.cmu.edu}

\begin{abstract}
We initiate the study of property testing problems concerning relations between permutations.
In such problems, the input is a tuple $\left(\sigma_{1},\dotsc,\sigma_{d}\right)$ of permutations on $\{1,\dotsc,n\}$,
and one wishes to determine whether this tuple satisfies a certain system
of relations $E$, or is far from every tuple that satisfies $E$.
If this computational problem can be
solved by querying only a small number of entries of the given permutations,
we say that $E$ is \emph{testable}. For example, when $d=2$ and
$E$ consists of the single relation $\mathsf{XY=YX}$, this corresponds
to testing whether $\sigma_{1}\sigma_{2}=\sigma_{2}\sigma_{1}$, where $\sigma_1\sigma_2$ and $\sigma_2\sigma_1$ denote composition of permutations.

We define a collection of graphs, naturally associated with the system $E$, that encodes all
the information relevant to the testability of $E$.
We then prove two theorems that provide criteria for testability and non-testability in terms
of expansion properties of these graphs.
By virtue of a deep connection with group theory, both theorems are applicable to wide classes of systems of relations.

In addition, we formulate the well-studied group-theoretic notion of stability
in permutations as a special case of the testability notion above, interpret all previous works
on stability as testability results, survey previous results on stability from a computational
perspective, and describe many directions for future research on stability and testability.
\end{abstract}

\maketitle
\textbf{}

\section{\label{sec:intro}Introduction}

In this paper we study the testability of relations between permutations.
We consider problems where several permutations are given in a black box form (e.g., as circuits or oracles), and one wishes to determine whether they satisfy a fixed system of relations $E$ or are far from doing so. For example, suppose that $E$ consists of the single formal relation $\eq{XY=YX}$. The computational problem corresponding to $E$ is to test whether two given permutations $A$ and $B$, over the same finite set, commute. More precisely, the problem is to distinguish between the following two cases (i) $AB=BA$, and (ii) $A$ and $B$ are $\eps$-far in the normalized Hamming metric (see \eqref{eq:intro-hamming-metric} and Definition \ref{def:solutions}) from every pair of permutations $A'$, $B'$ that satisfies $A'B'=B'A'$. 

Testing whether $AB=BA$ was shown in \cite{BeckerMosheiff} to be achievable by an algorithm whose query and time complexities are both polynomial in $\frac 1\eps$. In particular, we say that the system $E=\{\eq{XY=YX}\}$ is \emph{testable} since it has a testing algorithm whose query complexity depends only on $\eps$, and not\footnote{More generally, testing algorithms with query complexity sublinear in $n$ are a fascinating topic for future research, but the focus of the present paper is on testing with a constant number of queries. See also Section \ref{subsec:QueriesDependOnN}.} on the size $n$ of the domain $\left\{1,\dotsc,n\right\}$ of $A$ and $B$. Here, \emph{query complexity} counts queries of the form ``what is $A(x)$'' or ``what is $B(x)$'' for $x$ in the domain of the permutations. 

On the other hand, consider the system of relations $E = \{\eq{XZ=ZX}, \eq{YZ=ZY}\}$.
Testing $E$ amounts to testing, for three given permutations $A$, $B$ and $C$, whether $C$ commutes with both $A$ and $B$. Theorem \ref{thm:LSMUniversal} in the present paper, together with \cite{Ioana}, imply that this system is not testable (see Example \ref{example:BS-stability}). In other words, the query complexity of every testing algorithm for $E$ must depend on $n$ and not just on $\eps$. Similarly, the system $E=\{\eq{XY^2=Y^2X}\}$ is also not testable (this is also discussed in Example \ref{example:BS-stability}), even though it is superficially similar to  the system $\{\eq{XY=YX}\}$.

In this work we go far beyond the systems of relations in the examples above.
We establish a framework and initiate a systematic study of the testability of systems of relations between permutations (also known as \emph{equations in permutations}). Given such a system $E$, we naturally associate with it a certain infinite family of graphs $\GSOL_E$ (see Section \ref{sec:GraphView}), which contains both finite and infinite graphs. Our main results give criteria for the testability and non-testability of $E$ in terms of expansion properties of $\GSOL_E$.

\begin{defn}[Measures of expansion]
	Let $G$ be a graph of bounded degree, with vertex set $V$. The \emph{isoperimetric constant} of $G$ is 
	$$\iota(G) = \inf\left\{\frac{|E(X,V\setminus X)|}{|X|}\mid X\subseteq V\text{ is finite and nonempty}\right\}.$$
	If $G$ is finite, its \emph{Cheeger constant} is
	$$h(G) = \inf\left\{\frac{|E(X,V\setminus X)|}{|X|}\mid X\subseteq V\text{ and }1\le |X|\le \frac{|V|}2\right\}.$$
	Here, $E(X,Y)$ denotes the set of edges between the sets $X$ and $Y$.	
\end{defn}
Note that if $G$ is finite then $\iota(G)$ is trivially $0$ since the set $X=V$ has no expansion. However, some infinite graphs, such as the $d$-regular tree ($d\ge 3$), have a positive isoperimetric constant. 

Our main positive theorem states that $E$ is testable whenever $\GSOL_E$ is nonexpanding in the isoperimetric sense.
\begin{thm}[Main positive theorem]\label{thm:MainPositive}
	If $\iota(G) = 0$ for every $G\in \GSOL_E$ then $E$ is testable.
\end{thm}
The notion of a \emph{testable system of relations} is defined formally in Definition \ref{def:testable-equations} in Section \ref{subsec:framework}. The family of graphs $\GSOL_E$ is defined in Section \ref{sec:GraphView}.

The aforementioned testability of $\{\eq{XY=YX}\}$ is a narrow special case\footnote{
For $E=\{\eq{XY=YX}\}$,
it is not hard to verify directly that $\iota(G)=0$ for all $G\in\GSOL_E$,
but in fact it suffices to verify that $\iota(C)=0$,
where $C$ is the infinite grid in the plane.
This suffices because $C$ is the Cayley graph of the group $\ZZ^2$ (see Section \ref{sec:GroupTheory}).} of Theorem \ref{thm:MainPositive}.

Our main negative theorem states that nonexpansion in the Cheeger sense is a necessary condition for testability.
\begin{thm}[Main negative theorem]\label{thm:MainNegative}
	Let $\FGSOL_E$ denote the set of graphs in $\GSOL_E$ that are finite and connected. If $\FGSOL_E$ is infinite and $\inf\{h(G)\mid G\in \FGSOL_E\} > 0$ then $E$ is non-testable.
\end{thm}

In Section \ref{sec:GroupTheory} we will show how to associate a group\footnote{
The family $\GSOL_E$ is in fact the family of graphs whose connected
components are Schreier graphs of the group $\Gamma(E)$.
}
$\Gamma(E)$ with the system $E$.
The expansion and isoperimetric constants appearing in
Theorems \ref{thm:MainPositive} and \ref{thm:MainNegative} have been studied
extensively in the framework of group theory, leading to numerous examples where
the theorems are applicable, some of which are discussed in Section \ref{sec:GroupTheory} and Appendix \ref{app:Equations}.

We note that, while Theorems \ref{thm:MainPositive} and \ref{thm:MainNegative} apply to many systems of relations, they do not provide a complete classification. For example, the system consisting of the single relation $\eq{XY^2 = Y^3X}$ is known to satisfy neither the hypothesis of Theorem \ref{thm:MainPositive} nor that of Theorem \ref{thm:MainNegative}, and the question of its testability remains open (see Problem \ref{prob:BS}). In the spirit of well-known classification theorems, such as those concerning constraint-satisfaction problems \cites{Chen09,Zhuk20} and efficient testability of the $H$-freeness property of a graph \cite{Alon02}, a prominent objective of the present line of research is obtaining a complete characterization of testable systems of relations. We elaborate on this goal in Section \ref{subsec:characterization}. We now turn to providing the necessary framework for a precise statement of our results. 

\subsection{A framework for systems of relations between permutations}\label{subsec:framework}
Fix a finite alphabet $S=\left\{ s_{1},\dotsc,s_{d}\right\} $
throughout the introduction. Let $S^{-}=\left\{ s_{1}^{-1},\dotsc,s_{d}^{-1}\right\} $
be the set of formal inverses of the letters in $S$, and write $S^{\pm}=S\cup S^{-}$.
\begin{defn}
	A \emph{relation} is a formal equation of the form $w_{i,1}=w_{i,2}$, where $w_{i,j}$ is a word over $S^{\pm}$.
	A \emph{system of relations} is a finite set of relations.
	%	\[
	%	E=\left\{ w_{i,1}=w_{i,2}\right\} _{i=1}^{r}\,\,\text{,}
	%	\]
	%	where $w_{i,j}$ is a word over $S^{\pm}$.
\end{defn}

Let $\Sym(n)$ denote the group of all permutations on $[n]\coloneqq\{1\,\dotsc,n\}$. Fix a system of relations $E=\left\{ w_{i,1}=w_{i,2}\right\} _{i=1}^{r}$.
We think of the letters in $S$ as variables, and study the space of assignments
${s_1 \leftarrow \sigma_1,\dotsc, s_d \leftarrow \sigma_d}$ that satisfy all of the relations in $E$, where each $\sigma_i$
is a permutation in $\Sym(n)$.
In other words, we study the space of simultaneous solutions for $E$ inside $\left(\Sym\left(n\right)\right)^{d}$.
More precisely, we are interested in testability problems related
to this space of solutions. As an example, to fit the relation $\eq{XY=YX}$
into this framework, we set $d=2$, $S=\left\{ s_{1},s_{2}\right\} $,
$r=1$, $w_{1,1}=s_{1}s_{2}$ and $w_{1,2}=s_{2}s_{1}$. Thus $E=\left\{ s_{1}s_{2}=s_{2}s_{1}\right\} $
in this case. For notational convenience we sometimes denote $\eq X=s_{1}$,
$\eq Y=s_{2}$ and $\eq Z=s_{3}$.
\begin{defn}
	\label{def:wordNotation}For a word $w$ over $S^{\pm}$ and a tuple
	$\overline{\sigma}=\left(\sigma_{1},\dotsc,\sigma_{d}\right)\in\left(\Sym\left(n\right)\right)^{d}$,
	$n\in\NN$, write $w\left(\overline{\sigma}\right)$ for the permutation
	that results from applying the assignment $s_{j}\leftarrow \sigma_{j}$
	to the word $w$.
\end{defn}

\begin{exm} 
	If $S=\left\{ \eq X,\eq Y\right\} $, $w=\eq{XYX^{-1}Y^{-1}}$
	and $\overline{\sigma}=\left(\left(1\,2\,3\right),\left(1\,2\right)\right)\in\left(\Sym\left(3\right)\right)^{2}$
	(where the permutations are given in cycle notation), then $w\left(\overline{\sigma}\right)=\left(1\,2\,3\right)\left(1\,2\right)\left(1\,2\,3\right)^{-1}\left(1\,2\right)^{-1}=\left(1\,3\,2\right)$.
\end{exm}

Let $d_{n}^{\ham}$
denote\footnote{We shall omit the subscript $n$ when it is clear from the context.} the normalized Hamming metric on $\Sym\left(n\right)$.
That is,
\begin{equation}
	d_n^{\ham}\left(\sigma,\tau\right) = d^{\ham}\left(\sigma,\tau\right)=\frac{1}{n}\left|\left\{ x\in\left[n\right]\mid\sigma\left(x\right)\neq\tau\left(x\right)\right\} \right|\quad\forall\sigma,\tau\in\Sym\left(n\right)\,\,\text{.}\label{eq:intro-hamming-metric}
\end{equation}

\begin{defn}
	\label{def:solutions}Let $n\in\NN$. We say that $\overline{\sigma}\in\left(\Sym\left(n\right)\right)^{d}$
	is a \emph{solution} for $E$ in $\left(\Sym\left(n\right)\right)^{d}$ (or that $\overline \sigma$ \emph{satisfies} $E$)
	if $w_{i,1}\left(\overline{\sigma}\right)=w_{i,2}\left(\overline{\sigma}\right)$
	for each $1\leq i\leq r$, and write $\SOL_{E}\left(n\right)$ for
	the set of solutions for $E$ in $\left(\Sym\left(n\right)\right)^{d}$.
	For $\eps\ge0$, let
	\[
	\SOL_{E}^{\geq\eps}\left(n\right)=\left\{ \left(\sigma_{1},\dotsc,\sigma_{d}\right)\in\left(\Sym\left(n\right)\right)^{d}\mid\text{\ensuremath{\sum_{j=1}^{d}d^{\ham}\left(\sigma_{j},\tau_{j}\right)\geq\eps}\ensuremath{\quad}\ensuremath{\forall\left(\tau_{1},\dotsc,\tau_{d}\right)\in}}\SOL_{E}\left(n\right)\right\} \,\,\text{.}
	\]
\end{defn}

In the example where $S=\left\{ \eq X,\eq Y\right\} $ and $E=\left\{ \eq{XY=YX}\right\} $,
the space $\SOL_{E}\left(n\right)$ is the set of pairs $\left(A,B\right)\in\left(\Sym\left(n\right)\right)^{2}$
such that $AB=BA$, and the space $\SOL_{E}^{\geq\eps}\left(n\right)$
is the set of all pairs $(A,B)\in\left(\Sym\left(n\right)\right)^{2}$ that satisfy $d^H(A,A') + d^H(B,B') \ge \eps$ whenever $(A',B')\in\left(\Sym\left(n\right)\right)^{2}$ is a commuting pair.

The following is the main novel definition of this paper:
\begin{defn}
	[Testable system of relations]\label{def:testable-equations}An algorithm
	$\calM$ that takes $n\in\NN$ and a tuple $\overline{\sigma}=\left(\sigma_{1},\dotsc,\sigma_{d}\right)\in\left(\Sym\left(n\right)\right)^{d}$
	as input is an \emph{$\left(\eps,q\right)$-tester} for $E$ if it
	satisfies the following conditions:
	\begin{itemize}
		\item \textbf{Completeness:} if $\overline{\sigma}\in\SOL_{E}\left(n\right)$,
		the algorithm accepts with probability at least $0.99$.
		\item \textbf{$\eps$-soundness:} if $\overline{\sigma}\in\SOL_{E}^{\ge\eps}\left(n\right)$,
		the algorithm rejects with probability at least $0.99$.
		\item \textbf{Query efficiency:} the algorithm is only allowed to query $q$
		entries of $\sigma_{1},\dotsc,\sigma_{d}$ and their inverses.
	\end{itemize}
	If for every $\eps>0$ there are $q=q\left(\eps\right)\in\NN$ and
	an $\left(\eps,q\right)$-tester $\calM_{\eps}$ for $E=\left\{ w_{i,1}=w_{i,2}\right\} _{i=1}^{r}$
	then we say that $E$ is \emph{$q\left(\eps\right)$-testable} (or
	just \emph{testable}) and that $\eps\mapsto\calM_{\eps}$ is a \emph{family
		of testers} for $E$.
\end{defn}

Note that in Definition \ref{def:testable-equations} we allow the
algorithm to have oracle access both to the entries of the given permutations
and to the entries of their inverses. Oracle access to inverses of
permutations is in fact not always necessary. Appendix \ref{app:AvoidingInverses}
explains what can be done to circumvent the need to sample inverses.

In the sequel, it will be convenient to work with sets of \emph{relators} rather than systems of relations, as explained below.
A word $w$ over the alphabet $S^{\pm}$ is \emph{reduced} if it does
not contain any subword of the form $s_{i}s_{i}^{-1}$ or $s_{i}^{-1}s_{i}$,
$1\leq i\leq d$. Write $F_{S}$ for the set of reduced words over
$S^{\pm}$ (in Appendix \ref{app:FreeGroupAndPresentations} we
recall that $F_{S}$ has a natural group structure, making it the
\emph{free group} on $S$). Every word $w$ over $S^{\pm}$ is equivalent to a unique reduced word, obtained from $w$ by repeatedly removing subwords of the form $s_{i}s_{i}^{-1}$ or $s_{i}^{-1}s_{i}$,
$1\leq i\leq d$.

Let $w$ be a word over $S^{\pm}$. Write $w=s_{i_1}^{e_1}\dotsm s_{i_\ell}^{e_\ell}$, where $\ell\geq 0$, $e_j\in\{+1,-1\}$ and $1\leq i_j\leq d$ for all $1\leq j\leq \ell$. Then $w$ has a formal inverse $w^{-1}\coloneqq s_{i_\ell}^{-e_\ell} \dotsm s_{i_1}^{-e_1}$.

A system of relations $E=\left\{ w_{i,1}=w_{i,2}\right\} _{i=1}^{r}$ gives rise to a subset $R_E$ of $F_S$, defined to be the set of reduced words equivalent to $w_{1,2}^{-1}w_{1,1},\dotsc,w_{r,2}^{-1}w_{r,1}$.
We say that $R_E$ is the \emph{set of relators} corresponding to $E$ (in general, a set of relators is just a subset of $F_S$).
For example, if $E=\{\eq{XZ=ZX}, \eq{YZ=ZY}\}$ then $R_E=\{\eq{X^{-1}Z^{-1}XZ},\eq{Y^{-1}Z^{-1}YZ}\}$.
Clearly, the system of relations $E$ is equivalent to the system of relations $E'=\left\{w=1\mid w\in R_E\right\}$. Indeed, $\SOL_E(n) = \SOL_{E'}(n)$ and $\SOL_E^{\ge \eps}(n) = \SOL_{E'}^{\ge \eps}(n)$ for all $n\geq 1$ and $\eps>0$.
It is generally more convenient to work with $R_E$ rather than directly with $E$.

\subsection{A graph-theoretic view\label{sec:GraphView}}
It will be beneficial to encode a tuple of permutations as an $S$-graph, defined below. 
\begin{defn}
	An $S$-graph is an edge-labelled directed (not necessarily finite) graph\footnote{
	$S$-graphs are allowed to have self-loops and multiple edges, but as follows from
	the definition, two different edges directed from vertex $x$ to vertex $y$ must have
	distinct labels.}, where each directed
	edge is labelled by an element of $S$, and each vertex has exactly
	one outgoing and one incoming edge labelled $s$ for each $s\in S$.	
	
	Given $n\in \NN$, write $\calG_{S}\left(n\right)$ for the set of $S$-graphs on the
	vertex set $\left[n\right]$. In particular, $\calG_S(n)$ consists solely of finite $S$-graphs.
	
	When referring to the connected components of an $S$-graph
	$G$, or to whether or not $G$ is connected, we disregard edge orientation
	and labels, and treat $G$ as an undirected graph.
\end{defn}

Let $G$ be an $S$-graph. For $s\in S$ and a vertex $x$,
we write $s_{G}x=y$ for the unique vertex $y$ such that $x\overset{s}{\rightedge}y$
is an edge in $G$. We also define $s_{G}^{-1}y=x$. When there is no ambiguity about the graph in context, we write $sx$ for $s_Gx$ (for $s\in S^\pm$). For a word $w=w_{t}\dotsm w_{1}$
($w_{i}\in S^{\pm}$) we recursively define $w_{G}x=w_{t}\left(w_{t-1}\dotsc w_{1}x\right)$
whenever $t>1$. That is, $w_{G}x$ (or just $wx$) is the final vertex in the path
that starts at $x$ and follows $t$ edges labelled according to $w$.

We next show how to encode a tuple $\overline{\sigma}=\left(\sigma_{1},\dotsc,\sigma_{d}\right)\in\left(\Sym\left(n\right)\right)^{d}$
as an $S$-graph $G_{\overline \sigma}$. Let $G_{\overline{\sigma}}\in \calG_S(n)$ denote the $S$-graph with vertex
set $\left[n\right]$ and edge set $\left\{ x\overset{s_{i}}{\rightedge}\sigma_{i}x\mid x\in\left[n\right],1\leq i\leq d\right\} $. Clearly, the map $\overline{\sigma}\mapsto G_{\overline{\sigma}}\colon\left(\Sym\left(n\right)\right)^{d}\to\calG_{S}\left(n\right)$
is a bijection.

Let $E = \{w_{i,1}=w_{i,2}\}_{i=1}^r$ be a system of relations over $S^{\pm}$. An $S$-graph $G$ is said to belong to the class $\GSOL_E$ if $(w_{i,1})_Gx = (w_{i,2})_Gx$ for every $1\le i\le r$ and vertex $x$. Equivalently, $G\in \GSOL_E$ if $w_Gx=x$ holds for all $w\in R_E$ and every vertex $x$. In this case, we say that $G$ \emph{satisfies} $E$.

\begin{exm}
Let $S=\{\eq{X},\eq{Y}\}$ and $E=\{\eq{XY=YX}\}$. Let $m,n\geq 1$, and let $G$ be the graph on the vertex set $V=\{0,\dotsc,m-1\}\times\{0,\dotsc,n-1\}$, with the following edges: for each $v=(a,b)\in V$, the $\eq{X}$-labelled edge originating from $v$ terminates at $(a+1,b)$, and the $\eq{Y}$-labelled edge originating from $v$ terminates at $(a,b+1)$ (where the addition is taken modulo $m$ and $n$, respectively).
Then $G$ belongs to $\GSOL_E$ (note that $G$ can be embedded on a torus).
\end{exm}

Denote $\GSOL_E(n) = \GSOL_E \cap \calG_S(n)$. Given $\overline\sigma \in \left(\Sym(n)\right)^d$, note that $\overline \sigma \in \SOL_E(n)$ if and only $G_{\overline \sigma} \in \GSOL_E(n)$. In other words, $G_{\overline \sigma}$ satisfies $E$ if and only if $\overline \sigma$ does the same. Importantly, the inclusion $\bigcup_{n\in \NN}\GSOL_E(n)\subset \GSOL_E$ is strict since the latter set contains infinite $S$-graphs.

At this point, all notions in the statements of Theorems \ref{thm:MainPositive} and \ref{thm:MainNegative} have been fully defined. As these theorems demonstrate, the power of the correspondence $\SOL_E\left(n\right) \to \GSOL_E\left(n\right)$, $n\in\NN$, manifests in a connection between the testability of $E$ and the geometry of the
(finite and infinite) graphs in $\GSOL_E$. 

As we will be working extensively with the graph encoding of a tuple of permutations, it will be convenient to think of a tester as an algorithm whose input lies in $\calG_S(n)$, rather than $\left(\Sym(n)\right)^d$. We naturally define\footnote{In words, $d^{\ham}(G,G')$ counts (with a $\frac 1n$ normalization factor) pairs consisting of a vertex $x$ and letter $s$, such that the respective $s$-labelled edges originating from $x$, in the graphs $G$ and $G'$, disagree on their target vertex. Recall that the vertex set of both $G$ and $G'$ is $\{1,\dotsc, n\}$.}
\begin{equation}\label{eq:graphDistance}
d^{\ham}\left(G,G'\right)=\sum_{s\in S}\frac{1}{n}\left|\left\{ x\in\left[n\right]\mid s_{G}x\ne s_{G'}x\right\} \right|\quad\forall G,G'\in\calG_S(n)
\end{equation}
and
\[
\GSOL_{E}^{\geq\eps}\left(n\right)=\left\{ G\in\calG_{S}\left(n\right)\mid d^{\ham}\left(G,G'\right)\geq\eps\,\,\forall G'\in\GSOL_{E}\left(n\right)\right\} \,\,\text{.}
\]
Note that 
\[
d^{\ham}\left(G_{\overline{\sigma}},G_{\overline{\tau}}\right)=\sum_{i=1}^{d}d^{\ham}\left(\sigma_{i},\tau_{i}\right)
\]
for $\overline{\sigma}=\left(\sigma_{1},\dotsc,\sigma_{d}\right)$
and $\overline{\tau}=\left(\tau_{1},\dotsc,\tau_{d}\right)$ in $\left(\Sym\left(n\right)\right)^{d}$, 
and that
$$\GSOL_{E}^{\geq\eps}\left(n\right)=\left\{ G_{\overline{\sigma}}\mid\overline{\sigma}\in\SOL_{E}^{\geq\eps}\left(n\right)\right\}\,\,\text{.}$$

We can now restate Definition \ref{def:testable-equations} in terms of $S$-graphs.
\begin{defn}
	[Testable system of relations in terms of $S$-graphs]\label{def:testable-equations-graphs}An algorithm
	$\calM$ that takes $n\in\NN$ and an $S$-graph $G\in \calG_S(n)$
	as input is an \emph{$\left(\eps,q\right)$-tester} for $E=\left\{ w_{i,1}=w_{i,2}\right\} _{i=1}^{r}$ if it
	satisfies the following conditions:
	\begin{itemize}
		\item \textbf{Completeness:} if $G\in\GSOL_{E}\left(n\right)$,
		the algorithm accepts with probability at least $0.99$.
		\item \textbf{$\eps$-soundness:} if $G\in\GSOL_{E}^{\ge\eps}\left(n\right)$,
		the algorithm rejects with probability at least $0.99$.
		\item \textbf{Query efficiency:} the algorithm is only allowed to make $q$
		queries, where each query is of the form ``what is $s_Gx$?'', $x\in [n]$, $s\in S^\pm$. 
	\end{itemize}
	If for every $\eps>0$ there are $q=q\left(\eps\right)\in\NN$ and
	an $\left(\eps,q\right)$-tester $\calM_{\eps}$ for $E$
	then we say that $E$ is \emph{$q\left(\eps\right)$-testable} (or
	just \emph{testable}) and that $\eps\mapsto\calM_{\eps}$ is a \emph{family
		of testers} for $E$.
\end{defn}

\subsection{The \texttt{Sample and Substitute} algorithm, a first attempt at defining a tester}
We turn to discussing algorithms for testing a given system of relations. Fix a system of relations $E$. The \texttt{Sample and Substitute} algorithm with repetition factor $k$ (denoted $\SAS_k^E$), defined below, is arguably the most natural candidate for an algorithm that tests whether an $S$-graph $G$ lies in $\GSOL_E$.

\begin{algorithm}[H]\caption{\label{alg:SAS}
		\textsf{Sample and Substitute} for $E$ with repetition factor $k$\\
		\textbf{Input:} $n\in\NN$ and $G\in \calG_S(n)$}\begin{algorithmic}[1]
		
		\State Sample $\left(w_1,x_1\right),\dotsc,\left(w_k,x_k\right)$
		uniformly and independently from $R_E\times\left[n\right]$.
		
		\If{ $w_jx_{j}=x_{j}$ in the graph $G$
			for all $1\leq j\leq k$}
		
		\State Accept.
		
		\Else\State Reject.\EndIf
		
	\end{algorithmic}
	
\end{algorithm}

For example, when $S = \{\eq{X,Y}\}$ and $E = \{\eq{XY=YX}\}$ (and so $R_E = \left\{\eq{X^{-1}Y^{-1}XY}\right\}$), the algorithm $\SAS_k^E$ randomly samples $k$ vertices $x_1,\dotsc, x_k$ from $[n]$.
It then substitutes each $x_j$ into the permutation of $[n]$ defined by $\eq{X^{-1}Y^{-1}XY}$ and checks whether the result is $x_j$. 
In other words, for each $1\leq j\leq k$, the algorithm checks whether walking from $x_j$ along the edge labelled $\eq{Y}$ and then the edge labelled $\eq{X}$, and then in the reverse direction along the edge labelled $\eq{Y}$ and then the edge labelled $\eq{X}$, leads back to $x_j$. The algorithm accepts if and only if this check succeeds for all $1\leq j\leq k$. For this specific system of relations, the query complexity is $4k$, since, to compute $\eq{X^{-1}Y^{-1}XY} x_j$, the algorithm first needs to query the vertex $\eq {Y} x_i$, then the vertex $\eq{X}(\eq{Y}x_j)$, and so on, for a total of $4$ queries.
In terms of tuples of permutations (rather than graphs), the input is $(\sigma_1,\sigma_2)\in\left(\Sym(n)\right)^2$, and the run of the algorithm is equivalent to sampling $x_1,\dotsc,x_k$ uniformly and independently from $[n]$ and accepting if and only if $\sigma_1^{-1}\sigma_2^{-1}\sigma_1\sigma_2x_j=x_j$ (equivalently, $\sigma_1\sigma_2x_j=\sigma_2\sigma_1x_j$) for all $1\leq j\leq k$.

For a general system of relations $E$, the query complexity of $\SAS_{k}^{E}$ is $O_{E}\left(k\right)$
since, generalizing from the case $E=\left\{ \mathsf{XY=YX}\right\} $,
the algorithm requires $\left|w_j\right|$ queries in order
to compute $(w_j)_{G}x_j$,
where we write $\left|w\right|$ for the length of a reduced
word $w$. In particular, if $k$ is independent of $n$ then so is
the query complexity.
If $E$ admits a family of testers consisting of $\SAS_k^E$ algorithms, we say that $E$ is \emph{stable}. This is formalized in the following definition,
which is stated in combinatorial terms in \cite[Definition 1]{GlebskyRivera} and in
group-theoretic terms in \cite[Theorem 4.2]{ArzhantsevaPaunescu} and \cite[Definition 1.1]{Ioana}. Here the definition is given in computational terms:
\begin{defn}\label{def:StableRelations}
	\label{def:stable}The system of relations $E$ is \emph{stable }if
	there is a function $k\colon\RR_{>0}\rightarrow\NN$ such that $\eps\mapsto\SAS_{k\left(\eps\right)}^{E}$
	is a family of testers for $E$. In this case we also say that $E$
	is \emph{$k\left(\eps\right)$-stable}.
\end{defn}
In particular, every stable system of relations is testable. 

Note that for every $k\in \NN$, the algorithm $\SAS_k^E$ has perfect completeness, namely, when $G\in \GSOL_E(n)$, the algorithm always accepts. Thus, the question of whether a system $E$ is stable is equivalent to the question of whether there is a function $\epsilon\mapsto k\left(\eps\right)$ such that $\SAS_{k\left(\eps\right)}^E$ is $\eps$-sound for all $\eps>0$.

The question of which systems of relations are stable has recently received a lot of attention (see Section \ref{sec:survey-stability}), yielding both positive and negative results. For example, the aforementioned testability of $E = \{\eq{XY=YX}\}$ follows from the following stability result:

\begin{exm}[$\{\eq{XY=YX}\}$ is stable]\label{example:XY=YX}
	Consider the system $E = \{\eq{XY=YX}\}$. The main theorem of \textbf{\cite{ArzhantsevaPaunescu}}
	implies that there is a function $k\colon\RR_{>0}\rightarrow\NN$
	such that $\eps\mapsto\SAS_{k\left(\eps\right)}^{\eq{XY=YX}}$ is
	a family of testers for $\eq{XY=YX}$. A later work \cite{BeckerMosheiff} improved upon the result of \cite{ArzhantsevaPaunescu} by showing that we may take
	$k\left(\eps\right)=\left(\frac{1}{\eps}\right)^{O\left(1\right)}$,
	where the implied constant is absolute, and by providing a constructive\footnote{That is, the proof in \cite{BeckerMosheiff} provides an explicit algorithm that, given a graph $G\in\calG_S(n)$ that is accepted by $\SAS_{k(\eps)}^{\eq{XY=YX}}$ with high probability, produces a graph $G'\in\GSOL_E(n)$ close to $G$.}
	proof. Thus, not only is $E$ testable by $\textup{\texttt{Sample and Substitute}}$, but moreover, this testing can be done efficiently.
\end{exm}

One natural question from the testing perspective is whether \texttt{Sample and Substitute} is \emph{universal}, namely, does every testable system of relations have a family of \texttt{Sample and Substitute} testers? In other words, are testability and stability equivalent? As we shall see, Theorem \ref{thm:MainPositive} can be used to provide a negative answer to this question. In Section \ref{subsec:testableInstable} we give a concrete example of a system of relations which is testable but not stable (see also Appendix \ref{appendix:abels}).

Fortunately, however, it turns out that a universal tester does exist. We turn now to describing this tester, which we name $\texttt{Local Statistics Matcher}$\footnote{Readers familiar with Benjamini--Schramm convergence will recognize a close connection between  $\texttt{Local Statistics Matcher}$ and the Benjamini--Schramm metric.}.

\subsection{\label{subsec:intro-LSM}\textsf{Local Statistics Matcher}, a universal tester for relations between permutations}
The idea behind \textsf{Local Statistics Matcher} is as follows: Let $G$ be a finite $S$-graph. For a vertex $x$ of $G$ and $r\geq 1$, we consider the isomorphism class of the closed ball\footnote{The ball $B_G\left(x,r\right)$ is the graph whose vertex set $V$ consists of all vertices $wx$ of $G$, where $w$ is a word of length at most $r$ over $S^\pm$, and whose edge set
consists of all the edges of $G$ of the form $w'x\overset{s}{\rightedge}sw'x$, where $s\in S^\pm$ and $w'$ is a word of length at most $r-1$.}
$B_G\left(x,r\right)$
of radius $r$ centered at $x$ as a rooted directed edge-labelled graph (the root is $x$ and the orientations and edge labels are inherited from $G$, but the relevant notion of a graph isomorphism ignores vertex labels). The graph $G$ gives rise to a probability distribution on the set of such isomorphism classes: Write $N_{G,r}$ for the distribution of the isomorphism class of $B_G\left(x,r\right)$, where $x$ is sampled uniformly from the set of vertices of $G$.

Given a graph $G\in\calG_S(n)$, \textsf{Local Statistics Matcher} computes an approximation of the distribution $N_{G,r}$ (for a large enough $r$, independent of $n$), which we denote by $N_{G,r}^{\Emp}$. The algorithm accepts if and only if there exists $G'\in \GSOL_E(n)$ such that the distributions $N_{G,r}^{\Emp}$ and $N_{G',r}$ are at most $\delta$-far from each other (for a small enough $\delta>0$, independent of $n$). In other words, \textsf{Local Statistics Matcher} accepts with high probability if and only if $G$ and $G'$ are close together under the Benjamini--Schramm metric \cites{AldousLyons, BenjaminiSchramm}.

More precisely, \textsf{Local Statistics Matcher} is a parameterized family of algorithms, i.e., for $k$, $P$ and $\delta$ (see below) we have a \textsf{Local Statistics Matcher} algorithm $\LSM_{k,P,\delta}^E$.
As we shall see in Theorem \ref{thm:LSMUniversal}, \textsf{Local Statistics Matcher} is universal in the following sense: the system $E$ is testable if and only if for every $\eps>0$ there are $k$, $P$ and $\delta$ such that $\LSM_{k,P,\delta}^E$ is an $(\eps,q)$-tester for $E$, for some $q$ which depends only on $\eps$.

Rather than working with isomorphism classes of balls, it is more convenient (and essentially equivalent) to consider \emph{stabilizers}:
\begin{defn}
	\label{def:stab-and-Nsigma-P}For a vertex $x$ of an $S$-graph $G$, let 
	\[
	\stab_{G}\left(x\right)=\left\{ w\in F_{S}\mid w_Gx=x\right\} \,\,\text{.}
	\]
	For $P\subset F_{S}$, write $N_{G,P}$ for the distribution
	(over the set of subsets of $P$) of $\stab_{G}\left(x\right)\cap P$
	when $x$ is sampled uniformly from the set of vertices of $G$.
\end{defn}
If $P=\left\{ w\in F_{S}\mid\left|w\right|\le r\right\} $, $r\geq1$,
then $\stab_{G}\left(x\right)\cap P$ determines the
isomorphism class, as a rooted directed edge-labelled graph,
of the closed ball $B_{G}\left(x,r/2\right)$
of radius $r/2$ centered at $x$ in the graph $G$. Indeed, $\stab_{G}\left(x\right)\cap P$ determines the set
\begin{equation}
	\left\{ (w_1,w_2)\in F_{S}\times F_{S} \mid\left|w_1\right|,\left|w_2\right|\le r/2, w_1x=w_2x \right\} \,\,\text{.}\label{eq:colliding-pairs}
\end{equation}
because $w_1x=w_2x$ if and only if the reduced word equivalent to $w_{2}^{-1}w_{1}$ is in $\stab_{G}\left(x\right)\cap P$ (when $\left|w_1\right|,\left|w_2\right|\le r/2$). The set \eqref{eq:colliding-pairs} encodes the isomorphism class of $B_{G}\left(x,r/2\right)$.

Before formulating the algorithm we need some notation. Given a set $A$, we denote its power set by $\Subsets\left(A\right)$, and write $\FinSubsets\left(A\right)$ for the set of finite subsets of $A$. When $A$ is finite, we write $\U(A)$ for the uniform distribution over $A$.

Denote the total-variation distance between two distributions $\theta_{1},\theta_{2}$
over a finite set $\Omega$ by
\[
d_{\TV}\left(\theta_{1},\theta_{2}\right)\coloneqq\frac{1}{2}\sum_{x\in\Omega}\left|\theta_{1}\left(x\right)-\theta_{2}\left(x\right)\right|\,\,\text{.}
\]

The \texttt{Local Statistics Matcher} algorithm for a system of relations $E$ takes three parameters in addition to $E$: a \emph{repetition factor} $k\in \NN$; a finite \emph{word set} $P\in \FinSubsets(F_S)$; and a \emph{proximity parameter} $\delta > 0$. The algorithm, denoted $\LSM_{k,P,\delta}$, is defined as follows.

\begin{algorithm}[H]\caption{\label{alg:LSM}
		\textsf{Local Statistics Matcher} for $E$ with with repetition factor $k$, word set $P\subset F_S$ and proximity parameter $\delta$\\
		\textbf{Input:} $n\in\NN$ and $G\in \calG_S(n)$}\begin{algorithmic}[1]
		
		\State\label{step:LSM-sampling}Sample $x_{1},\dotsc,x_{k}$ uniformly
		and independently from $\left[n\right]$.
		
		\State For each $1\le j\le k$, compute the set $\stab_{G}\left(x_{j}\right)\cap P$
		by querying $G$.
		
		\State Let $N_{G,P}^{\Emp}$ be the distribution of $\stab_{G}\left(x_{j}\right)\cap P$
		where $j$ is sampled uniformly from $\left[k\right]$.
		
		\If{
			\begin{equation}
				\min\left\{ d_{\TV}\left(N_{G,P}^{\Emp},N_{H,P}\right)\mid H\in\GSOL_{E}\left(n\right)\right\} \le\delta\label{eq:LSMCondition}
			\end{equation}
		}
		
		\State Accept.
		
		\Else\State Reject.\EndIf
		
	\end{algorithmic}
	
\end{algorithm}

In our proof of the universaility of \texttt{Local Statistics Matcher}, we only make use of sets $P$ of the form $P=\left\{ w\in F_{S}\mid\left|w\right|\le r\right\}$, $r\geq 1$. This restriction does not hurt the universality of the algorithm. For sets $P$ of this form, the algorithm can essentially be seen as sampling balls of radius $\frac r2$.

The query complexity of $\LSM_{k,P,\delta}^{E}$ is $O\left(k\sum_{w\in P}\left|w\right|\right)$,
and in particular it is independent of $n$ whenever the same is true
for $k$, $P$ and $\delta$. Indeed, to determine $\stab_{G}\left(x_{j}\right)\cap P$,
$1\le j\le k$, it suffices to compute $w_Gx_{j}$
for each $w\in P$, and this can be done in $\left|w\right|$ queries
for any given $w$.

In contrast to \texttt{Sample and Substitute}, the \texttt{Local Statistics Matcher} algorithm is not necessarily perfectly complete. Namely, for some systems of relations $E$, the algorithm may reject (with some small probability) even if $G\in \GSOL_E(n)$. This can happen in a run of \texttt{Local Statistics Matcher} only if $N_{G,P}^{\Emp}$ is not a good approximation of $N_{G,P}$.

\begin{rem}
	\label{rem:RunningTimeLSM}In a naive implementation of $\LSM_{k,P,\delta}^{E}$,
	the algorithm explicitly enumerates the elements of $\GSOL_{E}\left(n\right)$
	in order to compute the set of distributions $\left\{ N_{H,P}\mid H\in\GSOL_{E}\left(n\right)\right\} $.
	This results in running time exponential in $n$. While the present
	work focuses on query complexity, we discuss possible significant
	time complexity improvements in Section \ref{sec:Runtime}.
\end{rem}

\subsubsection{Statistical distinguishability and the universality of \textup{\texttt{Local Statistics Matcher}}}

To prove the universality of \texttt{Local Statistics Matcher}, we introduce the notion of a \emph{statistically distinguishable} system of relations (Definition \ref{def:statisticallyDistinuishable}). We then show (Theorem \ref{thm:LSMUniversal}) that for a system of relations $E$, the following implications hold:
$$
E \text{ is testable} \implies E \text{ is statistically distinguishable} \implies E \text{ is testable by }\texttt{Local Statistics Matcher},$$
and thus the three statements are equivalent.

\begin{defn}\label{def:statisticallyDistinuishable}
	The system of relations $E$ is \emph{statistically distinguishable} if for every
	$\eps>0$ there exist $P\left(\eps\right)\in\FinSubsets\left(F_{S}\right)$
	and $\delta\left(\eps\right)>0$ such that $d_{\TV}\left(N_{G,P\left(\eps\right)},N_{H,P\left(\eps\right)}\right)\ge\delta\left(\eps\right)$
	for every $n\in\NN$, $G\in\GSOL_{E}\left(n\right)$
	and $H\in\GSOL_{E}^{\ge\eps}\left(n\right)$. In this
	case we also say that $E$ is \emph{$\left(P\left(\eps\right),\delta\left(\eps\right)\right)$-statistically-distinguishable}.
\end{defn}

\begin{rem}
	\label{rem:LocalVsGlobalMetric}Given graphs $G,H\in\calG_S(n)$,
	we sometimes refer to $d^{\ham}\left(G,H\right)$
	as their \emph{global distance}, and to $d_{\TV}\left(N_{G,P},N_{H,P}\right)$
	(where $P\in\FinSubsets\left(F_{S}\right)$) as their \emph{$P$-local
		distance}. Thus, $E$ is statistically distinguishable if whenever
	$H$ is $\eps$-far from $\GSOL_{E}\left(n\right)$ in
	the global metric, it is also $\delta$-far from $\GSOL_{E}\left(n\right)$
	in the $P$-local metric, where $\delta$ and $P$ may depend on $\eps$.
\end{rem}

\begin{thm}
	\label{thm:LSMUniversal}Fix a system of relations $E$ over an alphabet $S^\pm$. The following conditions are equivalent:
	\begin{enumerate}
		\item $E$ is testable.
		\item There exist functions $k\colon\RR_{>0}\to\NN$, $P\colon\RR_{>0}\to\FinSubsets\left(F_{S}\right)$
		and $\delta\colon\RR_{>}\to\RR_{>0}$ such that $\eps\mapsto\LSM_{k\left(\eps\right),P\left(\eps\right),\delta\left(\eps\right)}^{E}$
		is a family of testers for $E$. 
		\item $E$ is statistically distinguishable.
	\end{enumerate}
\end{thm}

We prove Theorem \ref{thm:LSMUniversal} in Section \ref{sec:LSM}. Note that, in addition to implying the universality of \texttt{Local Statistics Matcher}, Theorem \ref{thm:LSMUniversal} reduces the question of testability to the notion of statistical distinguishability. The latter is graph theoretic and geometric, rather than algorithmic. Our proofs of Theorems \ref{thm:MainPositive} and \ref{thm:MainNegative} rely on this reduction.

\subsection{Overview}

Section \ref{sec:SurveyAndFurther} contains a survey of previous work, and discusses directions for future research.
Section \ref{sec:GroupTheory} explains the powerful relationship between testability and group theory: each system of relations $E$ gives rise to a group $\Gamma(E)$, and the foundational Proposition \ref{prop:testability-group-property} says that the testability of $E$ depends only on $\Gamma(E)$.
Section \ref{sec:GroupTheory} also reviews the relevant group-theoretic notions in a combinatorial language.

In Section \ref{sec:kazhdan-not-testable} we prove Theorem \ref{thm:MainNegative} by a direct use of expansion in graphs.
In Section \ref{sec:amenable-are-testable} we prove Theorem \ref{thm:MainPositive} using deep results such as the Newman--Sohler Theorem \cite{NewmanSohler2013} (see also \cite{Elek2012}) and the Ornstein--Weiss Theorem \cites{OrnsteinWeiss, ConnesFeldmanWeiss}.
In Section \ref{sec:TestabilityIsAGroupProperty} we prove Proposition \ref{prop:testability-group-property}.

The proofs of Theorems \ref{thm:MainPositive} and \ref{thm:MainNegative} and Proposition \ref{prop:testability-group-property} use the reduction from testability to the geometry of graphs, as afforded by Theorem \ref{thm:LSMUniversal}, which is proved in Section \ref{sec:LSM}.

Appendix \ref{appendix:abels} uses Theorem \ref{thm:MainPositive} to provide an example of a testable instable system of relations.
Appendix \ref{appendix:SL} gives an example of a system satisfying the hypothesis of Theorem \ref{thm:MainNegative}.

\section{A survey of previous work and directions for further research}
\label{sec:SurveyAndFurther}

\subsection{\label{sec:survey-stability}A survey of previous work on stability}

The notion of testability of relations and the computational perspective
on stability are new concepts, introduced in this paper. Nevertheless,
there is a lot of recent research of stability in permutations from
group-theoretic and combinatorial points of view. Most of the existing literature on stability relies on the relation to group theory as explained in Section \ref{sec:GroupTheory}. We formulate these results in the graph-theoretic language developed in Section \ref{sec:GraphView}.

\subsubsection{\label{subsec:old-defs-of-stability}The combinatorial definition of stability}
Stability, as in Definition \ref{def:stable}, has several equivalent definitions (see  \cite[Definition 1]{GlebskyRivera}, \cite[Theorem 4.2]{ArzhantsevaPaunescu}, \cite[Definition 1.1]{Ioana}).
Below we recall one of them and prove the equivalence.
\begin{defn}
 	[Combinatorial definition of stability]\label{def:DefectStability}\cite[Definition 1]{GlebskyRivera}
	Let $E$
	be a system of relations over $S^{\pm}$, and let $G\in \calG_S(n)$.
	The \emph{local defect} of $G$ with respect to $E$
	is 
	\[
	L_{E}\left(G\right)\coloneqq\sum_{w\in R_E}\Pr_{x\sim \U\left([n]\right)}\left[w_Gx \ne x\right].
	\]
	We say that $E$ is \emph{stable} if
	\[
	\inf\left\{ L_{E}\left(G\right)\mid G\in\GSOL_{E}^{\ge\eps}\left(n\right),n\in\NN\right\} >0\quad\forall\eps>0\,\,\text{.}
	\]
\end{defn}
\begin{rem}
We use the notion of local defect in the proof of Proposition \ref{prop:testability-group-property} (see Section \ref{sec:TestabilityIsAGroupProperty}).
\end{rem}

For $G\in\calG_S(n)$, we
have $G\in\GSOL_{E}\left(n\right)$ if and only if
$L_{E}\left(G\right)=0$.
Another way to state Definition \ref{def:DefectStability} is:
The system $E$ is stable
if and only if for every $\eps>0$ there is $\delta>0$ such that
$G\notin\GSOL_{E}^{\geq\eps}\left(n\right)$ whenever
$n\in\NN$, $G\in\calG_S(n)$
and $L_{E}\left(G\right)<\delta$. That is (informally),
$E$ is stable if every graph which approximately satisfies $E$ is close to
a graph which fully satisfies $E$. 

For $G\in\calG_S\left(n\right)$, it is
easy to see that the probability that $\SAS_{1}^{E}$ rejects $G$
is $\frac{1}{\left|E\right|}L_{E}\left(G\right)$,
and that $\SAS_{k}^{E}$ can be implemented by running $\SAS_{1}^{E}$
for $k$ independent iterations and accepting if all iterations accept.
These observations are used in the proof of the following claim.
\begin{claim}
	A system of relations $E$ is stable in the sense of Definition \ref{def:DefectStability}
	if and only if it is stable in the sense of Definition \ref{def:stable}.
\end{claim}

\begin{proof}
	Suppose that $E$ is stable in the sense of Definition \ref{def:DefectStability}.
	For $\eps>0$, write 
	\[
	\delta\left(\eps\right)=\frac{1}{\left|E\right|}\inf\left\{ L_{E}\left(G\right)\mid G\in\GSOL_{E}^{\ge\eps}\left(n\right),n\in\NN\right\} >0
	\]
	and
	\[
	k\left(\eps\right)=\left\lceil \log_{1-\delta\left(\eps\right)}0.01\right\rceil \,\,\text{.}
	\]
	We claim that $\eps\mapsto\SAS_{k\left(\eps\right)}^{E}$ is a family
	of testers for $E$. Let $n\in\NN$ and $G\in\GSOL_{E}^{\ge\eps}\left(n\right)$.
	Then $\SAS_{1}^{E}$ rejects $G$ with probability
	$\frac{1}{\left|E\right|}L_{E}\left(G\right)\geq\delta\left(\eps\right)$.
	Thus, the probability that $\SAS_{k}^{E}$ accepts $G$
	is at most
	\[
	\left(1-\delta\left(\eps\right)\right)^{k}\leq\left(1-\delta\left(\eps\right)\right)^{\log_{1-\delta\left(\eps\right)}0.01}=0.01\,\,\text{.}
	\]
	
	Conversely, suppose that $E$ is stable in the sense of Definition
	\ref{def:stable}, and let $\eps\mapsto\SAS_{k\left(\eps\right)}^{E}$
	be a family of testers for $E$, $k\colon\RR_{>0}\to\NN$. Take $G\in\GSOL_{E}^{\ge\eps}\left(n\right)$.
	Write $p_{1}$ (resp. $p_{k\left(\eps\right)}$) for the probability
	that $\SAS_{1}^{E}$ (resp. $\SAS_{k\left(\eps\right)}^{E}$) rejects
	$G$. On one hand, $p_{1}=\frac{1}{\left|E\right|}L_{E}\left(G\right)$
	and $p_{k\left(\eps\right)}\geq0.99$. On the other hand, $p_{k\left(\eps\right)}\leq k\left(\eps\right)\cdot p_{1}$
	by the union bound, and thus $L_{E}\left(G\right)\geq\frac{0.99\left|E\right|}{k\left(\eps\right)}$.
	Hence
	\[
	\inf\left\{ L_{E}\left(G\right)\mid G\in\GSOL_{E}^{\ge\eps}\left(n\right),n\in\NN\right\} \geq\frac{0.99\left|E\right|}{k\left(\eps\right)}>0\,\,\text{.}
	\]
\end{proof}

\subsubsection{Stability under the hypothesis of Theorem \ref{thm:MainPositive}}
Let $E$ be a system of relations satisfying the hypothesis of Theorem \ref{thm:MainPositive}. That is, $\iota(G)=0$ for every $G\in\GSOL_E$.
Theorem \ref{thm:MainPositive} says that $E$ is testable. However, $E$ is not necessarily stable. In fact, there is a group-theoretic criterion that determines whether a system $E$, satisfying the hypothesis of Theorem \ref{thm:MainPositive}, is stable (see Theorem \cite[Theorem 1.3(ii)]{BLT}\footnote{
Theorem 1.3(ii) of \cite{BLT} characterizes stability among systems $E$ for which the group $\Gamma(E)$ is amenable (see Section \ref{sec:GroupTheory})}). See Section \ref{subsec:testableInstable} and Appendix \ref{appendix:abels} for an explicit example of a stable non-testable system.

\begin{exm}[Baumslag--Solitar relations] \label{example:BS-stability}
	Fix $m,n\in \ZZ$, and consider the system $E_{m,n}=\left\{ \eq{XY^{m}=Y^{n}X}\right\}$, consisting of a single relation. Then $E_{m,n}$ is stable if and only if $\left|m\right|\leq1$ or $\left|n\right|\leq1$. Indeed:
\begin{itemize}
	\item The case $m=n=0$ is clear.
	\item If $m=0$ and $n\neq0$ then $E_{m,n}$ is stable by \cite[Theorem 2]{GlebskyRivera}.
	\item If $\left|m\right|=1$ or $\left|n\right|=1$ then $E_{m,n}$ is
	stable by \cite[Theorem 1.2(ii)]{BLT}.
	\item If $\left|m\right|,\left|n\right|\geq2$ and $\left|m\right|\neq\left|n\right|$
	then $E_{m,n}$ is not stable by \cite[Example 7.3]{ArzhantsevaPaunescu}
	(see also \cite[Theorem 1.3(i)]{BLT}).
	\item If $\left|m\right|,\left|n\right|\geq2$
	and $\left|m\right|=\left|n\right|$ then $E_{m,n}$ is not stable
	by \cite[Corollary B(3)]{Ioana}.
\end{itemize}
	
	As for the testability of $E_{m,n}$, the stable cases are clearly testable. Testability in the case $|m|,|n|\ge 2$ and $|m|\ne |n|$ is an interesting open question (see Section \ref{subsec:characterization}).
	
	In the remaining case, $|m|,|n|\ge 2$ and $|m|=|n|$, it turns out that $E_{m,n}$ is not testable. Indeed, by \cite[Theorem D]{Ioana} there are $\eps_0>0$ and finite $S$-graphs in $\GSOL_{E_{m,n}}^{\geq \eps_0}$ with local statistics approximating a certain\footnote{Namely, $H$ is the Cayley graph of the Baumslag--Solitar group $\BS(m,n)$} infinite vertex-transitive $S$-graph $H$ arbitrarily well. On the other hand, it is well-known that there are also finite $S$-graphs in $\GSOL_{E_{m,n}}$ whose local statistics approximate the same infinite graph $H$ arbitrarily well\footnote{This follows from the group $\BS(m,n)$ being residually finite when $|m|=|n|$ \cite[Proposition 2.6(3)]{Levitt05}}.
Thus $E_{m,n}$ is not statistically distinguishable\footnote{This argument is similar to the proof of Theorem \ref{thm:kazhdan-not-distinguishable} using Lemma \ref{lem:action-size-changer} in Section \ref{sec:kazhdan-not-testable}}, and hence it is not testable by Theorem \ref{thm:LSMUniversal}.
In fact, $E_{m,n}$ is not even flexibly testable (see Section \ref{subsec:survery-flexible-stability}) because the argument above remains true even with a flexible definition of $\GSOL_{E_{m,n}}^{\geq \eps_0}$ by \cite[Theorem D]{Ioana}.
	
	By the above, the system of relations $E_{2,2}=\{\mathsf{XY^2=Y^2X}\}$, mentioned in the beginning of Section \ref{sec:intro}, is not flexibly testable. By a similar argument, also based on \cite[Theorem D]{Ioana} and Theorem \ref{thm:LSMUniversal}, the system $\{\mathsf{XZ=ZX}, \mathsf{YZ=ZY}\}$ is also not flexibly testable. 
\end{exm}

\subsubsection{\label{subsec:survery-flexible-stability}Flexible stability (and testability)}

A slightly weaker form of stability, called \emph{flexible stability}
\cite[Section 4.4]{BeckerLubotzky}, has led to fascinating research.
Here we define this notion in the language of the present paper. We also introduce a new, more general, notion of \emph{flexible testability}.

For $G\in \calG_S(n)$
and $G'\in \calG_S(N)$, $N\geq n$, let
\[
d^{\ham}\left(G,G'\right)=d^{\ham}\left(G',G\right)=\sum_{s\in S}\frac{\left|\left\{ x\in\left[n\right]\mid s_G x\neq s_{G'} x\right\} \right|+\left(N-n\right)}{N} \,\,\text{.}
\]
Then $d^{\ham}$ is a metric on the disjoint union $\coprod_{n\in\NN}\calG_S(n)$
\cite[Lemma A.1]{BeckerChapman}, extending (\ref{eq:graphDistance}).
Note that $d^{\ham}\left(G,G'\right)\in\left[0,|S|\right]$,
and that if $d^{\ham}\left(G,G'\right)$ is close to $0$ then
$\frac{n}{N}$ is close to $1$ (indeed, $1-\frac{d^{\ham}\left(G,G'\right)}{|S|}\leq\frac{n}{N}\leq1$).
For a system of relations $E$ over $S^{\pm}$, let
\[
\GSOL_{E}^{\geq\eps,\flex}\left(n\right)=\left\{ G\in\calG_S(n) \mid d^{\ham}\left(G,G'\right)\geq\eps\quad\forall G'\in\coprod_{m\in\NN}\GSOL_{E}\left(m\right)\right\} \,\,\text{.}
\]

\begin{defn}[Flexibly-testable system of relations]
An algorithm $\calM$ that takes $n\in\NN$ and an $S$-graph $G\in \calG_S(n)$
as input is an \emph{$\left(\eps,q\right)$-flexible-tester} for $E$
if it satisfies the following conditions:
\begin{itemize}
	\item \textbf{Completeness:} if $G\in\GSOL_{E}\left(n\right)$,
	the algorithm accepts with probability at least $0.99$.
	\item \textbf{$\eps$-soundness:} if $G\in\GSOL_{E}^{\geq\eps,\flex}\left(n\right)$,
	the algorithm rejects with probability at least $0.99$.
	\item \textbf{Query efficiency:}  the algorithm is only allowed to make $q$
	queries, where each query is of the form ``what is $s_Gx$?'', $x\in [n]$, $s\in S^\pm$. 
\end{itemize}
If for every $\eps>0$ there are $q=q\left(\eps\right)\in\NN$ and
an $\left(\eps,q\right)$-flexible-tester $\calM_{\eps}$ for $E$
then we say that $E$ is \emph{$q\left(\eps\right)$-flexibly-testable}
(or just \emph{flexibly testable}, or \emph{testable under the flexible
	model}) and that $\eps\mapsto\calM_{\eps}$ is a family of flexible
testers for $E$.
\end{defn}
Clearly, every testable system of relations is also flexibly testable. As shown in Example \ref{example:BS-stability}, not every system of relations is flexibly testable.

\begin{defn}[Flexibly-stable system of relations]
We say that $E$ is \emph{flexibly stable} if there is a function
$k\colon\RR_{>0}\to\NN$ such that $\eps\mapsto\SAS_{k\left(\eps\right)}^{E}$
is a family of flexible testers for $E$.
\end{defn}

Clearly, if $E$ is stable then it is flexibly stable. The converse
is still open:
\begin{problem}
	\label{prob:flex-stable-not-stable}Is there a flexibly-stable system
	of relations that is not stable?
\end{problem}
An open problem similar to Problem \ref{prob:flex-stable-not-stable}, in the context of testability, is stated and discussed in Section \ref{subsec:OpenFlexibleTestability}.

\begin{rem}
\label{rem:sofic-nonRF-is-not-flex-stable}
We note that the sufficient condition for instability in \cite[Theorem 1.3(i)]{BLT} is in fact sufficient for flexible instability
(this follows directly from the proof given in \cite{BLT} which was written before the notion of flexible stability was defined formally).
\end{rem}

The following deep result provides examples of flexibly-stable systems, none of which are known to be stable.
\begin{exm}
	\label{example:surface-groups}\cite[Theorem 1.1]{LLM} For $g\geq2$,
	the system $E_{g}=\left\{ \left[s_{1},s_{2}\right]\left[s_{3},s_{4}\right]\cdots\left[s_{2g-1},s_{2g}\right]=1\right\} $,
	consisting of a single relation, is flexibly stable. Here $\left[a,b\right]\coloneqq aba^{-1}b^{-1}$.
	
	More precisely, $\eps\mapsto\SAS_{k\left(\eps\right)}^{E_{g}}$ is
	a family of flexible testers for $E_{g}$ for $k\left(\eps\right)=O\left(\frac{1}{\eps}\log\frac{1}{\eps}\right)$.
\end{exm}

We refer the reader to \cite[Section 4.4]{BeckerLubotzky} and \cite[Section 3]{Ioana}
for further information regarding flexible stability.

%By the proofs of Theorem \ref{thm:negative-cosofic} and Proposition \ref{prop:negative-sofic-no-rf},
%their conclusions can be strengthened to prove instability under the
%flexible model. However, it is not known whether Theorem \ref{thm:intro-becker-lubotzky}
%can be strengthened to show instability under the flexible model.
%More recently, many more flexible instability results were proved
%in \cite{Ioana} (see \cite[Corollary B]{Ioana} for concrete examples).

\subsubsection{\label{subsec:survery-query-complexity}Query efficiency}

It is interesting to study the query complexity of a stable system
of relations $E$ in terms of $\eps$. That is, how small can $k\left(\eps\right)$
be such that $\eps\mapsto\SAS_{k\left(\eps\right)}^{E}$ is a family
of testers for $E$.

When $|S|=d$, $d\geq 1$, and
$$E = \left\{ ss'=s's\mid s,s'\in S\right\},$$ the main result of \cite{BeckerMosheiff} (see Example \ref{example:XY=YX})
gives an upper bound on $k\left(\eps\right)$ which is polynomial
in $\frac{1}{\eps}$.
On the other hand, \cite[Theorem 1.17]{BeckerMosheiff} shows that $k\left(\eps\right)\geq\Omega\left(\left(\frac{1}{\eps}\right)^{d}\right)$
if $d\geq2$.
As for a lower bound on $k\left(\eps\right)$ under the flexible model,
we know that $k\left(\eps\right)\geq\Omega\left(\left(\frac{1}{\eps}\right)^{2}\right)$
if $d\geq2$ (stated without
proof in \cite[Section 6]{BeckerMosheiff}). In particular, the bound
$k\left(\eps\right)\leq O\left(\frac{1}{\eps}\log\frac{1}{\eps}\right)$
in Example \ref{example:surface-groups} does not hold when $g=1$.

\subsubsection{Stability of an infinite system of relations}
Our definition of a system of relations $E$ requires $E$ to be finite. It is possible to define stability even when $E$ is infinite: $E$ is stable if $\eps\mapsto\SAS_{k\left(\eps\right)}^{E'\left(\eps\right)}$
is a family of testers for $E$ for some $k\colon\RR_{>0}\to\NN$
and $E'\colon\RR_{>0}\to\FinSubsets\left(E\right)$.
Examples of stable infinite systems, which
are not equivalent to any finite system, are given in \cite[Theorem 1.7]{Zheng},
\cite{LevitLubotzky1} and \cite{LevitLubotzky2}, using \cite[Theorem 1.3(ii)]{BLT}.

\subsubsection{Matrices instead of permutations}

The general question of whether approximate solutions are close to
solutions, in various contexts, was suggested by Ulam \cite[Chapter VI]{Ulam}.
Definition \ref{def:DefectStability} puts the notion of stability
in permutations into this framework, and follows the earlier notion
of stability in matrices. The classical question in the latter context
is whether for all $n\times n$ matrices $A$ and $B$ such that $\|AB-BA\|<\delta$
there are $n\times n$ matrices $A'$ and $B'$ such that $A'B'=B'A'$
and $\|A-A'\|+\|B-B'\|<\eps$, where $\eps=\eps\left(\delta\right)\overset{\delta\to0}{\longrightarrow}0$.
The answer depends on the type of matrices considered (self-adjoint,
unitary, etc.) and the matrix norm used. See the introduction of \cite{ArzhantsevaPaunescu}
for a short survey of stability of the relation $\eq{XY=YX}$ in matrices,
and \cite{BeckerLubotzky,Dadarlat,DGLT,ESS,EndersShulman,HadwinShulman,LubotzkyOppenheim,MoralesGlebsky,OppenheimBanach}
for newer works that consider more general relations.

\subsection{Directions for further research}

This work originated from the observation that stability, 
which has been studied extensively in the context of group theory,
admits a natural equivalent definition in the language of
property testing (Definition \ref{def:stable}).
This gave rise to the more general notion of testability of relations between permutations
(Definition \ref{def:testable-equations}).

We hope
that the present paper will stimulate more work on testability of
relations. While Theorems \ref{thm:MainPositive} and \ref{thm:MainNegative}
generate large families
of testable and non-testable relations, many fundamental questions
remain open. Below we present suggestions for further research (see Section \ref{sec:GroupTheory} for several additional open problems).

\subsubsection{A complete characterization of the testable systems of relations}\label{subsec:characterization}
The most direct goal of our line of research is to classify every system of relations as either testable or non-testable. We do not know whether this problem is computable, namely, it may be undecidable to determine, given a system of relations $E$ as input, whether $E$ is testable. Regardless of this potential obstacle, we seek a natural characterization for testability, and note that such a characterization need not necessarily be computable.
\begin{problem}{\label{prob:characterization}}
	Obtain a natural geometric characterization for testable systems of relations (i.e., a characterization in terms of properties of the graphs $\GSOL_E$)
\end{problem}

The present work advances us towards this goal via Theorems \ref{thm:MainPositive} and \ref{thm:MainNegative}, which yield large natural sets of positive and negative examples, and hint at a complete characterization in terms of the geometry of the graph family $\GSOL_E$.

A natural approach to Problem \ref{prob:characterization} is concentrating on systems of relations that have evaded our methods so far. Perhaps the most prominent candidates are the Baumslag-Solitar systems of relations (see Example \ref{example:BS-stability}):

\begin{problem}\label{prob:BS}
	Let $m,n\in\ZZ$. Is $E_{m,n}=\left\{ XY^{m}=Y^{n}X\right\} $ testable?
	If not, is it flexibly testable?
\end{problem}

The only open cases in Problem \ref{prob:BS} are when $|m|$ and $|n|$ are distinct and larger than $1$ (see Example \ref{example:BS-stability}). In all of these open cases, $E_{m,n}$ is not flexibly stable (see Example \ref{example:BS-stability} and
Remark \ref{rem:sofic-nonRF-is-not-flex-stable}).

\subsubsection{Query efficiency}\label{subsec:queryComplexity}

A testable system of relations $E$, as in Definition
\ref{def:testable-equations}, is a system
that admits an $\left(\eps,q\right)$-tester $\calM_{\eps}$ for all
$\eps>0$, where $q$ depends only on $\eps$ and $E$ (but not on
the input size $n$). From a computational perspective, it is desirable
to derive explicit good upper bounds on $q$ in terms of $\eps$. Examples \ref{example:XY=YX} and  \ref{example:surface-groups}
establish results
of this kind in the context of stability, as discussed in Section
\ref{subsec:survery-query-complexity}. It is interesting to study
the quantitative aspect in the broader context of testability. One natural question is whether Theorem \ref{thm:MainPositive} can be extended to a quantitative statement.

\begin{problem}\label{prob:MainPositiveTheoremQuantitative}
	Let $E$ satisfy the hypothesis of Theorem \ref{thm:MainPositive}. What is the minimum query complexity of a tester for $E$?
\end{problem}
See Section \ref{subsec:QuantitativeGroups} for a conjecture about Problem \ref{prob:MainPositiveTheoremQuantitative}.

In line with Examples \ref{example:XY=YX} and \ref{example:surface-groups},
it is interesting to find additional families of $\poly\left(1/\eps\right)$-testable
and $\poly\left(1/\eps\right)$-stable systems of relations.
\begin{problem}\label{prob:PolyTestability}
	Which systems of relations are $O\left(\poly\left(\frac 1\eps\right)\right)$-testable?
\end{problem}

The more refined question of the degree of polynomial stability or testability
is also interesting. The following is still open.
\begin{problem}
	What is the minimal\footnote{More precisely, we ask about the infimum of the set of all $D$ such that $E$ is $O\left(\left(\frac{1}{\eps}\right)^{D}\right)$-stable (we do not know if the minimum is attained).} $D$ such that $E=\left\{ \eq{XY=YX}\right\} $
	is $O\left(\left(\frac{1}{\eps}\right)^{D}\right)$-stable?
\end{problem}

As discussed in Section \ref{subsec:survery-query-complexity}, the
minimal $D$ is at least $2$. One can derive an explicit upper bound
on $D$ by following the proof of \cite[Theorem 1.16]{BeckerMosheiff}.

Another question is whether there are systems of relations that are stable, but where an algorithm different than \texttt{Sample and Substitute} would yield better query complexity. For example:
\begin{problem}
	Let $D$ be minimal such that $E = \{\eq{XY=YX}\}$ is  $O\left(\left(\frac{1}{\eps}\right)^{D}\right)$-stable. Is there some $D' < D$? such that $E$  is $O\left(\left(\frac{1}{\eps}\right)^{D'}\right)$-testable?
\end{problem}

In the context of lower bounds, the following is also still open.
\begin{problem}
	Is there a system of relations $E$ which is testable, but not $O\left(\left(\frac{1}{\eps}\right)^{D}\right)$-testable
	for any $D$? Similarly, is there $E$ which is stable but not $O\left(\left(\frac{1}{\eps}\right)^{D}\right)$-stable
	for any $D$?
\end{problem}

\subsubsection{Allowing the query complexity to depend on $n$}\label{subsec:QueriesDependOnN}

In this work, as in, e.g., \cite{NewmanSohler2013,AFNS09,AS08}, the
query complexity of a tester must not depend on the input size. In
general, however, testers whose query complexity depends on the input
size in a sublinear fashion are widely studied (see, e.g., \cite{Goldreich}).
Allowing such testers raises many interesting questions, such as the
following.
\begin{problem}
	Consider the system of relations $E=E_3$, discussed in Section \ref{subsec:non-testable} and defined in Appendix \ref{appendix:SL}. This system satisfies the hypothesis of Theorem \ref{thm:MainNegative}
	and thus it is not testable.
	Is there a probabilistic algorithm
	that distinguishes between elements of $\GSOL_{E_{3}}\left(n\right)$
	and $\GSOL_{E_{3}}^{\ge\eps}\left(n\right)$, in the sense of Definition
	\ref{def:testable-equations}, by making only $q=q\left(\eps,n\right)$
	queries? Here $q$ must be sublinear in $n$ for a result to be interesting,
	and the question is how small can $q$ be in terms of $n$ (the same question is open for every other system $E$ that satisfies the hypothesis of Theorem \ref{thm:MainNegative}).
\end{problem}

\subsubsection{Running time efficiency\label{sec:Runtime}}

Another aspect that we do not pursue in this work is the time complexity
of our testers. In the case of the \textsf{Sample and Substitute}
algorithm, the time complexity is easily seen to be identical to the
query complexity, so not much remains to be optimized. 

On the other hand, as explained in Remark \ref{rem:RunningTimeLSM},
a straightforward implementation of \textsf{Local Statistics Matcher}
requires time exponential in $n$ to determine whether Condition (\ref{eq:LSMCondition})
holds. However, when the graphs in $\GSOL_E$ are ``well behaved'', the set of distributions $\left\{ N_{G,P}\mid G\in\GSOL_{E}\left(n\right)\right\}$
may be structured enough to allow this condition to be checked much
more efficiently, perhaps even in time independent of $n$. 
\begin{problem}
	\label{prob:time-efficient-LSM}For which testable systems of relations
	$E$, having $\eps\mapsto\LSM_{k\left(\eps\right),P\left(\eps\right),\delta\left(\eps\right)}^{E}$
	as a family of testers, does there exist an implementation of $\LSM_{k\left(\eps\right),P\left(\eps\right),\delta\left(\eps\right)}^{E}$
	that is time-efficient in terms on $n$ (or better, has running time
	independent of $n$)?
\end{problem}

Additionally, despite the universality of \textsf{Local Statistics
	Matcher} (see Theorem \ref{thm:LSMUniversal}), it may also be worthwhile
to seek other, more time-efficient, testing algorithms.

In the context of Problem \ref{prob:time-efficient-LSM}, it is worth
noting that for a $k\left(\eps\right)$-stable system $E$, $\eps\mapsto\LSM_{k\left(\eps\right),R_E,0}^{E}$
is a family of testers
(here we plug $R_E$ itself into $\LSM$ in the role of the word-set parameter $P$).
Calculating the left-hand side of (\ref{eq:LSMCondition})
in $\LSM_{k\left(\eps\right),R_E,0}^{E}$ is trivial because $N_{H,R_E}$
is the Dirac distribution concentrated on $R_E$ for each $H\in\GSOL_{E}\left(n\right)$,
$n\in\NN$. Thus, in this case the running time discussed in Problem
\ref{prob:time-efficient-LSM} does not depend on $n$. Problem \ref{prob:time-efficient-LSM}
asks if there are also testable instable systems with this property,
or at least with running time depending weakly on $n$.

\subsubsection{Uniform testability}
\label{sec:unfirom-testability}

Following \cite[Definition 2.4]{NewmanSohler2013}, our notion of
testability from Definition \ref{def:testable-equations} is \emph{nonuniform
	in $\eps$, }in the sense that it requires a family of testers, one
for each $\eps$. In the more standard notion of testability, which
we refer to as \emph{uniform testability} (see \cite[Definition 1.6]{Goldreich}),
there is just a single tester, and it takes $\epsilon$ as input\emph{.
}To upgrade our testability to uniform testability, we need the function
$\epsilon\mapsto\calM_{\eps}$, that chooses the tester according
to $\eps$, to be computable.

We would like to understand which systems of relations are uniformly
testable, and whether there is a gap between the uniform and nonuniform
notions. In particular,
\begin{problem}
\label{prob:UniformMainPositive}
	Can Theorem \ref{thm:MainPositive} be strengthened
	to prove uniform testability for some systems of relations?
\end{problem}
This problem is further discussed in Section \ref{subsec:UniformAndGroups}.
\subsubsection{POT-testability and fixed-radius sampling}
Proximity Oblivious Testability (POT) is a stronger form of testability.
A system of relations $E$ is \emph{POT-testable} if it can be tested
by repeating the same random boolean subroutine $A$ for $k\left(\eps\right)$
independent iterations, and accepting if the number of iterations
in which $A$ returns \textsf{true} is above a certain threshold.
Notably, $A$ itself must not depend on $\eps$. See \cite[Definition 1.7]{Goldreich}
for a precise definition of POT testability in a general context.
A stable system of relations is POT-testable because $\SAS_{k}^{E}$
can be implemented by running $\SAS_{1}^{E}$ for $k$ independent
iterations, and accepting if all iterations accept.

A related notion is that of \emph{fixed-radius sampling} (FRS). We
say that a system of relations $E$ is \emph{FRS-testable} if it admits
a family of testers $\eps\mapsto\calM_{\eps}$ such that $\calM_{\eps}$,
in its run on $G\in\calG_S(n)$,
repeats the same subroutine $A$ for $k\left(\eps\right)$ independent
iterations and accepts or rejects according to the results of these
iterations. Again, $A$ must be independent of $\eps$, but unlike
the case of POT-testability, $A$ does not have to be a boolean subroutine,
and $\calM_{\eps}$ does not have to decide on its return value according
to a threshold. The term \emph{fixed radius} comes from the fact that
each iteration of $A$ on its input $G$ may only
examine a constant (i.e., independent of $\eps$) number of vertices
of the graph $G$.
Without loss of generality, we may assume that it examines a constant
number of balls in this graph of some fixed radius $r$. Thus, the
tester $\cM_{\eps}$ has information about $k\left(\eps\right)$ balls
of radius $r$ in the graph $G$. 

Clearly, every POT-testable system of relations is also FRS-testable.
In addition, if $E$ is testable by the family $\eps\mapsto\LSM_{k\left(\eps\right),P,\delta\left(\eps\right)}^{E}$
where $P$ is constant, $k\colon\RR_{>0}\to\NN$ and  $\delta\colon\RR_{>0}\to\RR_{>0}$,
then $E$ is FRS-testable.

We have thus established the hierarchy:
\begin{equation}
	\text{stability}\implies\text{POT-testability}\implies\text{FRS-testability}\implies\text{testability}\,\,.\label{eq:Hierarchy}
\end{equation}
As discussed earlier, there are systems of relations that are testable
but are not stable (see Section \ref{subsec:testableInstable}). Thus, the class of stable
systems of relations is strictly contained in the class of testable systems
of relations.
\begin{problem}
	Determine which of the inclusions arising from (\ref{eq:Hierarchy}),
	if any, are equalities.
\end{problem}

\subsubsection{The flexible model}\label{subsec:OpenFlexibleTestability}

Consider the flexible model of stability and testability as in Section
\ref{subsec:survery-flexible-stability}.
In line with Problem \ref{prob:flex-stable-not-stable}, we ask:
\begin{problem}
\label{prob:flex-testable-but-not-testable}
	Are there flexibly-testable systems of relations which are not testable?
\end{problem}

We note that Theorem \ref{thm:MainNegative} provides 
non-testable (and thus also instable) systems in the strict model, as in Definition
\ref{def:testable-equations}, but the proof of Theorem \ref{thm:MainNegative} does not rule out flexible testability. Finding a flexibly-testable system satisfying the hypothesis of Theorem \ref{thm:MainNegative} will solve Problem \ref{prob:flex-testable-but-not-testable}.

Another possible source for a positive answer to Problem \ref{prob:flex-testable-but-not-testable} is Example \ref{example:surface-groups}, which provides systems which are flexibly
stable (and hence are flexibly testable). It is not known whether these systems are testable.

\section{The connection to group theory}\label{sec:GroupTheory}
Most of the previous study of stability of relations between permutations has been done through a connection between stability and group theory. This section reviews this connection, introduces the connection between testability and group theory, and gives a brief introduction to the relevant group-theoretic notions.

Let $E$ be a system of relations over $S^\pm$.
Then $E$ gives rise to a finitely-presented group $\Gamma(E)$ by means of a group presentation.
For example, the system $E=\left\{ \eq{XY=YX}\right\} $ over $S=\left\{ \eq X,\eq Y\right\}$
gives rise to the group $\Gamma\left(E\right)=\langle\eq X,\eq Y\mid\eq{XY=YX}\rangle\cong\ZZ^{2}$.
More generally, we define $\Gamma\left(E\right)\coloneqq\langle S\mid E\rangle$
(this is the group generated by $S$ subject to the relations $E$; see Appendix \ref{app:FreeGroupAndPresentations} for a reminder
about group presentations). The association $E\mapsto\Gamma(E)$ is many-to-one.
That is, different systems of relations may give rise to isomorphic groups.
The starting point of the group-theoretic approach to stability is
the following observation from \cite{ArzhantsevaPaunescu} (see also
\cite[Proposition 1.11]{BeckerMosheiff}).
\begin{prop}
	\label{prop:stability-group-property}\cite[Section 3]{ArzhantsevaPaunescu}
	Let $E_{1}$ and $E_{2}$ be systems of relations (over possibly different
	sets of variables) such that the groups $\Gamma\left(E_{1}\right)$
	and $\Gamma\left(E_{2}\right)$ are isomorphic. Then $E_{1}$ is stable
	if and only if $E_{2}$ is stable.
\end{prop}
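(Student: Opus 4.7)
The plan is to pass to an approximation-theoretic reformulation of stability and then transport this property between presentations by word substitution, using the given isomorphism $\phi:\Gamma(E_1)\to\Gamma(E_2)$. For $\overline{\sigma}\in(\Sym(n))^{d}$, set $\defect_{E}(\overline{\sigma})=\sum_{i=1}^{r} d^{\ham}(w_{i,1}(\overline{\sigma}),w_{i,2}(\overline{\sigma}))$. A direct calculation, using that $\Pr[\SAS_k^E\text{ accepts}\,]=(1-\defect_E(\overline{\sigma})/r)^k$, shows that $E$ is stable in the sense of Definition \ref{def:stable} if and only if for every $\epsilon>0$ there exists $\delta>0$ such that $\defect_E(\overline{\sigma})<\delta$ implies $\overline{\sigma}\notin\SOL_E^{\geq\epsilon}(n)$, uniformly in $n$. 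This is the working characterization of stability I would use.

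Given the isomorphism $\phi$, I would choose for each $s_i\in S_1$ a word $u_i$ over $S_2^{\pm}$ representing $\phi(s_i)$ in $\Gamma(E_2)$, and for each $t_j\in S_2$ a word $v_j$ over $S_1^{\pm}$ representing $\phi^{-1}(t_j)$ in $\Gamma(E_1)$. These induce substitution maps $\Phi:(\Sym(n))^{d_1}\to(\Sym(n))^{d_2}$, $\overline{\sigma}\mapsto(v_j(\overline{\sigma}))_j$, and $\Psi:(\Sym(n))^{d_2}\to(\Sym(n))^{d_1}$, $\overline{\tau}\mapsto(u_i(\overline{\tau}))_i$, both Lipschitz in the global metric with constants depending only on $\max_j|v_j|$ and $\max_i|u_i|$ (independent of $n$). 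By construction $\Phi$ and $\Psi$ restrict to mutually inverse bijections $\SOL_{E_1}(n)\leftrightarrow\SOL_{E_2}(n)$, since these restrictions correspond to pre-composing homomorphisms $\Gamma(E_t)\to\Sym(n)$ with $\phi^{\mp 1}$.

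The main technical input is the standard filling-norm lemma: any word $w\in F_{S_1}$ that is trivial in $\Gamma(E_1)$ can be written as a product of finitely many conjugates of the relators $w_{i,1}w_{i,2}^{-1}$ of $E_1$, so $d^{\ham}(w(\overline{\sigma}),\id)\leq K_w\cdot\defect_{E_1}(\overline{\sigma})$ for a constant $K_w$ independent of $n$. Applying this to the words $v_{j_1}v_{j_2}^{-1}$ arising from each relation $t_{j_1}=t_{j_2}$ of $E_2$ gives $\defect_{E_2}(\Phi(\overline{\sigma}))\leq C_1\,\defect_{E_1}(\overline{\sigma})$. Applying it to the words $u_i(v_1,\dots,v_{d_2})s_i^{-1}$, which are trivial in $\Gamma(E_1)$ because $\phi^{-1}\circ\phi=\id$, gives $\sum_i d^{\ham}(\sigma_i,\Psi(\Phi(\overline{\sigma}))_i)\leq C_2\,\defect_{E_1}(\overline{\sigma})$.

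To finish, assume $E_2$ is stable. Given $\epsilon>0$, first pick $\epsilon'>0$ so that the Lipschitz map $\Psi$ sends $\epsilon'$-balls into $(\epsilon/2)$-balls, then apply stability of $E_2$ to get $\delta'>0$ with $\defect_{E_2}<\delta'$ forcing global distance less than $\epsilon'$ from $\SOL_{E_2}(n)$, and finally choose $\delta>0$ small enough that $C_1\delta<\delta'$ and $C_2\delta<\epsilon/2$. Then $\defect_{E_1}(\overline{\sigma})<\delta$ produces $\overline{\tau}'\in\SOL_{E_2}(n)$ within $\epsilon'$ of $\Phi(\overline{\sigma})$, whence $\overline{\tau}=\Psi(\overline{\tau}')\in\SOL_{E_1}(n)$ lies within $\epsilon$ of $\overline{\sigma}$, proving that $E_1$ is stable. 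The converse implication is symmetric. I expect the main obstacle to be the filling-norm lemma, whose proof is nevertheless a routine triangle-inequality argument applied to an explicit expression of $w$ as a product of conjugated relators.
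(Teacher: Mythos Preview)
Your argument is correct, and it is essentially the direct word-substitution proof from \cite{ArzhantsevaPaunescu} (and mirrors, in the stability setting, what the paper does in Section~\ref{sec:TestabilityIsAGroupProperty} for testability via the maps $\lambda_1^*,\lambda_2^*$). The paper, however, proves Proposition~\ref{prop:stability-group-property} along a different line: it introduces an \emph{intrinsic} definition of a stable group via asymptotic homomorphisms (Definition~\ref{def:approximate-hom}), cites Fact~\ref{fact:stability-SAS-is-like-approximate-homomorphisms} (proved in \cite{Ioana}) that $E$ is stable iff $\Gamma(E)$ is stable in that intrinsic sense, and then observes that the intrinsic definition is manifestly presentation-independent. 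Your route has the advantage of being self-contained and quantitative: the constants $C_1,C_2$ and the Lipschitz constant of $\Psi$ depend only on finitely many area bounds in the two presentations, so you recover at once the refinement (mentioned later in the paper, citing \cite[Proposition~1.11]{BeckerMosheiff}) that $k(\eps)$-stability of $E_2$ implies $O_{E_1,E_2}(k(\Omega(\eps)))$-stability of $E_1$. The paper's route is shorter and more conceptual, but it externalizes the work into Fact~\ref{fact:stability-SAS-is-like-approximate-homomorphisms}. One notational quibble: your phrase ``the words $v_{j_1}v_{j_2}^{-1}$ arising from each relation $t_{j_1}=t_{j_2}$ of $E_2$'' should read ``the words obtained from each relator $w_{i,1}w_{i,2}^{-1}$ of $E_2$ by substituting $t_j\mapsto v_j$''; the relations of $E_2$ are not single-letter identities.
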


In Section \ref{sec:TestabilityIsAGroupProperty} we lay the foundations
for the study of testability via group theory by proving the following
analogue of Proposition \ref{prop:stability-group-property}.
\begin{prop}
	\label{prop:testability-group-property}
	Let $E_{1}$ and $E_{2}$ be systems of relations (over possibly different
	sets of variables) such that the groups $\Gamma\left(E_{1}\right)$
	and $\Gamma\left(E_{2}\right)$ are isomorphic.
	Then $E_{1}$ is testable if and only if $E_{2}$ is testable.
\end{prop}

Propositions \ref{prop:stability-group-property} and \ref{prop:testability-group-property} suggest that one can study the stability and testability of the system $E$ by studying the group $\Gamma(E)$.
The following theorem is an example of this strategy.
\begin{thm}
	\label{thm:intro-abelian}\cite{ArzhantsevaPaunescu} If the group
	$\Gamma\left(E\right)$ is abelian, then $E$ is stable.
	
	Furthermore \cite[Theorem 1.16]{BeckerMosheiff}, in this case $\eps\mapsto\SAS_{k\left(\eps\right)}^{E}$
	is a family of testers for $E$, where $k\left(\eps\right)\leq C\cdot\left(\frac{1}{\eps}\right)^{D}$.
	Here $C$ depends on $E$, and $D$ depends only on the isomorphism
	class of the group $\Gamma\left(E\right)$.
\end{thm}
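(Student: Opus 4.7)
The plan is to exploit the group-theoretic reduction provided by Proposition \ref{prop:stability-group-property}. Since $\Gamma(E)$ is abelian and finitely generated (as $S$ is finite), the structure theorem gives $\Gamma(E)\cong\ZZ^a\oplus\bigoplus_{i=1}^{b}\ZZ/n_i\ZZ$ for some integers $a,b\ge 0$ and $n_1,\dotsc,n_b\ge 2$. By Proposition \ref{prop:stability-group-property} it suffices to prove the theorem for a canonical presentation $E_0$ of this group: take generators $s_1,\dotsc,s_{a+b}$ with relations $\{s_is_j=s_js_i : i<j\}\cup\{s_{a+i}^{n_i}=1 : 1\le i\le b\}$. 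This reduction absorbs the dependence on $E$ into the constant $C$, while leaving the exponent $D$ free to depend only on the invariants $(a,b,n_1,\dotsc,n_b)$ of $\Gamma(E)$.

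Given $\overline{\sigma}\in(\Sym(n))^{a+b}$ with small \textsf{Sample and Substitute} defect relative to $E_0$, this unpacks into two types of near-relations: near-commutation $\sigma_i\sigma_j\approx\sigma_j\sigma_i$ and near-torsion $\sigma_{a+i}^{n_i}\approx\id$. The first step is to apply the stability of the commutation system, i.e.\ the Arzhantseva--P{\u a}unescu theorem, to replace $\overline{\sigma}$ by a tuple $\overline{\sigma}'$ of pairwise commuting permutations at $d^{\ham}$-distance controlled by the commutation defect (polynomially, in the refined version). Once commutation is restored, the orbit structure of $\langle\sigma_1',\dotsc,\sigma_{a+b}'\rangle$ on $[n]$ decomposes $[n]$ into a disjoint union of transitive $\ZZ^{a+b}$-sets, i.e., quotients $\ZZ^{a+b}/H$ by finite-index subgroups $H$. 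This rigid combinatorial handle is what makes the next step possible.

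Next, enforce the torsion relations orbit by orbit: on each $\langle\overline{\sigma}'\rangle$-orbit modelled by $\ZZ^{a+b}/H$, we perturb $\sigma_{a+i}'$ (while preserving commutation with the other $\sigma_j'$) so that it acts as an exact $n_i$-th root of the identity, by enlarging $H$ to contain $n_ie_{a+i}$. The near-torsion hypothesis, transferred from $\overline{\sigma}$ to $\overline{\sigma}'$ using Step~2 with an additive error, bounds the fraction of orbits for which this enlargement is nontrivial, and a direct cycle-counting estimate controls the cost. Assembling these orbit-local corrections produces a solution $\overline{\tau}\in\SOL_{E_0}(n)$ close to $\overline{\sigma}$ and establishes qualitative stability.

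The main obstacle is obtaining the polynomial quantitative bound. The qualitative proof of \cite{ArzhantsevaPaunescu} leaves the rate implicit, and naively extracting a rate from that argument does not yield polynomial dependence. The refinement of \cite{BeckerMosheiff} calls for a tailored, constructive treatment of the commutation-stabilization step so that the exponent $D$ depends only on $\Gamma(E)$. A natural route passes through the equivalence of testability and statistical distinguishability (Theorem \ref{thm:LSMUniversal}): for abelian target groups one can control the local statistics $N_{\overline{\sigma},P}$ of a near-solution by character-theoretic arguments on $\Gamma(E)$, and the polynomial bound emerges from the abelian spectral structure. The torsion-correction step, being local on orbits, is elementary and can be made polynomial by routine estimates; the burden of the quantitative theorem rests squarely on the commutation step.
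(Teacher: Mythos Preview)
The paper does not prove Theorem~\ref{thm:intro-abelian}; it is stated as a citation of the external results \cite{ArzhantsevaPaunescu} and \cite{BeckerMosheiff}, so there is no in-paper proof to compare against. Your qualitative outline (reduce via Proposition~\ref{prop:stability-group-property} to the canonical presentation of $\ZZ^a\oplus\bigoplus\ZZ/n_i\ZZ$, first repair commutation using the Arzhantseva--P\u{a}unescu result for $E_{\comm}^d$, then fix the torsion relations orbit-by-orbit) is a reasonable sketch and is broadly in the spirit of how those references proceed.

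The genuine gap is in the quantitative part. You propose to obtain the polynomial bound by passing through Theorem~\ref{thm:LSMUniversal} and statistical distinguishability, with ``character-theoretic'' control of the local statistics. This route cannot yield a polynomial $k(\eps)$: even if you established $\left(P(\eps),\delta(\eps)\right)$-statistical-distinguishability with $\delta(\eps)$ polynomial in $\eps$, the conversion back to a tester in Proposition~\ref{prop:LSMPositive} costs a factor $2^{|P(\eps)|}$ in $k(\eps)$, and any $P(\eps)$ rich enough to see the abelian structure has $|P(\eps)|$ growing with $1/\eps$. More to the point, Theorem~\ref{thm:LSMUniversal} produces \textsf{LSM} testers, not \textsf{SAS} testers, so it says nothing about stability rates. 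The actual argument in \cite{BeckerMosheiff} is a direct, constructive combinatorial correction procedure (iteratively rounding the near-commuting tuple to an exact solution with explicit control at each step), and the exponent $D$ comes out of that iteration, not from any local-statistics machinery. Your sketch does not supply such a mechanism, and the detour you suggest would degrade the bound rather than establish it.
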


Theorem \ref{thm:intro-abelian} applies to the system of relations
\begin{equation}
	E_{\comm}^{d}\coloneqq\left\{ s_{i}s_{j}=s_{j}s_{i}\mid i,j\in\left[d\right]\right\} \,\,\text{,}\label{eq:def-of-E-comm}
\end{equation}
since $\Gamma\left(E_{\comm}^{d}\right)\cong\ZZ^{d}$ is abelian.
In particular, the theorem applies to $E_{\comm}^{2}=\left\{ \eq{XY=YX}\right\} $.

Classical group properties can also be used to prove instability.
For example, the instability of $E_{2,3}\coloneqq\left\{ \eq{XY^{2}=Y^{3}X}\right\}$ \cite[Theorem 2]{GlebskyRivera} follows from properties of the group $\BS(2,3)\coloneqq\langle \eq{X},\eq{Y}\mid \eq{XY^{2}=Y^{3}X}\rangle$ (more preciesely, $E_{2,3}$ is not stable because $\BS(2,3)$ is sofic but not residually finite).

Our main theorems, namely, Theorems \ref{thm:MainPositive} and \ref{thm:MainNegative}, can be formulated in group-theoretic terms, and these formulations allow us to find many systems of relations to which the theorems apply.
For this we need the notions of amenability and property $\ptau$ (the latter is a variant of the well-known Kazhdan property $\T$).
One of the reasons for the wealth of examples and applications of these notions is that each of them has many equivalent definitions.
For groups of the form $\Gamma(E)$, i.e., for finitely-presented groups,
we give simple definitions, using the isoperimetric quantities presented in the introduction, as follows\footnote{The definitions given here of amenability and property $\ptau$ for a group of the form $\Gamma(E)$ refer to $E$ itself.
There are equivalent definitions of amenability and property $\ptau$ for a group $\Delta$ that are intrinsic to the group and do not refer to a system of relations $E$ such that $\Delta\cong\Gamma(E)$.
See \cite[Chapter 18]{DrutuKapovich}, \cite{BHV} and \cite{LubotzkyZuk} for more information on amenability, property $\T$ and property $\ptau$, respectively}.

The group $\Gamma(E)$ is amenable if and only if $\iota(G)=0$ for every $G\in\GSOL_E$. Thus Theorem \ref{thm:MainPositive} is equivalent to the following theorem:
\begin{customthm}{1'}[Main positive theorem in group terms]
\label{thm:MainPositivePrime}
If the group $\Gamma(E)$ is amenable then $E$ is testable.
\end{customthm}
It is worth noting that the Cayley graph $C$ of the group $\Gamma(E)$ belongs to $\GSOL_E$, and that if $\iota(C)=0$ then $\iota(G)=0$ for each $G\in\GSOL_E$. Thus the hypothesis of Theorem 1 can be reduced to an assumption about the Cayley graph of $\Gamma(E)$ only, rather than the family $\GSOL_E$ of graphs.

The group $\Gamma(E)$ has property $\ptau$ if and only if $\inf\{h(G)\mid G\in \FGSOL_E\} > 0$. Furthermore, $\FGSOL_E$ is infinite if and only if the group $\Gamma(E)$ has infinitely many finite quotients. Thus, Theorem \ref{thm:MainNegative} is equivalent to the following theorem:
\begin{customthm}{2'}[Main negative theorem in group terms]\label{thm:MainNegativePrime}
If the group $\Gamma(E)$ has property $\ptau$ and infinitely many finite quotients then $E$ is non-testable.
\end{customthm}

%Philosophically, one may view Theorems \ref{thm:MainPositivePrime} and \ref{thm:MainNegativePrime} as similar to Galois Theory. In Galois Theory, one determines whether a polynomial equation $f(x)=0$ is solvable by radicals by associating a group to $f$ and checking whether the group is solvable. Here too, we associate a group $\Gamma(E)$ to a system of equations $E$, and determine whether $E$ is testable using properties of the group $\Gamma(E)$ (by Proposition \ref{prop:testability-group-property}, the group $\Gamma(E)$ encodes all the information needed to determine whether $E$ is testable).

The rest of this section discusses applications of Theorems \ref{thm:MainPositive} and \ref{thm:MainNegative} and other aspects of testability related to the connection to group theory.

\subsection{Theorem \ref{thm:MainPositive} yields testable instable systems of relations}\label{subsec:testableInstable}

The class of amenable groups is vast and contains all solvable groups, and thus, by Theorem \ref{thm:MainPositivePrime}, $E$ is testable whenever $\Gamma(E)$ is solvable. But even in this case, $E$ is not necessarily stable.
Indeed, \cite[Theorem 1.3(ii)]{BLT} provides a group-theoretic condition $\text{($\ast$)}$ such that if $\Gamma(E)$ is amenable then $E$ is stable if and only if $\Gamma(E)$ satisfies $\text{($\ast$)}$.
Using this characterization of stability, \cite[Theorem 1.2(iii)]{BLT} provides an instable system of relations $E_p$, for each prime number $p$, such that $\Gamma(E_p)$ is solvable. But $E_p$ is testable by Theorem \ref{thm:MainPositivePrime}. Thus we have infinitely many examples of instable testable systems of relations. An explicit description of $E_p$ and $\Gamma(E_p)$ is given in Appendix \ref{appendix:abels}. 

\subsection{Theorem \ref{thm:MainNegative} yields non-testable systems of relations}\label{subsec:non-testable}

The class of finitely-presented groups with property $\ptau$ is also vast, and contains all finitely-presented groups that have property $\T$.
For example, the finitely-presented group $\SL_{m}\ZZ$, $m\geq 3$, has infinitely many finite quotients and has property $\T$, and thus, by Theorem \ref{thm:MainNegativePrime}, $E$ is non-testable if $\Gamma(E)\cong\SL_{m}\ZZ$.
In Appendix \ref{appendix:SL}
we describe a system $E_{m}$ such that $\Gamma\left(E_{m}\right)\cong\SL_{m}\ZZ$.

\subsection{Query efficiency and group theory}\label{subsec:QuantitativeGroups}

Let $E_{1}$ and $E_{2}$ be systems of relations such that
$\Gamma\left(E_{1}\right)\cong\Gamma\left(E_{2}\right)$.
Proposition \ref{prop:stability-group-property}
states that if $E_1$ is stable then so is $E_{2}$. Furthermore, \cite[Proposition 1.11]{BeckerMosheiff}
strengthens Proposition \ref{prop:stability-group-property} by showing
that if $E_{1}$ is $q\left(\eps\right)$-stable then $E_{2}$ is
$cq\left(\eps\right)$-stable for a constant $c=c\left(E_{1},E_{2}\right)$
(except for in the trivial case where $\GSOL_{E_{1}}\left(n\right)=\calG_S(n)$ for all $n\in \NN$).

Assuming that $E_1$ is testable (and thus, so is $E_2$), Proposition \ref{prop:DistinguishabilityIsAGroupProperty} describes a quantitative relationship between the statistical-distinguishability parameters of $E_1$ and $E_2$. It would be interesting to find an explicit quantitative relationship between the query complexities required to test these systems of relations. Such a result may be relevant to the problems discussed in Section \ref{subsec:queryComplexity}.

In the context of Problem \ref{prob:MainPositiveTheoremQuantitative}, it is worthwhile to mention the \emph{F\o{}lner function} \cite{VershikAmenability} of the group $\Gamma\left(E\right)$
with respect to $S$. Informally, assuming that $E$ satisfies the hypothesis of Theorem \ref{thm:MainPositive}, the F\o{}lner function measures ``how quickly'' $E$ does so. We conjecture that $E$ as above is $q\left(\eps\right)$-testable for some function $q(\eps)$, given in terms of $\left|S\right|$, $\sum_{w\in R_{E}}|w|$ and the F\o{}lner function.

\subsection{Uniform testability and group theory}\label{subsec:UniformAndGroups}
For systems $E_{1}$ and $E_{2}$ such that $\Gamma\left(E_{1}\right)\cong\Gamma\left(E_{2}\right)$,
if $E_{1}$ is uniformly testable (see Section \ref{sec:unfirom-testability}) then the same is true for $E_{2}$.
This follows from the method of Section \ref{sec:TestabilityIsAGroupProperty}
(see the explicit bounds in Proposition \ref{prop:DistinguishabilityIsAGroupProperty}).
It is interesting to study which systems of relations are uniformly testable. In particular, we would like to know if there is a uniform version of Theorem \ref{thm:MainPositive}, as asked in Problem \ref{prob:UniformMainPositive}. We suspect that if the F\o{}lner function of $\Gamma\left(E\right)$ is computable then $E$ is uniformly testable.

\section{\label{sec:kazhdan-not-testable}Proof of Theorem \ref{thm:MainNegative}}

Here we prove Theorem \ref{thm:MainNegative}, which is equivalent,
in light of Theorem \ref{thm:LSMUniversal}, to the following result. 
\begin{thm}
	\label{thm:kazhdan-not-distinguishable}Fix a finite alphabet $S$. Let $E$ be a system of relations over $S^{\pm}$ such that $\FGSOL_E$ is infinite and \begin{equation}\label{eq:propTau}
		\inf\{h(G)\mid G\in \FGSOL_E\}>0.
	\end{equation} Then $E$ is not statistically distinguishable. 
\end{thm}

The rest of this section is devoted to proving Theorem \ref{thm:kazhdan-not-distinguishable}.
Our proof strengthens the argument of \cite[Theorem 1.4]{BeckerLubotzky},
which proves, under similar assumptions, that $E$ is not stable.

We start with a proof sketch. Let $E$ be as in Theorem \ref{thm:kazhdan-not-distinguishable}.
For every $P\in\FinSubsets\left(F_{S}\right)$, we need to produce
a sequence of graphs $\left(G_{m}\right)_{m=1}^{\infty}$, $G_{m}\in\calG_{S}\left(n_{m}\right)$,
$n_{m}\in\NN$, such that
\begin{equation}
	d_{\TV}\left(N_{G_{m},P},N_{G'_{m},P}\right)\overset{m\to\infty}{\longrightarrow}0\label{eq:tau-sketch-1}
\end{equation}
for some $G'_{m}\in\GSOL_{E}\left(n_{m}\right)$, but
\begin{equation}
	d^{\ham}\left(G_{m},G''_{m}\right)\geq\eps_{0}\quad\forall m\in\NN\,\,\forall G''_{m}\in\GSOL_{E}\left(n_{m}\right)\,\,\text{,}\label{eq:tau-sketch-2}
\end{equation}
where $\eps_{0}>0$ depends only on $E$. That is, $G_{m}$ is close
to some solution for $E$ under the $P$-local metric, but far from
every solution under the global metric (see Remark \ref{rem:LocalVsGlobalMetric}).

To produce $G_{m}$, we take a connected graph $\tilde{G}_{m}\in\GSOL_{E}\left(n_{m}+1\right)$,
$n_{m}\in\NN$, $n_{m}\overset{m\to\infty}{\longrightarrow}\infty$,
make a local change such that $n_{m}+1$ becomes an isolated vertex,
and then remove the vertex $n_{m}+1$ and obtain $G_{m}\in\calG_{S}\left(n_{m}\right)$.
Then
\begin{equation}
	d_{\TV}\left(N_{G_{m},P},N_{\tilde{G}_{m},P}\right)\overset{m\to\infty}{\longrightarrow}0\,\,\text{,}\label{eq:tau-sketch-3}
\end{equation}
but, as shown by Lemma \ref{lem:restriction-and-cheeger} below, 
\begin{equation}
	d^{\ham}\left(G_{m},G''_{m}\right)\geq\frac{1}{2\left|S\right|}h\left(E\right)\quad\forall G''_{m}\in\GSOL_{E}\left(n_{m}\right)\label{eq:tau-sketch-4}
\end{equation}
for all large enough $m$, where
$$h(E) = \inf\{h(G)\mid G\in \FGSOL_E\}.$$
Set $\eps_{0}=\frac{1}{2}h(E)$. By \eqref{eq:propTau}, $\eps_{0}>0$, and thus (\ref{eq:tau-sketch-3})
and (\ref{eq:tau-sketch-4}) almost prove the desired (\ref{eq:tau-sketch-1})
and (\ref{eq:tau-sketch-2}), with the caveat that $\tilde{G}_{m}\in\GSOL_{E}\left(n_{m}+1\right)$,
while we need this graph to be in $\GSOL_{E}\left(n_{m}\right)$.
Lemma \ref{lem:action-size-changer} helps us overcome this final difficulty.
We now give a complete proof based on the proof sketch above.

%The construction of $\left(G_{m}\right)_{m=1}^{\infty}$ in the above
%sketch relies on a supply of graphs $\left(\tilde{G}_{m}\right)_{m=1}^{\infty}$,
%$\tilde{G}_{m}\in\GSOL_{E}\left(n_{m}\right)$, $n_{m}\to\infty$,
%which are all connected. This is where the assumption in Theorem \ref{thm:kazhdan-not-distinguishable}
%that $\Gamma\left(E\right)$ has finite quotients of unbounded cardinality
%comes into play. Indeed, take a finite quotient group $\Lambda$ of
%$\Gamma\left(E\right)$, and write $\pi\colon\Gamma\left(E\right)\to\Lambda$
%for the quotient map. Let $B\colon\Lambda\to\left[n\right]$, $n\coloneqq\left|\Lambda\right|$,
%be an arbitrary bijection, and define $G\in\calG_{S}\left(n\right)$
%by letting the edge set be $\left\{ x\overset{s}{\longrightarrow}\left(B\circ\pi\left(s\right)\circ B^{-1}\right)x\mid x\in\left[n\right],s\in S\right\} $.
%Then $G\in\GSOL_{E}\left(n\right)$ because $\pi$ is a homomorphism,
%and $G$ is connected because $\pi$ is surjective. In particular,
%we have proved the following basic fact.
%\begin{fact}
%\label{fact:actionFromQuotient}Let $E$ be a system of equations
%over $S^{\pm}$. If $\Gamma\left(E\right)$ has a finite quotient
%group of cardinality $n\in\NN$ then $\GSOL_{E}\left(n\right)$ contains
%a connected graph.
%\end{fact}

For $n\geq2$, define a function $\res_{n}\colon\calG_{S}\left(n\right)\to\calG_{S}\left(n-1\right)$
by letting $\hat{G}=\res_{n}\left(G\right)$ be the graph
on the vertex set $\left[n-1\right]$ such that
\[
s_{\hat{G}}x=\begin{cases}
	s_{G}x & s_{G}x\ne n\\
	s_{G}s_{G}x & s_{G}x=n
\end{cases}
\]
for all $x\in\left[n-1\right]$ and $s\in S$. That is, the edge set
of $\res_{n}\left(G\right)$ consists of all edges of $\hat{G}$ that
are not incident to the vertex $n$, and all edges of the form $s_{G}^{-1}n\overset{s}{\rightedge}s_{G}n$
for all $s\in S$ such that $s_{G}n\neq n$.

The following lemma shows that $\res_{n}\left(G\right)$ is far from
$\GSOL_{E}\left(n-1\right)$ whenever $E$ satisfies $\eqref{eq:propTau}$.
The lemma and its proof can be seen as a combinatorial version of ideas from
\cite{BeckerLubotzky}.
\begin{lem}
	\label{lem:restriction-and-cheeger}Let $E$ be a system of relations
	over $S^{\pm}$. Take a connected graph $G\in\GSOL_{E}\left(n\right)$,
	$n\ge2$, and denote $\hat{G}=\res_{n}\left(G\right)$. Then
	\[
	d^{\mathrm{H}}\left(\hat{G},G'\right)\geq\frac{h\left(E\right)}{\left|S\right|}-\frac{1}{n-1}\quad\forall G'\in\GSOL_{E}\left(n-1\right)\,\,\text{.}
	\]
\end{lem}

\begin{proof}
	Take $G'\in\GSOL_{E}\left(n-1\right)$. Consider the product $S$-graph
	$G'\times G$. This is the $S$-graph on the vertex set $\left[n-1\right]\times\left[n\right]$
	such that
	\[
	s_{G'\times G}\left(x,y\right)=\left(s_{G'}x,s_{G}y\right)\,\,\text{.}
	\]
	
	Write $X_{1},\dotsc,X_{c}$ for the connected components of $G'\times G$,
	and let $D=\left\{ \left(x,x\right)\mid x\in\left[n-1\right]\right\} \subset\left[n-1\right]\times\left[n\right]$.
	We claim that
	\begin{equation}
		\left|D\cap X_{i}\right|\leq\frac{1}{2}\left|X_{i}\right|\quad\forall1\leq i\leq c\,\,\text{.}\label{eq:cheeger-lemma-1}
	\end{equation}
	Let $1\leq i\leq c$. If $D\cap X_{i}=\emptyset$ we are done. Otherwise,
	fix an arbitrary $\left(x,x\right)\in X_{i}$, $x\in\left[n-1\right]$.
	The connected component $Y$ of $x$ in $G'$ clearly has at most
	$n-1$ vertices. On the other hand, the connected component of $x$
	in $G$ has exactly $n$ vertices because $G$ is connected. Accordingly,
	$\left[F_{S}:\stab_{G'}\left(x\right)\right]=\left|Y\right|\leq n-1$ and $\left[F_{S}:\stab_{G}\left(x\right)\right]=n$ by the Orbit--Stabilizer Theorem\footnote{
		The theorem says, in our terminology, that the index $\left[F_S: \stab_{H}(x)\right]$ of the subgroup $\stab_{H}(x)$ of $F_S$ is equal to the number of vertices of $H$ for every connected $S$-graph $H$ and vertex $x$ of $H$.
	},
	and so $\stab_{G'}\left(x\right)$
	is not contained in $\stab_{G}\left(x\right)$. Thus, the inclusion
	in
	\[
	\stab_{G'\times G}\left(\left(x,x\right)\right)=\stab_{G'}\left(x\right)\cap\stab_{G}\left(x\right)\subset\stab_{G'}\left(x\right)
	\]
	is strict, and therefore $m\coloneqq\left[\stab_{G'}\left(x\right):\stab_{G'\times G}\left(\left(x,x\right)\right)\right]$
	is at least $2$. Using the Orbit--Stabilizer Theorem again,
	
	\[
	\left|X_{i}\right|=\left[F_{S}:\stab_{G'\times G}\left(\left(x,x\right)\right)\right]=m\left[F_{S}:\stab_{G'}\left(x\right)\right]\geq2\left|Y\right|\,\,\text{.}
	\]
	Thus (\ref{eq:cheeger-lemma-1}) follows since $\left|D\cap X_{i}\right|\leq\left|Y\right|$
	(because $Y$ is image of $X_{i}$ under the projection onto the first
	coordinate $\left[n-1\right]\times\left[n\right]\to\left[n-1\right]$,
	and this projection is injective on $D$).
	
	Since the graph $X_{i}$ is a connected solution for $E$ (namely, it is in $\FGSOL_E$), (\ref{eq:cheeger-lemma-1})
	implies that
	\[
	\sum_{s\in S}\left|\left(s_{G'\times G}\left(D\cap X_{i}\right)\right)\setminus\left(D\cap X_{i}\right)\right|\geq h\left(X_{i}\right)\left|D\cap X_{i}\right|\quad\forall1\leq i\leq c.
	\]
	Thus
	\begin{align}
		\sum_{s\in S}\left|\left(s_{G'\times G}D\right)\setminus D\right| & =\sum_{i=1}^{c}\sum_{s\in S}\left|\left(s_{G'\times G}\left(D\cap X_{i}\right)\right)\setminus\left(D\cap X_{i}\right)\right|\nonumber \\
		& \geq\sum_{i=1}^{c}h\left(X_{i}\right)\left|D\cap X_{i}\right|\ge h\left(E\right)\underbrace{\left|D\right|}_{=n-1}\,\,\text{,}\label{eq:tau-1}
	\end{align}
	where the equality on the first line follows since $X_{1},\dotsc,X_{c}$
	are disjoint and $s_{G'\times G}X_{i}=X_{i}$ for all $1\leq i\leq c$.
	
	On the other hand,
	\begin{align}
		\sum_{s\in S}\left|\left(s_{G'\times G}D\right)\setminus D\right| & =\sum_{s\in S}\left|\left\{ x\in\left[n-1\right]\mid s_{G'}x\ne s_{G}x\right\} \right|\nonumber \\
		& \leq\sum_{s\in S}\left|\left\{ x\in\left[n-1\right]\mid\text{\ensuremath{s_{G'}x\ne s_{\hat{G}}x} or \ensuremath{s_{\hat{G}}x\neq s_{G}x}}\right\} \right|\nonumber \\
		& \leq\sum_{s\in S}\left(\underbrace{\left|\left\{ x\in\left[n-1\right]\mid s_{G'}x\neq s_{\hat{G}}x\right\} \right|}_{=\left(n-1\right)d^{\ham}\left(\hat{G},G'\right)}+\underbrace{\left|\left\{ x\in\left[n-1\right]\mid\ensuremath{s_{\hat{G}}x\neq s_{G}x}\right\} \right|}_{\leq1}\right)\nonumber \\
		& \le\left|S\right|\left(\left(n-1\right)d^{\ham}\left(\hat{G},G'\right)+1\right)\,\,\text{.}\label{eq:tau-2}
	\end{align}
	The claim follows from (\ref{eq:tau-1}) and (\ref{eq:tau-2}).
\end{proof}
Let $G\in\calG_{S}\left(n\right)$, write $\hat{G}=\res_{n}\left(G\right)$,
and take $P\in\FinSubsets\left(F_{S}\right)$. Denote 
\[
C=\left\{ v_{G}^{-1}n\mid v\text{ is a suffix of at least one }w\in P\right\} \,\,\text{.}
\]
Then $w_{G}x=w_{\hat{G}}x$ for each $x\in\left[n-1\right]\setminus C$,
and thus $\stab_{G}\left(x\right)\cap P=\stab_{\hat{G}}\left(x\right)\cap P$.
Since $\left|C\right|\le\left(\TotalSize\left(P\right)\right)^{2}$,
one can easily deduce the following lemma.
\begin{lem}
	\label{lem:resPreservesLocalStatistics} Let $G\in\calG_{S}\left(n\right)$,
	write $\hat{G}=\res_{n}\left(G\right)$, and take $P\in\FinSubsets\left(F_{S}\right)$.
	Then 
	\[
	d_{\TV}\left(N_{G,P},N_{\hat{G},P}\right)\le O_{P}\left(\frac{1}{n}\right)\,\,\text{.}
	\]
\end{lem}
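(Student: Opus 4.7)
The plan is to follow the sketch in the remark preceding the lemma and turn it into a coupling argument. The core observation is the following: for $x\in [n-1]\setminus C$ and any $w\in P$, the walk in $G$ that starts at $x$ and follows the letters of $w$ (applied from right to left, as in Section \ref{sec:GraphView}) never visits the vertex $n$; for otherwise, writing $w=uv$ with $v$ the suffix read up to that step, we would have $v_Gx=n$, i.e., $x=v_G^{-1}n\in C$. Since $\hat G$ is obtained from $G$ by short-circuiting only paths that pass through $n$, agreement of the walk in $G$ and in $\hat G$ on $[n-1]\setminus C$ gives $w_Gx=w_{\hat G}x$ for every $w\in P$, and hence $\stab_G(x)\cap P=\stab_{\hat G}(x)\cap P$.

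Next I would bound $|C|$. The set of suffixes of words in $P$ has cardinality at most $\sum_{w\in P}(|w|+1)=\TotalSize(P)+|P|$, and each suffix $v$ contributes at most one element $v_G^{-1}n$ to $C$. In particular $|C|=O_P(1)$, which is all that is needed; the looser bound $|C|\le(\TotalSize(P))^{2}$ mentioned in the hint is sufficient as well.

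Finally I would deduce the total-variation bound via an explicit coupling between $N_{G,P}$ and $N_{\hat G,P}$. Sample $Y\sim U([n-1])$ and independently a Bernoulli variable $Z$ with $\Pr(Z=1)=1/n$; set $X=n$ if $Z=1$ and $X=Y$ otherwise, so that $X\sim U([n])$. Under this coupling, $\stab_G(X)\cap P$ and $\stab_{\hat G}(Y)\cap P$ can differ only if $Z=1$ or $Y\in C$, and so
\[
d_{\TV}\left(N_{G,P},N_{\hat G,P}\right)\le \Pr(Z=1)+\Pr(Y\in C)\le \frac{1}{n}+\frac{|C|}{n-1}=O_P\left(\frac{1}{n}\right).
\]
There is no real obstacle here: once the identification of $C$ is in place the rest is a bookkeeping exercise, and the only point to be careful about is that $N_{G,P}$ and $N_{\hat G,P}$ are defined on sample spaces of different sizes, which is handled by the extra $1/n$ term above.
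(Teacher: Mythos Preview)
Your proof is correct and follows exactly the approach the paper outlines in the paragraph preceding the lemma (which is all the paper gives by way of proof, ending with ``one can easily deduce the following lemma''). You have simply filled in the details: the identification of $C$ and the reason $\stab_G(x)\cap P=\stab_{\hat G}(x)\cap P$ on $[n-1]\setminus C$ match the paper verbatim, and your coupling cleanly handles the passage from $U([n])$ to $U([n-1])$, which the paper leaves to the reader.
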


Let $P\in\FinSubsets\left(F_{S}\right)$. For an $S$-graph $G$,
we view $N_{G,P}$ as a vector in $\RR^{\Subsets\left(P\right)}$,
belonging to the compact set $\Prob\left(\Subsets(P)\right)\coloneqq\left\{ f\colon\Subsets\left(P\right)\to\left[0,1\right]\mid\sum_{Q\in\Subsets\left(P\right)}f\left(Q\right)=1\right\} $.
For a sequence of $S$-graphs $\left(G_{k}\right)_{k=1}^{\infty}$,
the sequence $\left(N_{G_{k},P}\right)_{k=1}^{\infty}$ is said to
\emph{converge} if it converges as a sequence of vectors in $\RR^{\Subsets\left(P\right)}$.
Intuitively, this means that the $S$-graphs $\left(G_{k}\right)_{k=1}^{\infty}$
tend toward having the same $P$-local statistics. This notion of
convergence is used in the proof of Theorem \ref{thm:kazhdan-not-distinguishable}
below. We use the fact that $\prod_{P\in\FinSubsets\left(F_{S}\right)}\Prob\left(\Subsets(P)\right)$
is compact\footnote{This is equivalent to the fact that the space of $S$-graphs
is compact under Benjamini--Schramm convergence, and to the fact
that the space $\IRS\left(F_{S}\right)$ of invariant random subgroups is compact.}.

We also use the following lemma.
\begin{lem}
	\label{lem:action-size-changer}\cite[Lemma 7.6]{BLT} Let $E$ be
	system of relations over $S^{\pm}$, and let $\left(G_{k}\right)_{k=1}^{\infty}$,
	$G_{k}\in\GSOL_{E}\left(l_{k}\right)$, $l_{k}\in\NN$, be a sequence
	of $S$-graphs such that the local statistics sequence $\left(N_{G_{k},P}\right)_{k=1}^{\infty}$
	converges for each $P\in\FinSubsets\left(F_{S}\right)$. Take a sequence
	$\left(m_{k}\right)_{k=1}^{\infty}$ of integers such that $m_{k}\overset{k\rightarrow\infty}{\longrightarrow}\infty$.
	Then there is a sequence $\left(H_{k}\right)_{k=1}^{\infty}$, $H_{k}\in\GSOL_{E}\left(m_{k}\right)$,
	such that $\left(N_{G_{k},P}\right)_{k=1}^{\infty}$ and $\left(N_{H_{k},P}\right)_{k=1}^{\infty}$
	converge to the same limit for each $P\in\FinSubsets\left(F_{S}\right)$. 
\end{lem}

\begin{proof}
	[Proof of Theorem \ref{thm:kazhdan-not-distinguishable}]
	
	By our assumption that $\FGSOL_E$ is infinite
	(and thus contains graphs of unbounded cardinality),
	and the compactness of $\prod_{P\in\FinSubsets\left(F_{S}\right)}\Prob\left(\Subsets(P)\right)$,
	there is a sequence of connected $S$-graphs $\left(G_{k}\right)_{k=1}^{\infty}$,
	$G_{k}\in\GSOL_{E}\left(n_{k}\right)$, $n_{k}\overset{k\rightarrow\infty}{\longrightarrow}\infty$,
	such that $\left(N_{G_{k},P}\right)_{k=1}^{\infty}$ converges for
	each $P\in\FinSubsets\left(F_{S}\right)$. Let $\hat{G}_{k}=\res_{n_{k}}\left(G_{k}\right)$.
	By Lemma \ref{lem:resPreservesLocalStatistics},
	\[
	\lim_{k\to\infty}N_{\hat{G}_{k},P}=\lim_{k\to\infty}N_{G_{k},P}\,\,\text{.}
	\]
	Write $h=h\left(E\right)$. By Lemma \ref{lem:restriction-and-cheeger},
	\begin{equation}
		\hat{G}_{k}\in\GSOL_{E}^{\geq h/\left|S\right|-1/\left(n_{k}-1\right)}\left(n_{k}-1\right)\qquad\forall k\in\NN\,\,\text{.}\label{eq:property-t-bounded-away-from-solutions}
	\end{equation}
	By applying Lemma \ref{lem:action-size-changer} to the sequence $\left(G_{k}\right)_{k=1}^{\infty}$,
	with $m_{k}=n_{k}-1$, we obtain a sequence of graphs $\left(H_{k}\right)_{k=1}^{\infty}$
	such that
	\begin{equation}
		H_{k}\in\GSOL_{E}\left(n_{k}-1\right)\label{eq:property-t-H-is-a-solution}
	\end{equation}
	and
	\begin{equation}
		\lim_{k\to\infty}N_{H_{k},P}=\lim_{k\to\infty}N_{G_{k},P}\quad\forall P\in\FinSubsets\left(F_{S}\right)\,\,\text{.}\label{eq:property-t-similar-statistics}
	\end{equation}
	Thus
	\begin{equation}
		\lim_{k\to\infty}N_{H_{k},P}=\lim_{k\to\infty}N_{\hat{G}_{k},P}\quad\forall P\in\FinSubsets\left(F_{S}\right)\,\,\text{.}\label{eq:property-t-similar-statistics-2}
	\end{equation}
	
	The system $E$ is not statistically distinguishable by (\ref{eq:property-t-bounded-away-from-solutions}),
	(\ref{eq:property-t-H-is-a-solution}) and (\ref{eq:property-t-similar-statistics-2}).
\end{proof}

\section{\label{sec:amenable-are-testable}Proof of Theorem \ref{thm:MainPositive}}

Let
$S$ be a finite alphabet and let $E$ be a system of relations over
$S^{\pm}$.

The proof of Theorem \ref{thm:MainPositive} relies on two deep results: The Ornstein--Weiss Theorem
\cite{OrnsteinWeiss} (see also \cite{ConnesFeldmanWeiss}) and the
Newman--Sohler Theorem \cite{NewmanSohler2013} (see also \cite[Theorem 5]{Elek2012}).
These theorems are used together in \cite[Proposition 6.8]{BLT} in
a simple manner to prove the following theorem (stated here in the
language of the present paper).
\begin{thm}
	\label{thm:Newman-Sohler-Ornstein-Weiss}Assume that $\iota(G)=0$ for every $G\in \GSOL_E$. Then for every $\eps>0$ there are $r\in\NN$ and $\delta>0$
	such that $d^{\ham}\left(G,G'\right)<\eps$ whenever $n\in\NN$, $G\in\GSOL_{E}\left(n\right)$,
	$G'\in\calG_{S}\left(n\right)$ and $d_{\TV}\left(N_{G,B_{r}},N_{G',B_{r}}\right)<\delta$
	(where $B_{r}=\left\{ w\in F_{S}\mid\left|w\right|\leq r\right\} $).
\end{thm}

Theorem \ref{thm:MainPositive} follows immediately
from Theorems \ref{thm:LSMUniversal} and \ref{thm:Newman-Sohler-Ornstein-Weiss}.
\begin{proof}
	[Proof of Theorem \ref{thm:MainPositive}]
	
	Assume that $\iota(G) = 0$ for every $G\in \GSOL_E$.
	By Theorem \ref{thm:Newman-Sohler-Ornstein-Weiss},
	there are functions $r\colon\RR_{>0}\to\NN$ and $\delta\colon\RR_{>0}\to\RR_{>0}$
	such that $E$ is $\left(B_{r\left(\eps\right)},\delta\left(\eps\right)\right)$-statistically-distinguishable.
	Thus $E$ is testable by Theorem \ref{thm:LSMUniversal}.
\end{proof}

\section{\label{sec:TestabilityIsAGroupProperty}Proof of Proposition \ref{prop:testability-group-property}}

Fix finite alphabets $S=\left\{ s_{1},\dotsc,s_{d_{1}}\right\} $
and $T=\left\{ t_{1},\dotsc,t_{d_{2}}\right\} $, and consider two
systems of relations $E_{1}$ and $E_{2}$ over $S^{\pm}$ and $T^{\pm}$,
respectively, such that the groups $\Gamma\left(E_{1}\right)$ and
$\Gamma\left(E_{2}\right)$ are isomorphic. By Theorem \ref{thm:LSMUniversal},
in order to prove Proposition \ref{prop:testability-group-property},
we need to prove that if $E_{2}$ is statistically distinguishable
then so is $E_{1}$. This is achieved by Proposition \ref{prop:DistinguishabilityIsAGroupProperty}
below. Our strategy is to show how to map $\calG_S(n)$ to $\calG_T(n)$, and vice versa, for all $n\in \NN$, in a way that preserves certain desirable properties. We define these maps in Sections \ref{subsec:HomView} and \ref{sec:Lambda1AndLambda2}, and then prove the relevant invariants in Section \ref{subsec:PropertiesOfLambdaStars}

Given a finite set of words $P$, denote 
\[
\TotalSize\left(P\right)=\sum_{x\in P}\left|x\right|\,\,\text{.}
\]

\begin{prop}
\label{prop:DistinguishabilityIsAGroupProperty}If $E_{2}$ is $\left(P_{2}\left(\eps\right),\delta_{2}\left(\eps\right)\right)$-statistically-distinguishable
for $P_{2}\colon\RR_{>0}\to\FinSubsets\left(F_{T}\right)$ and $\delta_{2}\colon\RR_{>0}\to\RR_{\geq0}$,
then $E_{1}$ is $\left(P_{1}\left(\eps\right),\delta_{1}\left(\eps\right)\right)$-statistically-distinguishable
for some $P_{1}\colon\RR_{>0}\to\FinSubsets\left(F_{S}\right)$ and $\delta_{1}\colon\RR_{>0}\to\RR_{\geq0}$
such that
\begin{equation}
\TotalSize\left(P_{1}\left(\eps\right)\right)=O_{E_{1},E_{2}}\left(\TotalSize\left(P_{2}\left(\eps\right)\right)\right)\label{eq:group-property-statement-1}
\end{equation}
 and 
\begin{equation}
\delta_{1}\left(\eps\right)=\Omega_{E_{1},E_{2}}\left(\min\left(\delta_{2}\left(\eps\right),\eps\right)\right)\label{eq:group-property-statement-2}
\end{equation}
for every $\eps>0$.

Explicitly,
\[
P_{1}\left(\eps\right)=\lambda_{2}\left(P_{2}\left(\eps\right)\right)\cup R_{E_{1}}
\]
and 
\[
\delta_{1}\left(\eps\right)=\min\left(\delta_{2}\left(\frac{\eps}{2C_{1}}\right),\frac{\eps}{2C_{2}}\right)\,\,\text{,}
\]
for $\lambda_{2}$ as defined below and positive constants $C_{1}$
and $C_{2}$ that depend on $E_{1}$ and $E_{2}$.
\end{prop}

The rest of this section is devoted to proving Proposition \ref{prop:DistinguishabilityIsAGroupProperty}.

\subsection{A homomorphism view of $\calG_S(n)$}\label{subsec:HomView}

For $n\in\NN$, let $\calH_{S}\left(n\right)$ be the set of homomorphisms
$F_{S}\to\Sym\left(n\right)$. For the sake of proving Proposition \ref{prop:DistinguishabilityIsAGroupProperty}, it will be convenient to encode an $S$-graph in  $\calG_S(n)$ as a homomorphism in $\calH_S(n)$, in a manner which we now describe.

For $G\in \calG_S(n)$,
let $f_{G}\in\calH_{S}\left(n\right)$ be the 
$F_{S}\to\Sym\left(n\right)$ homomorphism that maps $w\in F_S$ to the permutation $i\mapsto w_Gi$. Note that the map $G\mapsto f_{G}\colon\calG_S(n)\to\calH_{S}\left(n\right)$
is a bijection\footnote{
This can be seen easily using the
universal property of the free group $F_S$, which amounts to a bijection
$(\Sym(n))^d\to\calH_S(n)$, as recalled in Appendix 
\ref{app:FreeGroupAndPresentations},
and the bijection
$\overline{\sigma}\mapsto G_{\overline{\sigma}}\colon(\Sym(n))^d\to\calG_S(n)$
defined in Section \ref{sec:GraphView}.}

We translate some of the notions from
the introduction to the language of homomorphisms. Given $f,g\in\calH_{S}\left(n\right)$,
define 
\[
d_{n}^{\ham}\left(f,g\right)=d^{\ham}\left(f,g\right)=\sum_{s\in S}\frac{1}{n}\left|\left\{ x\in\left[n\right]\mid f\left(s\right)x\ne g\left(s\right)x\right\} \right|\,\,\text{.}
\]
Let 
\[
\HSOL_{E}\left(n\right)=\left\{ f\in\calH_{S}\left(n\right)\mid f\left(w\right)=\id\,\,\forall w\in R_{E}\right\} \,\,\text{,}
\]
\[
\HSOL_{E}^{\ge\eps}\left(n\right)=\left\{ f\in\calH_{S}\left(n\right)\mid d^{\ham}\left(f,f'\right)\ge\eps\,\,\forall f'\in\HSOL_{E}\left(n\right)\right\} \,\,\text{.}
\]
For $x\in\left[n\right]$, let $\stab_{f}\left(x\right)=\left\{ w\in F_{S}\mid f\left(w\right)x=x\right\} $.
For $P\in\FinSubsets\left(F_{S}\right)$, let $N_{f,P}$ be the distribution
of $\stab_{f}\left(x\right)\cap P$, where $x$ is sampled uniformly
from $\left[n\right]$. Then $\HSOL_{E}\left(n\right)=\left\{ f_{G}\mid G\in\GSOL_{E}\left(n\right)\right\} $,
$\HSOL_{E}^{\ge\eps}\left(n\right)=\left\{ f_{G}\mid G\in\GSOL_{E}^{\geq\eps}\left(n\right)\right\} $,
$\stab_{f_{G}}\left(x\right)=\stab_{G}\left(x\right)$
and $N_{f_{G},P}=N_{G,P}$ for $G\in\calG_S(n)$. We also denote $\HSOL_{E}^{<\eps}\left(n\right)\coloneqq\calH_{S}\left(n\right)\setminus\HSOL_{E}^{\geq\eps}\left(n\right)$.

\subsection{The maps $\lambda_1^*$ and $\lambda_2^*$}\label{sec:Lambda1AndLambda2}
To prove Proposition \ref{prop:DistinguishabilityIsAGroupProperty}
we define two maps, $\lambda_{1}^{*}\colon\calH_{T}\left(n\right)\to\calH_{S}\left(n\right)$
and $\lambda_{2}^{*}\colon\calH_{S}\left(n\right)\to\calH_{T}\left(n\right)$,
that behave nicely (see Section \ref{subsec:PropertiesOfLambdaStars})
with respect to the spaces $\HSOL_{E_{1}}\left(n\right)$ and $\HSOL_{E_{2}}\left(n\right)$,
and also with respect to $\HSOL_{E_{1}}^{<\eps}\left(n\right)$ and
$\HSOL_{E_{2}}^{<\eps}\left(n\right)$, $\eps>0$.

We begin with the definitions of $\lambda_{1}^{*}$ and $\lambda_{2}^{*}$.
Write $\pi_{1}\colon F_{S}\twoheadrightarrow\Gamma\left(E_{1}\right)$
and $\pi_{2}\colon F_{T}\twoheadrightarrow\Gamma\left(E_{2}\right)$
for the quotient maps (see Appendix \ref{app:FreeGroupAndPresentations}),
and fix an isomorphism $\theta\colon\Gamma\left(E_{1}\right)\to\Gamma\left(E_{2}\right)$.
We ``lift'' the group isomorphisms $\theta\colon\Gamma\left(E_{1}\right)\to\Gamma\left(E_{2}\right)$
and $\theta^{-1}\colon\Gamma\left(E_{2}\right)\to\Gamma\left(E_{1}\right)$
to group homomorphisms

\[
\text{\ensuremath{\lambda_{1}\colon F_{S}\to F_{T}} and \ensuremath{\lambda_{2}\colon F_{T}\to F_{S}}\,\,\ensuremath{.}}
\]
More precisely, we fix homomorphisms $\lambda_{1}$ and $\lambda_{2}$
such that each of the two squares in the following diagram commutes
(i.e., $\pi_{2}\circ\lambda_{1}=\theta\circ\pi_{1}$ and $\pi_{1}\circ\lambda_{2}=\theta^{-1}\circ\pi_{2}$):
\begin{equation}
\xymatrix{F_{S}\ar[rr]^{\lambda_{1}}\ar[d]^{\pi_{1}} &  & F_{T}\ar[rr]^{\lambda_{2}}\ar[d]^{\pi_{2}} &  & F_{S}\ar[d]^{\pi_{1}}\\
\Gamma\left(E_{1}\right)\ar[rr]^{\theta} &  & \Gamma\left(E_{2}\right)\ar[rr]^{\theta^{-1}} &  & \Gamma\left(E_{1}\right)
}
\,\,\text{.}\label{eq:GroupPropDiagram}
\end{equation}
Such maps $\lambda_{1}$ and $\lambda_{2}$ exist (but are generally
not unique). Indeed, for $1\leq i\leq d_{1}$ we set $\lambda_{1}\left(s_{i}\right)\in F_{T}$
to be an arbitrary word such that $\pi_{2}\left(\lambda_{1}\left(s_{i}\right)\right)=\theta\left(\pi_{1}\left(s_{i}\right)\right)$,
and note that $\lambda_{1}$ extends uniquely to a group homomorphism.
We construct $\lambda_{2}$ similarly.

The maps $\lambda_{1}$ and $\lambda_{2}$ are not mutual inverses
in general, but they enjoy inverse-like properties. By the commutativity
of the diagram, $\pi_{1}\circ\lambda_{2}\circ\lambda_{1}=\theta^{-1}\circ\pi_{2}\circ\lambda_{1}=\theta^{-1}\circ\theta\circ\pi_{1}=\pi_{1}$.
In particular, $\left(\lambda_{2}\circ\lambda_{1}\right)\left(s_{i}\right)\in F_{S}$
and $s_{i}\in F_{S}$ belong to the same left coset of $\lla R_{E_{1}}\rra$
in $F_{S}$ for each $1\leq i\leq d_{1}$. That is,
\begin{equation}
\left(\lambda_{2}\circ\lambda_{1}\right)\left(s_{i}\right)=s_{i}\prod_{j=1}^{m_{i}}v_{i,j}r_{i,j}^{\eps_{i,j}}v_{i,j}^{-1}\,\,\text{,}\label{eq:pseudo-inverses}
\end{equation}
where $m_{i}\geq0$, $v_{i,j}\in F_{S}$, $r_{i,j}\in R_{E_{1}}$
and $\eps_{i,j}\in\left\{ \pm1\right\} $. Set $Q_{i}\coloneqq\left\{ v_{i,j}\mid1\leq j\leq m_{i}\right\} $.

For $n\in\NN$, the homomorphisms $\lambda_{1}\colon F_{S}\to F_{T}$
and $\lambda_{2}\colon F_{T}\to F_{S}$ give rise to maps
\begin{align*}
\lambda_{1}^{*}\colon\calH_{T}\left(n\right) & \to\calH_{S}\left(n\right)\,\,\text{and}\\
\lambda_{2}^{*}\colon\calH_{S}\left(n\right) & \to\calH_{T}\left(n\right)
\end{align*}
given by
\begin{align*}
\lambda_{1}^{*}h & =h\circ\lambda_{1}\qquad\forall h\in\calH_{T}\left(n\right)\,\,\text{and}\\
\lambda_{2}^{*}f & =f\circ\lambda_{2}\qquad\forall f\in\calH_{S}\left(n\right)\,\,\text{.}
\end{align*}

Under the bijections $\calH_{S}\left(n\right)\leftrightarrow\calG_{S}\left(n\right)$
and $\leftrightarrow\calH_{T}\left(n\right)\leftrightarrow\calG_{T}\left(n\right)$
of Section \ref{subsec:HomView}, $\lambda_{1}^{*}$ and $\lambda_{2}^{*}$
give rise to maps between $\calG_{S}\left(n\right)$
and $\calG_{T}\left(n\right)$. For this section, the homomorphism
view suffices, but the reader may find it instructive to spell out
the definitions of $\lambda_{1}^{*}$ and $\lambda_{2}^{*}$ in terms
of graphs.

\subsection{Properties of $\lambda_{1}^{*}$ and $\lambda_{2}^{*}$\label{subsec:PropertiesOfLambdaStars}}

We analyze the behavior of $\lambda_{1}^{*}$ and $\lambda_{2}^{*}$
with regard to both the global metric $d^{\ham}\left(f,g\right)$,
and the $P$-local metrics $d_{\TV}\left(N_{f,P},N_{g,P}\right)$,
$P\in\FinSubsets\left(F_{S}\right)$ (see Remark \ref{rem:LocalVsGlobalMetric}).
Due to symmetry, the claims in this section remain true if we swap
the roles of $S,E_{1},\lambda_{1}$ and $T,E_{2},\lambda_{2}$.

\subsubsection{The global metric}
\begin{defn}
Given $f\in\calH_{S}\left(n\right)$, let $\Bad_{E_{1}}\left(f\right)=\left\{ x\in\left[n\right]\mid\exists w\in R_{E_{1}}\,\,f\left(w\right)x\ne x\right\} \,\,$.
\end{defn}

\begin{rem}
\label{rem:LocalDefectAsNeighborhoodTVDistance}For $f\in\HSOL_{E_{1}}\left(n\right)$,
the probability distribution $N_{F,R_{E_{1}}}$ over $\Subsets\left(R_{E_{1}}\right)$
assigns probability $1$ to $R_{E_{1}}$. Thus, 
\[
d_{\TV}\left(N_{f,R_{E_{1}}},N_{g,R_{E_{1}}}\right)=\Pr_{x\sim \U\left(\left[n\right]\right)}\left(\stab_{g}\left(x\right)\cap R_{E_{1}}\ne R_{E_{1}}\right)=\frac{\left|\Bad_{E_{1}}\left(g\right)\right|}{n}
\]
for all $g\in\calH_{S}\left(n\right)$. 
\end{rem}

The following lemma shows that $\lambda_{1}^{*}$ and $\lambda_{2}^{*}$
enjoy inverse-like properties. More precisely, the lemma gives a tool
for bounding the distance between $f$ and $\lambda_{1}^{*}\lambda_{2}^{*}f$
for $f\in\calH_{S}\left(n\right)$.
\begin{lem}
\label{lem:pseudo-inverse}Let $f\in\calH_{S}\left(n\right)$. Then
$\left(\lambda_{1}^{*}\lambda_{2}^{*}f\right)\left(s_{i}\right)x=f\left(s_{i}\right)x$
for every $1\leq i\leq d_{1}$ and $x\in\left[n\right]$ such that
$x\notin\bigcup_{v\in Q_{i}}\left(f\left(v\right)\Bad_{E_{1}}\left(f\right)\right)$.
\end{lem}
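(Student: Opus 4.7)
The plan is to unfold the definitions and reduce the statement to showing that each conjugate factor in the product expression \eqref{eq:pseudo-inverses} fixes the relevant point. First, since $\lambda_1^*$ and $\lambda_2^*$ are defined by precomposition,
\[
(\lambda_1^*\lambda_2^* f)(s_i) \;=\; f\bigl((\lambda_2\circ\lambda_1)(s_i)\bigr) \;=\; f(s_i)\,\cdot\,f\Bigl(\textstyle\prod_{j=1}^{m_i} v_{i,j}\,r_{i,j}^{\eps_{i,j}}\,v_{i,j}^{-1}\Bigr),
\]
using \eqref{eq:pseudo-inverses}. Thus the conclusion $(\lambda_1^*\lambda_2^* f)(s_i)x = f(s_i)x$ is equivalent to
\[
f\Bigl(\textstyle\prod_{j=1}^{m_i} v_{i,j}\,r_{i,j}^{\eps_{i,j}}\,v_{i,j}^{-1}\Bigr)\,x \;=\; x,
\]
so everything reduces to analyzing how a product of conjugates of elements of $R_{E_1}$ acts on $x$.

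The key step is a single-factor computation: I will show that for each $1\le j\le m_i$,
\[
f\bigl(v_{i,j}\,r_{i,j}^{\eps_{i,j}}\,v_{i,j}^{-1}\bigr)\,x \;=\; x,
\]
whenever $x\notin f(v_{i,j})\Bad_{E_1}(f)$. Reading the action from right to left, set $y=f(v_{i,j}^{-1})x$. The hypothesis $x\notin f(v_{i,j})\Bad_{E_1}(f)$ is exactly the statement $y\notin\Bad_{E_1}(f)$, which by definition of $\Bad_{E_1}$ means $f(w)y=y$ for every $w\in R_{E_1}$; in particular $f(r_{i,j}^{\eps_{i,j}})y=y$. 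Conjugating back by $f(v_{i,j})$ then yields $f(v_{i,j}\,r_{i,j}^{\eps_{i,j}}\,v_{i,j}^{-1})x = f(v_{i,j})y = x$, as desired.

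To finish, I apply this single-factor claim iteratively to the product $\prod_{j=1}^{m_i} v_{i,j}\,r_{i,j}^{\eps_{i,j}}\,v_{i,j}^{-1}$, processing one factor at a time starting from the innermost (rightmost). The global hypothesis $x\notin\bigcup_{v\in Q_i} f(v)\Bad_{E_1}(f)$ ensures the single-factor hypothesis holds for every $v_{i,j}\in Q_i$ simultaneously; since at each stage the intermediate image is still equal to $x$, the next factor again fixes $x$, and the full product therefore fixes $x$.

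I do not expect a real obstacle here: the heart of the argument is simply that conjugation transports the $\Bad$-set, so the hypothesis has been tailored precisely to guarantee that every conjugate factor acts trivially on $x$. The only thing to be careful about is keeping track of whether $\eps_{i,j}=+1$ or $-1$, but since $x\notin\Bad_{E_1}(f)$ implies $f(w^{-1})x=x$ as well as $f(w)x=x$ for $w\in R_{E_1}$, both cases are handled uniformly.
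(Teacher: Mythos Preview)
Your proof is correct and follows essentially the same approach as the paper: both unfold $(\lambda_1^*\lambda_2^* f)(s_i)$ via \eqref{eq:pseudo-inverses}, reduce to showing that each conjugate factor $f(v_{i,j})f(r_{i,j}^{\eps_{i,j}})f(v_{i,j}^{-1})$ fixes $x$, and observe that this follows from $f(v_{i,j})^{-1}x\notin\Bad_{E_1}(f)$. The paper phrases the iteration a bit more tersely, but the argument is the same.
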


\begin{proof}
For $1\leq i\leq d_{1}$ and $x\in\left[n\right]$,
\begin{align*}
\left(\lambda_{1}^{*}\lambda_{2}^{*}f\right)\left(s_{i}\right)x & =f\left(\left(\lambda_{2}\circ\lambda_{1}\right)\left(s_{i}\right)\right)x\\
 & =f\left(s_{i}\right)\left(\prod_{j=1}^{m_{i}}f\left(v_{i,j}\right)f\left(r_{i,j}^{\eps_{i,j}}\right)f\left(v_{i,j}^{-1}\right)\right)x\,\,\text{.} & \text{by (\ref{eq:pseudo-inverses})}
\end{align*}
Thus, $\left(\lambda_{1}^{*}\lambda_{2}^{*}f\right)\left(s_{i}\right)x=f\left(s_{i}\right)x$
if 
\[
f\left(r_{i,j}^{\eps_{i,j}}\right)f\left(v_{i,j}^{-1}\right)x=f\left(v_{i,j}^{-1}\right)x\qquad\forall1\leq j\leq m_{i}\,\,\text{.}
\]
The latter condition is equivalent to
\[
f\left(r_{i,j}\right)f\left(v_{i,j}^{-1}\right)x=f\left(v_{i,j}^{-1}\right)x\qquad\forall1\leq j\leq m_{i}\,\,\text{.}
\]
This holds whenever $f\left(v\right)^{-1}x\notin\Bad_{E_{1}}\left(f\right)$
for all $v\in Q_{i}$, i.e., when $x\notin\bigcup_{v\in Q_{i}}\left(f\left(v\right)\Bad_{E_{1}}\left(f\right)\right)$.
\end{proof}
We conclude the following.
\begin{cor}
\label{cor:InverseOnSolutions}The image of $\HSOL_{E_{1}}\left(n\right)$
under $\lambda_{2}^{*}$ is contained in $\HSOL_{E_{2}}\left(n\right)$.
Furthermore, the restriction of $\lambda_{2}^{*}$ to $\HSOL_{E_{1}}\left(n\right)$
is a bijection $\lambda_{2}^{*}\mid_{\HSOL_{E_{1}}\left(n\right)}\colon\HSOL_{E_{1}}\left(n\right)\to\HSOL_{E_{2}}\left(n\right)$
whose inverse is $\lambda_{1}^{*}\mid_{\HSOL_{E_{2}}\left(n\right)}$.
\end{cor}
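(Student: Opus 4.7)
The plan is to split the corollary into two independent pieces and handle them in sequence. First I would verify that $\lambda_2^{*}$ sends $\HSOL_{E_{1}}(n)$ into $\HSOL_{E_{2}}(n)$; then I would use Lemma \ref{lem:pseudo-inverse} to show that $\lambda_1^{*}$ and $\lambda_2^{*}$ are mutual inverses when restricted to the solution spaces, which gives the bijection claim.

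For the containment, the key observation is that an element $f \in \HSOL_{E_1}(n)$ vanishes on all of $R_{E_{1}}$, hence on $\lla R_{E_{1}} \rra = \ker\pi_{1}$, and therefore factors uniquely through $\pi_{1}$ as $\bar f \circ \pi_{1}$ for some homomorphism $\bar f\colon\Gamma(E_{1}) \to \Sym(n)$. For any $w \in R_{E_{2}}$, the commutativity of the right square in diagram (\ref{eq:GroupPropDiagram}) gives $\pi_{1}(\lambda_{2}(w)) = \theta^{-1}(\pi_{2}(w)) = \theta^{-1}(1) = 1$, so $\lambda_{2}(w) \in \ker \pi_{1}$, and hence $(\lambda_{2}^{*}f)(w) = f(\lambda_{2}(w)) = \bar f(1) = \id$. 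This shows $\lambda_{2}^{*}f \in \HSOL_{E_{2}}(n)$, and by symmetry $\lambda_{1}^{*}$ maps $\HSOL_{E_{2}}(n)$ into $\HSOL_{E_{1}}(n)$.

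For the bijection, I would apply Lemma \ref{lem:pseudo-inverse} to any $f \in \HSOL_{E_{1}}(n)$. Since $f(w) = \id$ for every $w \in R_{E_{1}}$, we have $\Bad_{E_{1}}(f) = \emptyset$, so the exceptional set $\bigcup_{v \in Q_{i}}(f(v)\Bad_{E_{1}}(f))$ in that lemma is empty. The lemma therefore yields $(\lambda_{1}^{*}\lambda_{2}^{*}f)(s_{i})x = f(s_{i})x$ for every $1 \leq i \leq d_{1}$ and every $x \in [n]$, i.e., $\lambda_{1}^{*}\lambda_{2}^{*}f = f$. The symmetric version of Lemma \ref{lem:pseudo-inverse} (obtained by swapping the roles of $S,E_{1},\lambda_{1}$ and $T,E_{2},\lambda_{2}$, which is valid by the symmetry noted at the start of Section \ref{subsec:PropertiesOfLambdaStars}) gives $\lambda_{2}^{*}\lambda_{1}^{*}h = h$ for every $h \in \HSOL_{E_{2}}(n)$. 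Combining both, the restrictions $\lambda_{2}^{*}|_{\HSOL_{E_{1}}(n)}$ and $\lambda_{1}^{*}|_{\HSOL_{E_{2}}(n)}$ are mutually inverse maps, hence bijections.

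There is no real obstacle here: both parts reduce to bookkeeping, with the only substantive input being the commutativity of (\ref{eq:GroupPropDiagram}) for the first part and the already-proved Lemma \ref{lem:pseudo-inverse} combined with the fact that solutions have empty $\Bad$ set for the second. The only minor care needed is to justify invoking Lemma \ref{lem:pseudo-inverse} in the reverse direction, which requires noting that the entire construction of $\lambda_{1}, \lambda_{2}$ and the ensuing analysis treats the two systems symmetrically.
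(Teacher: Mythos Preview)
Your proof is correct and follows essentially the same approach as the paper: both use the commutativity of diagram (\ref{eq:GroupPropDiagram}) to show $\lambda_2(R_{E_2}) \subset \ker\pi_1$ for the containment, and both invoke Lemma \ref{lem:pseudo-inverse} together with $\Bad_{E_1}(f)=\emptyset$ (plus the symmetric version) for the bijection. The only cosmetic difference is that you explicitly factor $f$ as $\bar f\circ\pi_1$, whereas the paper directly notes $f(\lambda_2(v))=1$ since $\lambda_2(v)\in\lla E_1\rra$; these are the same argument.
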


\begin{proof}
Let $f\in\HSOL_{E_{1}}\left(n\right)$ and $v\in R_{E_{2}}$. Then
$\pi_{1}\left(\lambda_{2}\left(v\right)\right)=\theta^{-1}\left(\pi_{2}\left(v\right)\right)=\theta^{-1}\left(1_{\Gamma\left(E_{2}\right)}\right)=1_{\Gamma\left(E_{1}\right)}$
by (\ref{eq:GroupPropDiagram}). Thus $\lambda_{2}\left(v\right)\in\ker\pi_{1}=\lla E_{1}\rra$,
and so $\left(\lambda_{2}^{*}f\right)\left(v\right)=f\left(\lambda_{2}\left(v\right)\right)=1$.
Hence $\lambda_{2}^{*}f\in\HSOL_{E_{2}}\left(n\right)$. We proved
that 
\[
\lambda_{2}^{*}\left(\HSOL_{E_{1}}\left(n\right)\right)\subseteq\HSOL_{E_{2}}\left(n\right)\,\,\text{.}
\]
Similarly,
\[
\lambda_{1}^{*}\left(\HSOL_{E_{2}}\left(n\right)\right)\subset\HSOL_{E_{1}}\left(n\right)\,\,\text{.}
\]

Since $f\in\HSOL_{E_{1}}\left(n\right)$ we have $\Bad_{E_{1}}\left(f\right)=\emptyset$.
Consequently, Lemma \ref{lem:pseudo-inverse} implies that $\lambda_{1}^{*}\lambda_{2}^{*}f=f$.
Similarly, $\lambda_{2}^{*}\lambda_{1}^{*}h=h$ for $h\in\HSOL_{E_{2}}\left(n\right)$,
and the claim follows.
\end{proof}
If $f\in\calH_{S}\left(n\right)$ is not necessarily a solution for
$E_{1}$, but the set $\Bad_{E_{1}}\left(f\right)$ is small, the
following corollary shows that $\lambda_{1}^{*}\lambda_{2}^{*}f$
is close to $f$.
\begin{cor}
\label{cor:pseudo-inverse}For $f\in\calH_{S}(n)$, 
\[
d^{\ham}\left(f,\lambda_{1}^{*}\lambda_{2}^{*}f\right)\leq\left(\sum_{i=1}^{d_{1}}\left|Q_{i}\right|\right)\frac{\left|\Bad_{E_{1}}\left(f\right)\right|}{n}\,\,\text{.}
\]
\end{cor}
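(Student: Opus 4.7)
The proof will be a short deduction from Lemma \ref{lem:pseudo-inverse} using the union bound and the fact that each $f(v)$ is a permutation. The plan is as follows.

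First, I will unpack the definition of $d^{\ham}$ on homomorphisms, writing
\[
d^{\ham}\left(f,\lambda_{1}^{*}\lambda_{2}^{*}f\right)=\sum_{i=1}^{d_{1}}\frac{1}{n}\left|\left\{ x\in\left[n\right]\mid f\left(s_{i}\right)x\neq\left(\lambda_{1}^{*}\lambda_{2}^{*}f\right)\left(s_{i}\right)x\right\} \right|\,\,\text{.}
\]
By the contrapositive of Lemma \ref{lem:pseudo-inverse}, for each $1\le i\le d_{1}$ the set of $x\in\left[n\right]$ on which $f\left(s_{i}\right)x$ and $\left(\lambda_{1}^{*}\lambda_{2}^{*}f\right)\left(s_{i}\right)x$ may disagree is contained in $\bigcup_{v\in Q_{i}}f\left(v\right)\Bad_{E_{1}}\left(f\right)$.

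Next, I will bound the size of this union. By the union bound,
\[
\left|\bigcup_{v\in Q_{i}}f\left(v\right)\Bad_{E_{1}}\left(f\right)\right|\le\sum_{v\in Q_{i}}\left|f\left(v\right)\Bad_{E_{1}}\left(f\right)\right|\,\,\text{,}
\]
and since $f\left(v\right)\in\Sym\left(n\right)$ is a permutation, each summand equals $\left|\Bad_{E_{1}}\left(f\right)\right|$. Hence the contribution of coordinate $i$ to the Hamming distance is at most $\left|Q_{i}\right|\cdot\left|\Bad_{E_{1}}\left(f\right)\right|/n$.

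Finally, I will sum over $i\in\left\{ 1,\dotsc,d_{1}\right\} $ to obtain the stated bound
\[
d^{\ham}\left(f,\lambda_{1}^{*}\lambda_{2}^{*}f\right)\le\left(\sum_{i=1}^{d_{1}}\left|Q_{i}\right|\right)\frac{\left|\Bad_{E_{1}}\left(f\right)\right|}{n}\,\,\text{.}
\]
There is no real obstacle here: once Lemma \ref{lem:pseudo-inverse} is available, the result is essentially a counting argument, and the only thing to remember is that $f\left(v\right)$ acts as a bijection on $\left[n\right]$ so the image of $\Bad_{E_{1}}\left(f\right)$ has the same cardinality as $\Bad_{E_{1}}\left(f\right)$.
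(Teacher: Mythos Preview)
Your proposal is correct and follows essentially the same approach as the paper: expand $d^{\ham}$ coordinatewise, invoke Lemma~\ref{lem:pseudo-inverse} to bound each coordinate's disagreement set by $\bigcup_{v\in Q_i} f(v)\Bad_{E_1}(f)$, apply the union bound, and use that each $f(v)$ is a bijection so $|f(v)\Bad_{E_1}(f)|=|\Bad_{E_1}(f)|$.
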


\begin{proof}
~
\begin{align*}
d^{\ham}\left(f,\lambda_{1}^{*}\lambda_{2}^{*}f\right) & =\sum_{i=1}^{d_{1}}d^{\ham}\left(f\left(s_{i}\right),\left(\lambda_{1}^{*}\lambda_{2}^{*}f\right)\left(s_{i}\right)\right)\\
 & \leq\sum_{i=1}^{d_{1}}\frac{1}{n}\left|\bigcup_{v\in Q_{i}}\left(f\left(v\right)\Bad_{E_{1}}\left(f\right)\right)\right| & \text{by Lemma \ref{lem:pseudo-inverse}}\\
 & \le\sum_{i=1}^{d_{1}}\sum_{v\in Q_{i}}\frac{1}{n}\left|f\left(v\right)\Bad_{E_{1}}\left(f\right)\right|\\
 & =\left(\sum_{i=1}^{d_{1}}\left|Q_{i}\right|\right)\frac{\left|\Bad_{E_{1}}\left(f\right)\right|}{n}\,\,\text{.}
\end{align*}
\end{proof}
Next, we study the interaction between $\lambda_{1}^{*}$ and the
Hamming metric.
\begin{lem}
\label{lem:pull-close-maps}Let $h,h'\in\calH_{T}\left(n\right)$.
Then $d^{\ham}\left(\lambda_{1}^{*}h,\lambda_{1}^{*}h'\right)\leq Cd^{\ham}\left(h,h'\right)$,
where $C=\left(\sum_{i=1}^{d_{1}}\left|\lambda_{1}\left(s_{i}\right)\right|\right)$.
\end{lem}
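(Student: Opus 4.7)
The plan is to reduce the bound to a telescoping argument along the letters of each word $\lambda_1(s_i)$. By definition,
\[
d^{\ham}(\lambda_1^* h, \lambda_1^* h') = \sum_{i=1}^{d_1} d^{\ham}\bigl(h(\lambda_1(s_i)),\, h'(\lambda_1(s_i))\bigr),
\]
so it suffices to bound, for each $i$, the distance between the permutations that $h$ and $h'$ assign to the fixed word $w_i := \lambda_1(s_i) \in F_T$ by $|w_i| \cdot d^{\ham}(h,h')$.

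The main technical step will be the following word-evaluation estimate: for any word $w = u_1 u_2 \cdots u_k$ over $T^{\pm}$ and any $h, h' \in \calH_T(n)$,
\[
d^{\ham}\bigl(h(w),h'(w)\bigr) \;\le\; \sum_{j=1}^{k} d^{\ham}\bigl(h(u_j),h'(u_j)\bigr).
\]
I would prove this by a telescoping argument: if $h(w)x \ne h'(w)x$, pick the largest $j$ for which $h(u_j u_{j+1} \cdots u_k)x \ne h'(u_j u_{j+1} \cdots u_k)x$; setting $y = h(u_{j+1}\cdots u_k)x = h'(u_{j+1}\cdots u_k)x$ then $h(u_j)y \ne h'(u_j)y$. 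Since $x \mapsto h(u_{j+1}\cdots u_k)x$ is a bijection, summing over $j$ and dividing by $n$ gives the bound.

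Next, I need to show that for any single letter $u \in T^{\pm}$ we have $d^{\ham}(h(u),h'(u)) \le d^{\ham}(h,h')$. For $u = t \in T$ this is immediate, since $d^{\ham}(h(t),h'(t))$ is one summand of $d^{\ham}(h,h')$. For $u = t^{-1}$, observe that $h(t^{-1})x \ne h'(t^{-1})x$ holds precisely on the image under $h(t^{-1})$ of $\{y : h(t)y \ne h'(t)y\}$, so the two disagreement sets have equal cardinality, and hence $d^{\ham}(h(t^{-1}),h'(t^{-1})) = d^{\ham}(h(t),h'(t)) \le d^{\ham}(h,h')$.

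Putting these pieces together:
\[
d^{\ham}(\lambda_1^* h, \lambda_1^* h') \le \sum_{i=1}^{d_1} \sum_{j=1}^{|\lambda_1(s_i)|} d^{\ham}\bigl(h(u_{i,j}),h'(u_{i,j})\bigr) \le \Bigl(\sum_{i=1}^{d_1} |\lambda_1(s_i)|\Bigr) d^{\ham}(h,h') = C\, d^{\ham}(h,h'),
\]
as desired. There is no real obstacle here; the only subtlety to get right is the telescoping and the identification $d^{\ham}(h(t^{-1}),h'(t^{-1})) = d^{\ham}(h(t),h'(t))$, both of which are straightforward bijection-counting arguments.
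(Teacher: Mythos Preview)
Your proof is correct and follows essentially the same telescoping argument as the paper's (which organizes it as a covering of the disagreement set of $w$ by sets indexed by the first position of disagreement along the suffixes of $w$). One minor slip: the set $\{x : h(t^{-1})x \neq h'(t^{-1})x\}$ is actually $h(t)\bigl(\{y : h(t)y \neq h'(t)y\}\bigr)$ rather than its image under $h(t^{-1})$, but since either map is a bijection your cardinality conclusion is unaffected.
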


\begin{proof}
For $w\in F_{T}$, let $A_{w}=\left\{ x\in\left[n\right]\mid h\left(w\right)x=h'\left(w\right)x\right\} $
and $D_{w}=\left[n\right]\setminus A_{w}$. Write $\suff_{j}\left(w\right)$
for the suffix of length $j$ of $w$ (e.g., $\suff_{2}\left(ab^{-1}c^{-1}\right)=b^{-1}c^{-1}$).
Write $\last_{j}\left(w\right)$ for the $j$-th letter of $w$, counting
from the end (e.g., $\last_{2}\left(ab^{-1}c^{-1}\right)=b^{-1}$).

For $1\leq i\leq d_{1}$, 
\begin{align*}
 & d^{\ham}\left(\left(\lambda_{1}^{*}h\right)\left(s_{i}\right),\left(\lambda_{1}^{*}h'\right)\left(s_{i}\right)\right)\\
= & \frac{1}{n}\left|D_{\lambda_{1}\left(s_{i}\right)}\right|\\
\le & \frac{1}{n}\left|\bigcup_{j=1}^{\left|\lambda_{1}\left(s_{i}\right)\right|}\left(A_{\suff_{j-1}\left(\lambda_{1}\left(s_{i}\right)\right)}\cap D_{\suff_{j}\left(\lambda_{1}\left(s_{i}\right)\right)}\right)\right|\\
 & \qquad\text{\qquad\qquad\qquad\qquad(since \ensuremath{D_{w}\subseteq\bigcup_{j=1}^{\left|w\right|}A_{\suff_{j-1}\left(w\right)}\cap D_{\suff_{j}\left(w\right)}} for \ensuremath{w\in F_{T}}})\\
\le & \frac{1}{n}\sum_{j=1}^{\left|\lambda_{1}\left(s_{i}\right)\right|}\left|A_{\suff_{j-1}\left(\lambda_{1}\left(s_{i}\right)\right)}\cap D_{\suff_{j}\left(\lambda_{1}\left(s_{i}\right)\right)}\right|\\
= & \frac{1}{n}\sum_{j=1}^{\left|\lambda_{1}\left(s_{i}\right)\right|}\left|A_{\suff_{j-1}\left(\lambda_{1}\left(s_{i}\right)\right)}\cap h\left(\suff_{j-1}\left(\lambda_{1}\left(s_{i}\right)\right)\right)^{-1}D_{\last_{j}\left(\lambda_{1}\left(s_{i}\right)\right)}\right|\\
 & \qquad\qquad\qquad\qquad\qquad\text{(since \ensuremath{x\in A_{w}}}\text{ implies \ensuremath{x\in D_{sw}\Leftrightarrow x\in h\left(w\right){}^{-1}D_{s}} for \ensuremath{w\in F_{T},s\in S^{\pm}})}\\
\leq & \sum_{j=1}^{\left|\lambda_{1}\left(s_{i}\right)\right|}\underbrace{\frac{1}{n}\left|D_{\last_{j}\left(\lambda_{1}\left(s_{i}\right)\right)}\right|}_{\leq d^{\ham}\left(h,h'\right)}\\
\leq & \left|\lambda_{1}\left(s_{i}\right)\right|d^{\ham}\left(h,h'\right)\,\,\text{.}
\end{align*}

Thus
\[
d^{\ham}\left(\lambda_{1}^{*}h,\lambda_{1}^{*}h'\right)=\sum_{i=1}^{d_{1}}d^{\ham}\left(\left(\lambda_{1}^{*}h\right)\left(s_{i}\right),\left(\lambda_{1}^{*}h'\right)\left(s_{i}\right)\right)\leq\left(\sum_{i=1}^{d_{1}}\left|\lambda_{1}\left(s_{i}\right)\right|\right)d^{\ham}\left(h,h'\right)\,\,\text{.}
\]
\end{proof}
\begin{lem}
\label{lem:DistancePreservingBothSides}For $f\in\calH_{S}\left(n\right)$
and $h\in\calH_{T}\left(n\right)$,
\[
d^{\ham}\left(f,\lambda_{1}^{*}h\right)\leq C_{1}d^{\ham}\left(\lambda_{2}^{*}f,h\right)+C_{2}\frac{\left|\Bad_{E_{1}}\left(f\right)\right|}{n}\,\,\text{,}
\]
where $C_{1}=\sum_{i=1}^{d_{1}}\left|\lambda_{1}\left(s_{i}\right)\right|$
and $C_{2}=\sum_{i=1}^{d_{1}}\left|Q_{i}\right|$.
\end{lem}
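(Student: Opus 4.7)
The plan is to derive this from the triangle inequality by inserting the intermediate point $\lambda_{1}^{*}\lambda_{2}^{*}f\in\calH_{S}\left(n\right)$, namely
\[
d^{\ham}\left(f,\lambda_{1}^{*}h\right)\leq d^{\ham}\left(f,\lambda_{1}^{*}\lambda_{2}^{*}f\right)+d^{\ham}\left(\lambda_{1}^{*}\lambda_{2}^{*}f,\lambda_{1}^{*}h\right)\,\,\text{,}
\]
and then bounding each of the two summands on the right-hand side by one of the previously established lemmas.

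For the first summand, Corollary \ref{cor:pseudo-inverse} yields exactly $d^{\ham}\left(f,\lambda_{1}^{*}\lambda_{2}^{*}f\right)\leq C_{2}\frac{\left|\Bad_{E_{1}}\left(f\right)\right|}{n}$, which is the error term that reflects the failure of $\lambda_{1}^{*}\lambda_{2}^{*}$ to be the identity on arbitrary $f$ (it is the identity only on $\HSOL_{E_{1}}\left(n\right)$ by Corollary \ref{cor:InverseOnSolutions}). For the second summand, we use that $\lambda_{1}^{*}$ is Lipschitz with constant $C_{1}$ by Lemma \ref{lem:pull-close-maps}, applied to the two homomorphisms $\lambda_{2}^{*}f$ and $h$ in $\calH_{T}\left(n\right)$, giving $d^{\ham}\left(\lambda_{1}^{*}\lambda_{2}^{*}f,\lambda_{1}^{*}h\right)\leq C_{1}d^{\ham}\left(\lambda_{2}^{*}f,h\right)$.

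Summing the two bounds delivers the claim. I do not expect any real obstacle: the lemma is a direct packaging of Corollary \ref{cor:pseudo-inverse} and Lemma \ref{lem:pull-close-maps} via the triangle inequality, with the two constants $C_{1}$ and $C_{2}$ coming out of the already-computed Lipschitz constant of $\lambda_{1}^{*}$ and the already-computed ``pseudo-inverse'' defect constant $\sum_{i=1}^{d_{1}}\left|Q_{i}\right|$, respectively. The only thing to verify is that the choice of the intermediate point is the correct one, which it is because $\lambda_{1}^{*}$ is exactly the map under which the right-hand quantity $d^{\ham}\left(\lambda_{2}^{*}f,h\right)$ pulls back to something close to $d^{\ham}\left(f,\lambda_{1}^{*}h\right)$, up to the inverse-like error measured by $\Bad_{E_{1}}\left(f\right)$.
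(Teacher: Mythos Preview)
Your proposal is correct and follows exactly the paper's own proof: triangle inequality with the intermediate point $\lambda_{1}^{*}\lambda_{2}^{*}f$, then Corollary \ref{cor:pseudo-inverse} for the first summand and Lemma \ref{lem:pull-close-maps} (applied to $\lambda_{2}^{*}f,h\in\calH_{T}(n)$) for the second. Nothing to add.
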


\begin{proof}
By the triangle inequality, Corollary \ref{cor:pseudo-inverse} and
Lemma \ref{lem:pull-close-maps},

\begin{align*}
d^{\ham}\left(f,\lambda_{1}^{*}h\right) & \leq d^{\ham}\left(f,\lambda_{1}^{*}\lambda_{2}^{*}f\right)+d^{\ham}\left(\lambda_{1}^{*}\lambda_{2}^{*}f,\lambda_{1}^{*}h\right)\\
 & \leq\left(\sum_{i=1}^{d_{1}}\left|Q_{i}\right|\right)\frac{\left|\Bad_{E_{1}}\left(f\right)\right|}{n}+\left(\sum_{i=1}^{d_{1}}\left|\lambda_{1}\left(s_{i}\right)\right|\right)d^{\ham}\left(\lambda_{2}^{*}f,h\right)\,\,\text{.}
\end{align*}
\end{proof}

\subsubsection{The $P$-local metric}

For a probability distribution $\theta$ over a set $\Omega$ and
a function $\varphi\colon\Omega\to\Omega'$, write $\varphi_{*}\theta$
for the distribution of $\varphi\left(x\right)$ when $x\sim\theta$.
\begin{lem}
\label{lem:dTVInformationProcessing}Let $\theta$ and $\theta'$
be probability distributions over the finite sets $\Omega$ and $\Omega'$,
respectively, and take a function $\varphi\colon\Omega\to\Omega'$.
Then

\[
d_{\TV}\left(\varphi_{*}\theta,\varphi_{*}\theta'\right)\le d_{\TV}\left(\theta,\theta'\right)\,\,\text{.}
\]
\end{lem}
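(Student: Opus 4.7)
The plan is to apply the standard data-processing inequality argument for total-variation distance, written directly from the definition given in the introduction. I would work with the explicit formula
\[
d_{\TV}\left(\varphi_*\theta,\varphi_*\theta'\right)=\frac{1}{2}\sum_{y\in\Omega'}\left|\varphi_*\theta(y)-\varphi_*\theta'(y)\right|
\]
and expand each pushed-forward probability as a sum over the fiber: $\varphi_*\theta(y)=\sum_{x\in\varphi^{-1}(y)}\theta(x)$, and likewise for $\theta'$.

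First I would use the triangle inequality inside each fiber to get
\[
\left|\varphi_*\theta(y)-\varphi_*\theta'(y)\right|\le\sum_{x\in\varphi^{-1}(y)}\left|\theta(x)-\theta'(x)\right|.
\]
Then summing over $y\in\Omega'$ and using the fact that the fibers $\{\varphi^{-1}(y)\}_{y\in\Omega'}$ partition $\Omega$, the right-hand side collapses to $\sum_{x\in\Omega}|\theta(x)-\theta'(x)|=2\,d_{\TV}(\theta,\theta')$, which delivers the desired inequality after dividing by $2$.

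There is no real obstacle here; the only thing to be careful about is handling the case where $\varphi$ is not injective (so the fibers can have size greater than one), which is precisely where the triangle inequality loses information and produces the monotonicity. An equivalent, slightly slicker presentation would invoke the dual description $d_{\TV}(\mu,\nu)=\max_{A}|\mu(A)-\nu(A)|$ together with the identity $\varphi_*\theta(B)-\varphi_*\theta'(B)=\theta(\varphi^{-1}(B))-\theta'(\varphi^{-1}(B))$ for $B\subseteq\Omega'$, but the direct fiberwise computation above is shorter and stays closer to the definition already used in the paper.
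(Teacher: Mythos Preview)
Your proposal is correct and matches the paper's own proof essentially line for line: expand $\varphi_*\theta(y)$ and $\varphi_*\theta'(y)$ as sums over the fiber $\varphi^{-1}(y)$, apply the triangle inequality inside each fiber, and collapse the double sum using that the fibers partition $\Omega$.
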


\begin{proof}
By the triangle inequality,
\begin{align*}
d_{\TV}\left(\varphi_{*}\theta,\varphi_{*}\theta'\right) & =\frac{1}{2}\sum_{y\in\Omega'}\left|\varphi_{*}\theta\left(y\right)-\varphi_{*}\theta'\left(y\right)\right|=\frac{1}{2}\sum_{y\in\Omega'}\left|\sum_{\substack{x\in\Omega\\
f(x)=y
}
}\theta\left(x\right)-\sum_{\substack{x\in\Omega\\
f(x)=y
}
}\theta'\left(x\right)\right|\\
 & \le\frac{1}{2}\sum_{y\in\Omega'}\sum_{\substack{x\in\Omega\\
f(x)=y
}
}\left|\theta\left(x\right)-\theta'\left(x\right)\right|=\frac{1}{2}\sum_{x\in\Omega}\left|\theta\left(x\right)-\theta'\left(x\right)\right|=d_{\TV}\left(\theta,\theta'\right).
\end{align*}
\end{proof}
In the rest of the section we make frequent use of the notation $N_{G,P}$
(see Definition \ref{def:stab-and-Nsigma-P}).
\begin{cor}
\label{cor:d_TVIncreasingInP}Let $S$ be a finite alphabet and take
$P,P^{*}\in\FinSubsets\left(F_{S}\right)$ such that $P^{*}\subseteq P$.
Then
\[
d_{\TV}\left(N_{G,P^{*}},N_{G',P^{*}}\right)\le d_{\TV}\left(N_{G,P},N_{G',P}\right)
\]
for all $n\in\NN$ and $G,G'\in\calG_S(n)$. 
\end{cor}

\begin{proof}
The claim follows from Lemma \ref{lem:dTVInformationProcessing} because
$N_{G,P^{*}}=\varphi_{*}N_{G,P}$
and $N_{G',P^{*}}=\varphi_{*}N_{G',P}$
for $\varphi\colon\Subsets\left(P\right)\to\Subsets\left(P^{*}\right)$
given by $\varphi\left(M\right)=M\cap P^{*}$.
\end{proof}
\begin{lem}
\label{lem:lambda2^*Preserevesd_P}Let $P_{2}\in\FinSubsets\left(F_{T}\right)$.
Then 
\[
d_{\TV}\left(N_{\lambda_{2}^{*}f,P_{2}},N_{\lambda_{2}^{*}f',P_{2}}\right)\le d_{\TV}\left(N_{f,\lambda_{2}\left(P_{2}\right)},N_{f',\lambda_{2}\left(P_{2}\right)}\right)
\]
for all $f,f'\in\calH_{S}\left(n\right)$.
\end{lem}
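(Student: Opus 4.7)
The plan is to exhibit the distribution on the left-hand side as a pushforward of the distribution on the right-hand side under a deterministic map, and then invoke Lemma \ref{lem:dTVInformationProcessing} (the data-processing-style inequality just proved). No other ingredients are needed.

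First I would unpack $\stab_{\lambda_2^*f}(x)$. Since $(\lambda_2^*f)(w)=f(\lambda_2(w))$ for every $w\in F_T$, we have
\[
\stab_{\lambda_2^*f}(x)=\{w\in F_T\mid f(\lambda_2(w))x=x\}=\lambda_2^{-1}\bigl(\stab_f(x)\bigr).
\]
Intersecting with $P_2$ gives
\[
\stab_{\lambda_2^*f}(x)\cap P_2=\{w\in P_2\mid \lambda_2(w)\in\stab_f(x)\cap\lambda_2(P_2)\},
\]
where the last intersection with $\lambda_2(P_2)$ is free because $\lambda_2(w)\in\lambda_2(P_2)$ whenever $w\in P_2$.

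Next I would define the deterministic map
\[
\varphi\colon\Subsets(\lambda_2(P_2))\longrightarrow\Subsets(P_2),\qquad \varphi(M)=\{w\in P_2\mid\lambda_2(w)\in M\}.
\]
The computation above shows that $\varphi(\stab_f(x)\cap\lambda_2(P_2))=\stab_{\lambda_2^*f}(x)\cap P_2$ for every $x\in[n]$, and the same identity holds with $f$ replaced by $f'$. Therefore, by the very definition of $N_{\cdot,\cdot}$ as the law of the stabilizer-intersection under $x\sim U([n])$,
\[
N_{\lambda_2^*f,P_2}=\varphi_*N_{f,\lambda_2(P_2)}\qquad\text{and}\qquad N_{\lambda_2^*f',P_2}=\varphi_*N_{f',\lambda_2(P_2)}.
\]

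Finally, Lemma \ref{lem:dTVInformationProcessing} applied to $\varphi$ yields
\[
d_{\TV}\bigl(N_{\lambda_2^*f,P_2},N_{\lambda_2^*f',P_2}\bigr)=d_{\TV}\bigl(\varphi_*N_{f,\lambda_2(P_2)},\varphi_*N_{f',\lambda_2(P_2)}\bigr)\le d_{\TV}\bigl(N_{f,\lambda_2(P_2)},N_{f',\lambda_2(P_2)}\bigr),
\]
which is the desired inequality. There is no real obstacle here; the only thing to be careful about is the bookkeeping that $\lambda_2(P_2)\subset F_S$ is a finite subset (since $P_2$ is finite), so $N_{f,\lambda_2(P_2)}$ is well defined as a distribution on $\Subsets(\lambda_2(P_2))$, and that the function $\varphi$ really does realize the intersection-with-$P_2$ operation after identifying $w$ with $\lambda_2(w)$.
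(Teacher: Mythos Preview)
Your proof is correct and follows essentially the same approach as the paper: define the same map $\varphi\colon\Subsets(\lambda_2(P_2))\to\Subsets(P_2)$, $\varphi(M)=\{w\in P_2\mid\lambda_2(w)\in M\}$, verify that $\stab_{\lambda_2^*h}(x)\cap P_2=\varphi(\stab_h(x)\cap\lambda_2(P_2))$ via $\stab_{\lambda_2^*h}(x)=\lambda_2^{-1}(\stab_h(x))$, and then apply Lemma~\ref{lem:dTVInformationProcessing}.
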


\begin{proof}
Define $\varphi\colon\Subsets\left(\lambda_{2}\left(P_{2}\right)\right)\to\Subsets\left(P_{2}\right)$
by 
\begin{equation}
\varphi\left(V\right)=\left\{ w\in P_{2}\mid\lambda_{2}\left(w\right)\in V\right\} =\lambda_{2}^{-1}\left(V\right)\cap P_{2}\quad\forall V\subset\lambda_{2}\left(P_{2}\right)\text{.}\label{eq:lambda2^*Prserervesd_PDefinitionOfPhi}
\end{equation}
By Lemma \ref{lem:dTVInformationProcessing}, it suffices to show
that $N_{\lambda_{2}^{*}f,P_{2}}=\varphi_{*}N_{f,\lambda_{2}\left(P_{2}\right)}$
and $N_{\lambda_{2}^{*}f',P_{2}}=\varphi_{*}N_{f',\lambda_{2}\left(P_{2}\right)}$.
Thus, it is enough to prove that $\text{\ensuremath{\stab}}_{\lambda_{2}^{*}h}\left(x\right)\cap P_{2}=\varphi\left(\stab_{h}\left(x\right)\cap\lambda_{2}\left(P_{2}\right)\right)$
for all $h\in\calH_{S}\left(n\right)$ and $x\in\left[n\right]$.

First, $\stab_{\lambda_{2}^{*}h}\left(x\right)=\lambda_{2}^{-1}\left(\stab_{h}\left(x\right)\right)$
because $\left(\lambda_{2}^{*}h\right)\left(w\right)x=x\iff h\left(\lambda_{2}\left(w\right)\right)x=x$
for $w\in F_{T}$. Thus,
\begin{align*}
\text{\ensuremath{\stab}}_{\lambda_{2}^{*}h}\left(x\right)\cap P_{2} & =\lambda_{2}^{-1}\left(\stab_{h}\left(x\right)\right)\cap P_{2}\\
 & =\lambda_{2}^{-1}\left(\stab_{h}\left(x\right)\cap\lambda_{2}\left(P_{2}\right)\right)\cap P_{2}\\
 & =\varphi\left(\stab_{h}\left(x\right)\cap\lambda_{2}\left(P_{2}\right)\right)\,\,\text{.}
\end{align*}
\end{proof}

\subsection{Proof of Proposition \ref{prop:DistinguishabilityIsAGroupProperty}}

Suppose that $E_{2}$ is $\left(P_{2}\left(\eps\right),\delta_{2}\left(\eps\right)\right)$-statistically-distinguishable.
We claim that $E_{1}$ is $\left(P_{1}\left(\eps\right),\delta_{1}\left(\eps\right)\right)$-statistically-distinguishable,
where

\[
P_{1}\left(\eps\right)=\lambda_{2}\left(P_{2}\left(\eps\right)\right)\cup R_{E_{1}}
\]
and 
\[
\delta_{1}\left(\eps\right)=\min\left(\delta_{2}\left(\frac{\eps}{2C_{1}}\right),\frac{\eps}{2C_{2}}\right)\,\,\text{,}
\]
for $C_{1}$ and $C_{2}$ as in Lemma \ref{lem:DistancePreservingBothSides}.
First, note that 
\begin{align*}
\TotalSize\left(P_{1}\left(\eps\right)\right) & =\sum_{w\in P_{1}\left(\eps\right)}\left|w\right|\\
 & \le\TotalSize\left(R_{E_{1}}\right)+\sum_{w\in P_{2}\left(\eps\right)}\left|\lambda_{2}\left(w\right)\right|\\
 & \le\TotalSize\left(R_{E_{1}}\right)+\sum_{w\in P_{2}\left(\eps\right)}\underbrace{\max_{t\in T}\left|\lambda_{2}\left(t\right)\right|}_{\eqqcolon C_{3}}\left|w\right|\\
 & =\TotalSize\left(R_{E_{1}}\right)+C_{3}\cdot\TotalSize\left(P_{2}\left(\eps\right)\right)\,\,\text{.}
\end{align*}
Hence (\ref{eq:group-property-statement-1}) and (\ref{eq:group-property-statement-2})
are satisfied. Take $\eps>0$, $f\in\calH_{S}\left(n\right)$ and
$f'\in\HSOL_{E_{1}}\left(n\right)$ such that 
\[
d_{\TV}\left(N_{f,P_{1}\left(\eps\right)},N_{f',P_{1}\left(\eps\right)}\right)<\delta_{1}\left(\eps\right)\,\,\text{.}
\]
It suffices to show that $f\in\HSOL_{E_{1}}^{<\eps}\left(n\right)$.
Let $h$ be an element of $\HSOL_{E_{2}}\left(n\right)$ that minimizes
$d^{\ham}\left(\lambda_{2}^{*}f,h\right)$. Then $\lambda_{1}^{*}h\in\HSOL_{E_{1}}\left(n\right)$
by Corollary \ref{cor:InverseOnSolutions}, and thus it suffices to
show that $d^{\ham}\left(f,\lambda_{1}^{*}h\right)<\eps$.

Now, $\lambda_{2}^{*}f'\in\HSOL_{E_{2}}\left(n\right)$ by Corollary
\ref{cor:InverseOnSolutions}. Furthermore, 
\begin{align*}
d_{\TV}\left(N_{\lambda_{2}^{*}f,P_{2}\left(\eps\right)},N_{\lambda_{2}^{*}f',P_{2}\left(\eps\right)}\right) & \le d_{\TV}\left(N_{f,\lambda_{2}\left(P_{2}\left(\eps\right)\right)},N_{f',\lambda_{2}\left(P_{2}\left(\eps\right)\right)}\right) & \text{by Lemma \ref{lem:lambda2^*Preserevesd_P}}\\
 & \le d_{\TV}\left(N_{f,P_{1}\left(\eps\right)},N_{f',P_{1}\left(\eps\right)}\right) & \text{by Corollary \ref{cor:d_TVIncreasingInP}}\\
 & <\delta_{1}\left(\eps\right)\le\delta_{2}\left(\frac{\eps}{2C_{1}}\right)\,\,\text{,}
\end{align*}
and thus $\lambda_{2}^{*}f\in\HSOL_{E_{2}}^{<\frac{\eps}{2C_{1}}}\left(n\right)$
since $E_{2}$ is $\left(P_{2}\left(\eps\right),\delta_{2}\left(\eps\right)\right)$-statistically-distinguishable.
In other words, 
\begin{equation}
d^{\ham}\left(\lambda_{2}^{*}f,h\right)<\frac{\eps}{2C_{1}}\,\,\text{.}\label{eq:hh'Distance}
\end{equation}

Now,
\begin{align}
\frac{\left|\Bad_{E_{1}}\left(f\right)\right|}{n} & =d_{\TV}\left(N_{f,R_{E_{1}}},N_{f',R_{E_{1}}}\right) & \text{by Remark \ref{rem:LocalDefectAsNeighborhoodTVDistance}}\nonumber \\
 & \le d_{\TV}\left(N_{f,P_{1}\left(\eps\right)},N_{f',P_{1}(\eps)}\right) & \text{by Corollary \ref{cor:d_TVIncreasingInP}}\nonumber \\
 & <\delta_{1}\left(\eps\right)\le\frac{\eps}{2C_{2}}\,\,\text{.}\label{eq:fLocalDefectBound}
\end{align}
Hence,
\begin{align*}
d^{\ham}\left(f,\lambda_{1}^{*}h\right) & \le C_{1}d^{\ham}\left(\lambda_{2}^{*}f,h\right)+C_{2}\frac{\left|\Bad_{E_{1}}\left(f\right)\right|}{n} & \text{by Lemma \ref{lem:DistancePreservingBothSides}}\\
 & <\eps\,\,\text{.} & \text{by \eqref{eq:hh'Distance} and \eqref{eq:fLocalDefectBound}}
\end{align*}

\section{Analysis and universality of the LSM Algorithm\label{sec:LSM}}

Let $S=\left\{ s_{1},\dotsc,s_{d}\right\} $ be an alphabet, and fix
a system of relations $E$ over $S^{\pm}$. The goal of this section
is to prove Theorem \ref{thm:LSMUniversal}, which follows immediately
from the following two propositions.
\begin{prop}
	[Statistical distinguishability implies testability]\label{prop:LSMPositive}If
	$E$ is $\left(P,\delta\right)$-statistically-distinguishable then
	$\eps\mapsto\LSM_{k\left(\eps\right),P(\eps),\frac{\delta\left(\eps\right)}{2}}^{E}$
	is a family of testers for $E$, where $k\left(\eps\right)=\left\lceil \frac{100\cdot2^{|P\left(\eps\right)|}}{\delta\left(\eps\right)^{2}}\right\rceil $.
\end{prop}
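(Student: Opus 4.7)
The plan is to establish a concentration lemma asserting that, with the specified $k(\eps)$, the empirical distribution $N_{\Emp}$ produced by $\LSM_{k(\eps),P(\eps),\delta(\eps)/2}^{E}$ is within total-variation distance $\delta(\eps)/2$ of the true $P(\eps)$-local statistics $N_{\overline{\sigma},P(\eps)}$ with probability at least $0.99$. Given this concentration, completeness and $\eps$-soundness follow in a few lines from the definition of statistical distinguishability and the triangle inequality for $d_{\TV}$.

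For the concentration lemma, I would proceed as follows. Since $x_{1},\dotsc,x_{k}$ are sampled independently and uniformly from $[n]$, for each $Q\in\Subsets(P(\eps))$ the empirical frequency $N_{\Emp}(Q)$ is an average of $k$ i.i.d.\ Bernoulli indicators with mean $p_{Q}\coloneqq N_{\overline{\sigma},P(\eps)}(Q)$, so $\Var(N_{\Emp}(Q))\leq p_{Q}/k$. Summing over $Q$ yields $\EE\bigl[\sum_{Q}(N_{\Emp}(Q)-p_{Q})^{2}\bigr]\leq 1/k$. Cauchy--Schwarz on the $2^{|P(\eps)|}$-dimensional difference vector gives
\[
d_{\TV}\bigl(N_{\Emp},N_{\overline{\sigma},P(\eps)}\bigr)^{2}\;\leq\;\frac{2^{|P(\eps)|}}{4}\sum_{Q}(N_{\Emp}(Q)-p_{Q})^{2},
\]
and hence $\EE[d_{\TV}^{2}]\leq 2^{|P(\eps)|}/(4k)$. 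The choice $k(\eps)=\lceil 100\cdot 2^{|P(\eps)|}/\delta(\eps)^{2}\rceil$ makes this at most $\delta(\eps)^{2}/400$, and Markov's inequality then yields $\Pr[d_{\TV}>\delta(\eps)/2]=\Pr[d_{\TV}^{2}>\delta(\eps)^{2}/4]\leq 1/100$, as required.

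Granted the lemma, completeness is immediate: if $\overline{\sigma}\in\SOL_{E}(n)$, then with probability at least $0.99$ one has $d_{\TV}(N_{\Emp},N_{\overline{\sigma},P(\eps)})\leq\delta(\eps)/2$, in which case the minimum in (\ref{eq:LSMCondition}), witnessed by $\overline{\tau}=\overline{\sigma}$, is at most $\delta(\eps)/2$ and the machine accepts. For $\eps$-soundness, suppose $\overline{\sigma}\in\SOL_{E}^{\geq\eps}(n)$. Statistical distinguishability gives $d_{\TV}(N_{\overline{\sigma},P(\eps)},N_{\overline{\tau},P(\eps)})\geq\delta(\eps)$ for every $\overline{\tau}\in\SOL_{E}(n)$; combining this with the good event $d_{\TV}(N_{\Emp},N_{\overline{\sigma},P(\eps)})\leq\delta(\eps)/2$ (of probability at least $0.99$) via the triangle inequality yields $d_{\TV}(N_{\Emp},N_{\overline{\tau},P(\eps)})\geq\delta(\eps)/2$ for every $\overline{\tau}\in\SOL_{E}(n)$, so the condition in (\ref{eq:LSMCondition}) fails and the machine rejects.

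The only substantive step is the concentration lemma; everything else is bookkeeping. The factor $2^{|P(\eps)|}$ in the sample count is exactly the price paid for converting an $\ell^{2}$-type variance estimate into the $\ell^{1}$-type total-variation bound via Cauchy--Schwarz on a space of dimension $2^{|P(\eps)|}$. I do not foresee a real obstacle; a sharper concentration inequality (of DKW type) could be substituted if one wished to improve the dependence on $|P(\eps)|$, but the bound as stated suffices for this proposition.
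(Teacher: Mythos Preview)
Your proof is correct and essentially identical to the paper's: the same variance computation summed over $\Subsets(P(\eps))$, Cauchy--Schwarz to pass from $\ell^{2}$ to $\ell^{1}$, and Markov's inequality to obtain the concentration $\Pr[d_{\TV}(N_{\Emp},N_{\overline{\sigma},P(\eps)})\geq\delta(\eps)/2]\leq 0.01$, followed by the same triangle-inequality argument for completeness and soundness. One cosmetic fix: for soundness you should take the good event to be the \emph{strict} inequality $d_{\TV}(N_{\Emp},N_{\overline{\sigma},P(\eps)})<\delta(\eps)/2$ (your Markov bound gives this just as well), since from the non-strict version the triangle inequality only yields $d_{\TV}(N_{\Emp},N_{\overline{\tau},P(\eps)})\geq\delta(\eps)/2$, which does not rule out equality and hence does not force the acceptance condition $\min\leq\delta(\eps)/2$ to fail; the paper handles exactly this by working with the strict inequality.
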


In particular, by the query-complexity analysis in Section \ref{subsec:intro-LSM}, Proposition
\ref{prop:LSMPositive} implies that $E$ is $O\left(\frac{\sum_{w\in P\left(\eps\right)}|w|\cdot2^{|P|}}{\delta\left(\eps\right)^{2}}\right)$-testable. We prove Proposition \ref{prop:LSMPositive} by reduction to hypothesis testing.
\begin{prop}
	[Testability implies statistical distinguishability]\label{prop:LSMNegative}Suppose
	that $E$ is $q\left(\eps\right)$-testable, $q\colon\RR_{>0}\to\NN$.
	Then, $E$ is $\left(B_{2q\left(\eps\right)}\cup E,q\left(\eps\right){}^{-c\cdot q\left(\eps\right)}\right)$-statistically-distinguishable,
	where $c>0$ is a universal constant.
\end{prop}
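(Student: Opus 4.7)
The plan is to show that the acceptance probability of any $q$-query tester is a Lipschitz function of the local statistics vector $N_{\overline\sigma, B_{2q}}$ with Lipschitz constant $q^{O(q)}$. Combined with the completeness and soundness of an $(\eps, q(\eps))$-tester, this will force $d_{\TV}(N_{\overline\sigma, B_{2q(\eps)}}, N_{\overline\tau, B_{2q(\eps)}}) \geq q(\eps)^{-c q(\eps)}$ whenever $\overline\sigma \in \SOL_E(n)$ and $\overline\tau \in \SOL_E^{\geq \eps}(n)$, and Corollary \ref{cor:d_TVIncreasingInP} then upgrades this bound to $P = B_{2q(\eps)} \cup E$.

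The first step is symmetrization. Given an $(\eps, q(\eps))$-tester $\calM_\eps$ for $E$, define $\calM'_\eps$ to first sample a uniformly random relabeling $\pi \in \Sym(n)$ and then run $\calM_\eps$ on $(\pi \sigma_1 \pi^{-1}, \dotsc, \pi \sigma_d \pi^{-1})$. Since $\SOL_E(n)$, $\SOL_E^{\geq\eps}(n)$ and each $N_{\overline\sigma, P}$ are invariant under simultaneous conjugation of the coordinates of $\overline\sigma$, the machine $\calM'_\eps$ is again a $q(\eps)$-query $(\eps, q(\eps))$-tester, and its acceptance probability $p(\overline\sigma)$ depends only on the isomorphism class of the $S$-graph $G_{\overline\sigma}$.

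The second step is to express $p(\overline\sigma)$ in terms of local statistics. A symmetric adaptive $q$-query algorithm proceeds as follows: at each step, it either samples a uniformly random fresh vertex of $[n]$, or queries an edge from a vertex already encountered (possibly depending on previous answers). The cumulative information after $q$ queries is a rooted subgraph of $G_{\overline\sigma}$ with at most $q$ edges, and the acceptance is a deterministic function of its isomorphism class. Since the $q$-ball around a vertex $x$ is fully encoded by $\stab_{\overline\sigma}(x) \cap B_{2q}$ (coincidences in the ball correspond to elements of length $\leq 2q$ in the stabilizer), the probability that the transcript has a given isomorphism class is a polynomial in the entries of $N_{\overline\sigma, B_{2q}}$, up to a vanishing collision error of order $q^2 (2|S|)^{2q}/n$ from two sampled roots falling into a common $q$-neighborhood. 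A counting argument bounds the number of rooted $S$-graphs with $\leq q$ edges by $q^{O(q)}$, yielding
\[
|p(\overline\sigma) - p(\overline\tau)| \leq q^{O(q)} \cdot d_{\TV}\!\left(N_{\overline\sigma, B_{2q}}, N_{\overline\tau, B_{2q}}\right) + o_n(1).
\]

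Applying this to $\overline\sigma \in \SOL_E(n)$ and $\overline\tau \in \SOL_E^{\geq \eps}(n)$ gives $p(\overline\sigma) - p(\overline\tau) \geq 0.98$, hence $d_{\TV}(N_{\overline\sigma, B_{2q}}, N_{\overline\tau, B_{2q}}) \geq q(\eps)^{-c q(\eps)}$ once $n$ exceeds some threshold $n_0$ depending on $q(\eps)$ and $|S|$; only finitely many non-isomorphic instances of $(\overline\sigma, \overline\tau)$ remain with $n < n_0$, and they can be handled by adjusting the constant. The main technical obstacle is the rigorous verification that the transcript distribution of a symmetric adaptive $q$-query algorithm is essentially a polynomial function of $N_{\overline\sigma, B_{2q}}$ alone, which requires careful tracking of how each successive adaptive query reveals only local structure within $q$-balls around sampled roots, and the careful treatment of the collision regime.
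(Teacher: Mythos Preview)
Your overall route---average over a random vertex relabeling so that the acceptance probability becomes a function of the isomorphism class of $G_{\overline\sigma}$, then show it is approximately $q^{O(q)}$-Lipschitz in $N_{\overline\sigma,B_{2q}}$---is the paper's route as well. Where you symmetrize the tester directly, the paper first passes through Yao's minimax (Lemma~\ref{lem:minimax}) to reduce to a single \emph{deterministic} machine; conditioning your symmetrized tester on its internal coins amounts to the same thing.

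There are, however, two concrete gaps in your execution.

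First, the assertion that ``a counting argument bounds the number of rooted $S$-graphs with $\leq q$ edges by $q^{O(q)}$'' is false: each edge carries a label from $S$, so the number of isomorphism classes is at least $(2|S|)^q$, and your Lipschitz constant would then depend on $|S|$, contradicting the universality of $c$. The paper sidesteps this by never counting all partial $S$-graphs. For a fixed deterministic $q$-query machine $\calM$, only the transcripts $H_{\calM,G}$ that can actually arise are relevant; Lemma~\ref{lem:BoundedNumberOfRuns} stratifies them by $r(H)=|V(H)|-\#\{\text{components}\}$ and bounds $|\calR_{\calM,r}(n)|\leq \binom{q}{r}\,n^{r}(2q)^{q-r}$---no $|S|$, because the edge labels are chosen by the machine rather than free. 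Meanwhile Lemma~\ref{lem:ProbOfSubgraph} gives $\Pr_\pi[H\subset G\pi]\approx n^{-r(H)}\cdot(\text{product of local-statistic entries})$, and the factors $n^{\pm r}$ cancel in the sum over $H$. Your ``polynomial in $N$'' and ``collision error'' intuitions are pointing at the right phenomena, but this cancellation structure is what makes the final bound $|S|$-free, and it is invisible from a crude count of subgraph types.

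Second, your handling of small $n$ (``finitely many non-isomorphic instances, adjust the constant'') yields only an $\eps$-dependent positive lower bound, not a universal one. The paper's fix is explicit: since $\overline\tau\notin\SOL_E(n)$, some relation in $E$ fails at some point, so already $d_{\TV}(N_{\overline\sigma,E},N_{\overline\tau,E})\geq 1/n$; and since the large-$n$ threshold is itself $q(\eps)^{O(q(\eps))}$ with universal implied constant, this gives $d_{\TV}\geq q(\eps)^{-O(q(\eps))}$. This is precisely why $E$ is adjoined to $B_{2q(\eps)}$ in the statement---to handle the small-$n$ regime---not, as you suggest, merely as a cosmetic upgrade via Corollary~\ref{cor:d_TVIncreasingInP}.
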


Here $B_{k}$ denotes the ball of radius $k$ in $F_{S}$, namely,
\[
B_{k}=\left\{ w\in F_{S}\mid\left|w\right|\le k\right\} \,\,\text{.}
\]
To prove Proposition \ref{prop:LSMNegative}, we first use a variation of \emph{Yao's minimax inequality} to reduce testability to a statement about deterministic (rather than randomized) testers (Lemma \ref{lem:minimax}). We then formalize the intuitive idea that runs of a determinstic tester of small query complexity only ``care about'' a small part of the input $S$-graph (Lemma \ref{lem:SufficientConditionForMToHaveSameResult}). This allows us to relate these runs to statistical distinguishability, which is a local notion.

\subsection{Proof of Proposition \ref{prop:LSMPositive}}

Assume that $E$ is $\left(P,\delta\right)$-statistically-distinguishable,
$P\colon\RR_{>0}\to\FinSubsets\left(F_{S}\right)$, $\delta\colon\RR_{>0}\to\RR_{>0}$.
Fix $\eps_{0}>0$ and denote $P_{0}=P\left(\eps_{0}\right)$ and $\delta_{0}=\delta\left(\eps_{0}\right)$.
Let $k=\left\lceil \frac{100\cdot2^{|P_{0}|}}{\delta_{0}{}^{2}}\right\rceil $.
Fix $G\in\calG_{S}\left(n\right)$ such that either $G\in\GSOL_{E}$$\left(n\right)$
or $G\in\GSOL_{E}^{\geq\eps_{0}}\left(n\right)$. We wish to show
that $\LSM_{k,P_{0},\frac{\delta_{0}}{2}}^{E}$ returns a correct
result in its run on $G$ with probability at least $0.99$.

If $G\in\GSOL_{E}\left(n\right)$, set $A_{0}=\GSOL_{E}\left(n\right)$
and $A_{1}=\GSOL_{E}^{\geq\eps_{0}}\left(n\right)$, and otherwise
set $A_{0}=\GSOL_{E}^{\geq\eps_{0}}\left(n\right)$ and $A_{1}=\GSOL_{E}\left(n\right)$.
Then
\begin{equation}
d_{\TV}\left(N_{G,P_{0}},N_{G',P_{0}}\right)\geq\delta_{0}\quad\forall G'\in A_{1}\label{eq:LSM-positive-distinguishability}
\end{equation}
because $E$ is $\left(P\left(\eps\right),\delta\left(\eps\right)\right)$-statistically-distinguishable.

Let $x_{1},\dotsc x_{k}\in[n]$ denote the random variables sampled
in Step \ref{step:LSM-sampling} of a run of $\LSM_{k,P_{0},\frac{\delta_{0}}{2}}^{E}$ (see Algorithm \ref{alg:LSM})
on $G$. Denote $\overline{x}=\left(x_{1},\dotsc,x_{k}\right)\in\left[n\right]^{k}$,
and let $N_{\overline{x}}^{\Emp}$ be the distribution, over $\Subsets\left(P_{0}\right)$,
of $\stab_{G}\left(x_{i}\right)\cap P_{0}$ where $i\sim \U\left(\left[k\right]\right)$.
Recall that this run of $\LSM_{k,P_{0},\frac{\delta_{0}}{2}}^{E}$\emph{
}accepts $G$ if and only if Condition (\ref{eq:LSMCondition}) is
satisfied,\textbf{ }namely, if 
\begin{equation}
\min\left\{ d_{\TV}\left(N_{\overline{x}}^{\Emp},N_{H,P_{0}}\right)\mid H\in\GSOL_{E}\left(n\right)\right\} \le\frac{\delta_{0}}{2}\,\,\text{.}\label{eq:LSM-positive-AcceptCondition}
\end{equation}
In particular, the run returns a correct result on $G$ whenever 
\begin{equation}
d_{\TV}\left(N_{\overline{x}}^{\Emp},N_{G,P_{0}}\right)<\delta_{0}/2\,\,\text{.}\label{eq:LSM-positive-closeEnough}
\end{equation}
Indeed, if $G\in\GSOL_{E}\left(n\right)$, then (\ref{eq:LSM-positive-AcceptCondition})
holds by (\ref{eq:LSM-positive-closeEnough}), and so the run is accepting.
If $G\in\GSOL_{E}^{\ge\eps_{0}}\left(n\right)$ then (\ref{eq:LSM-positive-AcceptCondition})
does not hold by (\ref{eq:LSM-positive-distinguishability}), (\ref{eq:LSM-positive-closeEnough})
and the triangle inequality, and so the run is rejecting. Thus, it
suffices to show that

\[
\Pr_{\overline{x}\sim \U\left(\left[n\right]^{k}\right)}\left(d_{\TV}\left(N_{\overline{x}}^{\Emp},N_{G,P_{0}}\right)<\delta_{0}/2\right)\ge0.99\,\,\text{.}
\]
For a fixed $Y\in\Subsets\left(P\right)$, the random variable $k\cdot N_{\overline{x}}^{\Emp}\left(Y\right)\colon\left[n\right]^{k}\to\ZZ_{\geq0}$ (that is, ``$k$ times $N_{\overline{x}}^{\Emp}\left(Y\right)$'')
is distributed $\Binomial\left(k,N_{G,P_{0}}\left(Y\right)\right)$
when $x\sim \U\left(\left[n\right]^{k}\right)$. In particular, $\EE_{\overline{x}\sim \U\left(\left[n\right]^{k}\right)}\left[k\cdot N_{\overline{x}}^{\Emp}\left(Y\right)\right]=kN_{G,P_{0}}\left(Y\right)$.
Hence,
\begin{align*}
\EE_{\overline{x}\sim \U\left(\left[n\right]^{k}\right)}\left(\|N_{\overline{x}}^{\Emp}-N_{G,P_{\eps}}\|_{2}^{2}\right) & =\sum_{Y\in\Subsets\left(P_{0}\right)}\underbrace{\EE_{\overline{x}\sim \U\left(\left[n\right]^{k}\right)}\left(\left(N_{\overline{x}}^{\Emp}\left(Y\right)-N_{G,P_{0}}\left(Y\right)\right)^{2}\right)}_{=\frac{1}{k^{2}}\Var_{\overline{x}\sim \U\left(\left[n\right]^{k}\right)}\left(kN_{\overline{x}}^{\Emp}\left(Y\right)\right)}\\
& =\sum_{Y\in\Subsets\left(P_{0}\right)}\frac{N_{G,P_{0}}\left(Y\right)\left(1-N_{G,P_{0}}\left(Y\right)\right)}{k}\\
& \leq\frac{1}{k}\sum_{Y\in\Subsets\left(P_{0}\right)}N_{G,P_{0}}\left(Y\right)\\
& =\frac{1}{k}\,\,\text{.}
\end{align*}
By the Cauchy--Schwartz inequality,
\begin{align*}
d_{\TV}\left(N_{\overline{x}}^{\Emp},N_{G,P_{0}}\right) & =\frac{1}{2}\|N_{\overline{x}}^{\Emp}-N_{G,P_{0}}\|_{1}\le\frac{1}{2}\left|\Subsets\left(P_{0}\right)\right|^{\frac{1}{2}}\cdot\|N_{\overline{x}}^{\Emp}-N_{G,P_{0}}\|_{2}\\
& =\frac{1}{2}\cdot2^{\frac{\left|P_{0}\right|}{2}}\cdot\|N_{\overline{x}}^{\Emp}-N_{G,P_{0}}\|_{2}\,\,\text{.}
\end{align*}
Thus,
\begin{align*}
\Pr_{\overline{x}\sim \U\left(\left[n\right]^{k}\right)}\left(d_{\TV}\left(N_{\overline{x}}^{\Emp},N_{G,P_{0}}\right)\ge\delta_{0}/2\right) & \le\Pr_{\overline{x}\sim \U\left(\left[n\right]^{k}\right)}\left(\|N_{\overline{x}}^{\Emp}-N_{G,P_{0}}\|_{2}^{2}\ge2^{-\left|P_{0}\right|}\delta_{0}^{2}\right)\\
& \le\frac{2^{|P_{0}|}}{k\delta_{0}^{2}}\qquad\text{(by Markov's inequality})\\
& \le0.01\,\,\text{.}
\end{align*}

\subsection{Testability implies statistical distinguishability}

We prove Proposition \ref{prop:LSMNegative} in Section \ref{subsec:ProofLSMNegative}
after making the necessary preparations.

\subsubsection{A minimax principle}

The following lemma, a variant of \emph{Yao's minimax inequality}
\cite{Yao77}, enables us to reduce Proposition \ref{prop:LSMNegative}
to a statement about deterministic, rather than randomized, algorithms.
\begin{lem}
	\label{lem:minimax} Fix $\eps>0$ and $q\in\NN$. The following conditions
	are equivalent: 
	\begin{enumerate}
		\item There exists an $\left(\eps,q\right)$-tester for $E$. 
		\item For every distribution $D$ on $\bigcup_{n\in\NN}\left(\GSOL_{E}(n)\cup\GSOL_{E}^{\ge\eps}\left(n\right)\right)$,
		there exists a deterministic algorithm $\cM$, limited to making
		at most $q$ queries of $G$, such that
		\[
		\Pr_{G\sim D}\left[\text{\ensuremath{\calM} solves the input \ensuremath{G} correctly}\right]\ge0.99\,\,\text{.}
		\]
	\end{enumerate}
\end{lem}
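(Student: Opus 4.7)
The direction $(1)\Rightarrow(2)$ is a straightforward averaging (Fubini) argument and will be the easy half. Suppose $\calM^{*}$ is an $(\eps,q)$-tester, and view its randomness as a random seed $r$ drawn from some distribution, so that conditional on $r$ the machine becomes a deterministic $q$-query machine $\calM^{*}_{r}$. The tester condition says $\Pr_{r}[\calM^{*}_{r}\text{ correct on }G]\ge 0.99$ for every fixed input $G$. For any distribution $D$ on $\bigcup_{n}(\GSOL_{E}(n)\cup\GSOL_{E}^{\ge\eps}(n))$, swapping the order of expectations gives
\[
\mathbb{E}_{r}\bigl[\Pr_{G\sim D}[\calM^{*}_{r}\text{ correct on }G]\bigr]
=\mathbb{E}_{G\sim D}\bigl[\Pr_{r}[\calM^{*}_{r}\text{ correct on }G]\bigr]\ge 0.99\,\,\text{,}
\]
so some deterministic realization $\calM^{*}_{r}$ witnesses condition (2).

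For $(2)\Rightarrow(1)$, the plan is the standard Yao-style reduction to the finite minimax theorem, applied one input size at a time and then patched together. Fix $n\in\NN$ and let $\calI_{n}=\GSOL_{E}(n)\cup\GSOL_{E}^{\ge\eps}(n)$; this is a finite set, since $\left|\calI_{n}\right|\le\left(n!\right)^{d}$. Let $\calD_{n}$ be the set of deterministic oracle machines operating on size-$n$ inputs that make at most $q$ queries; this is also finite, because each such machine is described by a decision tree of depth at most $q$ over the finite set of legal queries $\{s_{i}x,s_{i}^{-1}x: 1\le i\le d, x\in[n]\}$ with responses in $[n]$. Form the two-player zero-sum game with strategy sets $\calD_{n}$ (row) and $\calI_{n}$ (column), where the row player's payoff is the indicator that the chosen machine is correct on the chosen input.

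Condition (2), restricted to distributions $D$ supported on $\calI_{n}$ (and any machine restricted to size-$n$ inputs lies in $\calD_{n}$), says precisely that for every mixed column strategy there is a pure row strategy achieving expected payoff $\ge 0.99$. By the finite minimax theorem applied to this game, there is a mixed row strategy $p_{n}$, i.e.\ a distribution on $\calD_{n}$, such that
\[
\min_{G\in\calI_{n}}\,\mathbb{E}_{\calM\sim p_{n}}\bigl[\calM\text{ correct on }G\bigr]\ge 0.99\,\,\text{.}
\]
Now define the randomized oracle machine $\calM^{*}$ which, on input $(n,\overline{\sigma})$, first reads $n$, then samples $\calM\sim p_{n}$ and runs $\calM$ on $\overline{\sigma}$. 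By construction $\calM^{*}$ makes at most $q$ queries and is an $(\eps,q)$-tester for $E$, establishing (1).

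The only real obstacle is the mild one of handling inputs of all sizes simultaneously; resolving it amounts to noting that a randomized tester is allowed to inspect $n$ and branch, so patching the per-$n$ minimax solutions into one global randomized machine is legal. All other steps (the finite minimax theorem, the averaging for $(1)\Rightarrow(2)$, and finiteness of $\calD_{n}$ and $\calI_{n}$) are standard.
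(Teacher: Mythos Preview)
Your proof is correct. The paper takes a slightly different route: it applies the von Neumann minimax theorem directly to the single (countably infinite) two-player game with row strategies $\calA$ (all deterministic $q$-query oracle machines on inputs of \emph{all} sizes) and column strategies $\calB=\bigcup_{n}\left(\GSOL_{E}(n)\cup\GSOL_{E}^{\ge\eps}(n)\right)$, rewriting conditions (1) and (2) as the two sides of the minimax equality and citing an appropriate minimax theorem. Your per-$n$ decomposition is a bit longer but has the advantage of only invoking the finite von Neumann theorem, since each $\calD_{n}$ and $\calI_{n}$ is finite; this sidesteps any worry about whether the max and min in the infinite game are actually attained. The paper's argument is more compact but leans on a minimax theorem valid for infinite strategy sets. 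Both yield the lemma with equal ease, and your observation that the tester may read $n$ and branch is exactly what licenses the patching step.
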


\begin{proof}
	Let $\calA$ denote the set of all deterministic algorithms,
	with input in $\bigcup_{n\in\NN}\calG_{S}\left(n\right)$ and boolean
	output, limited to making at most $q$ queries in a run. Let $\calB=\bigcup_{n\in\NN}\left(\GSOL_{E}\left(n\right)\cup\GSOL_{E}^{\ge\eps}\left(n\right)\right)$.
	Let $\calD_{\calA}$ (resp. $\calD_{\calB}$) denote the set of all
	probability distributions over $\calA$ (resp. $\calB$). For $\cM\in\calA$,
	$G\in\calB$, let 
	\[
	c_{\cM,G}=\begin{cases}
	1 & \text{if \ensuremath{\calM} solves the input \ensuremath{G} correctly}\\
	0 & \text{otherwise}
	\end{cases}\,\,\text{.}
	\]
	A randomized algorithm can be viewed as a distribution over $\calA$.
	Hence, the first condition in the statement of the proposition is
	equivalent to 
	\begin{equation}
	\max_{D\in\calD_{\calA}}\min_{G\in\calB}\Pr_{M\sim D}\left[c_{\cM,G}=1\right]\ge0.99\,\,\text{,}\label{eq:minimaxPrimal}
	\end{equation}
	while the second condition can be written as
	\begin{equation}
	\min_{D\in\calD_{B}}\max_{\cM\in\calA}\Pr_{G\sim D}\left[c_{\cM,G}=1\right]\ge0.99\,\,\text{.}\label{eq:minimaxDual}
	\end{equation}
	Observe that the left-hand side of (\ref{eq:minimaxPrimal}) is equal
	to
	\begin{equation}
	\max_{D\in\calD_{\calA}}\min_{D'\in\calD_{\calB}}\Pr_{\substack{G\sim D\\
			\cM\sim D'
		}
	}\left[c_{\cM,G}=1\right]\,\,\text{.}\label{eq:minimaxPrimalWholeSimplex}
	\end{equation}
	Indeed, $\Pr_{\substack{G\sim D\\
			\cM\sim D'
		}
	}\left[c_{\cM,G}=1\right]$ is linear in $D'$, and hence the minimum in (\ref{eq:minimaxPrimal})
	is obtained at a vertex of the simplex $\calD_{B}$. Similarly, the
	left-hand side of (\ref{eq:minimaxDual}) is equal to 
	\begin{equation}
	\min_{D'\in\calD_{B}}\max_{D\in\calD_{\calA}}\Pr_{\substack{G\sim D\\
			\cM\sim D'
		}
	}\left[c_{\cM,G}=1\right]\,\,\text{.}\label{eq:minimaxDualWholeSimplex}
	\end{equation}
	Finally, (\ref{eq:minimaxPrimalWholeSimplex}) and (\ref{eq:minimaxDualWholeSimplex})
	are equal by the von Neumann Minimax Theorem \cite[Thm. 1]{Du1995}. 
\end{proof}

\subsubsection{The $d_{\protect\TV}$ metric}

We recall the following standard facts about the $d_{\TV}$ metric.
\begin{fact}[{\cite[Prop. 4.2]{LP17Markov}}]
	\label{fact:dTVMaxEvent}Let $\theta$ and $\theta'$ be two distributions
	over the same finite base set $\Omega$. Then
	\[
	d_{\TV}\left(\theta,\theta'\right)=\max_{A\subset\Omega}\left|\theta\left(A\right)-\theta'\left(A\right)\right|\,\,\text{.}
	\]
\end{fact}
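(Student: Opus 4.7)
The plan is to exhibit a specific set $A^{\star}\subset\Omega$ that attains equality in the claimed maximum, and then to verify that no other set does better. The natural candidate is the set on which $\theta$ dominates $\theta'$, namely
\[
A^{\star}=\{x\in\Omega\mid\theta(x)\geq\theta'(x)\},
\]
together with its complement $A^{\star\,c}=\Omega\setminus A^{\star}$. These are exactly the sets that isolate the positive and negative parts of the signed measure $\theta-\theta'$.

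First I would show that $A^{\star}$ achieves the value $d_{\TV}(\theta,\theta')$. By construction,
\[
\theta(A^{\star})-\theta'(A^{\star})=\sum_{x\in A^{\star}}\bigl(\theta(x)-\theta'(x)\bigr)=\sum_{x\in A^{\star}}\bigl|\theta(x)-\theta'(x)\bigr|,
\]
and analogously $\theta'(A^{\star\,c})-\theta(A^{\star\,c})=\sum_{x\in A^{\star\,c}}|\theta(x)-\theta'(x)|$. Since $\theta$ and $\theta'$ are both probability measures, $\theta(A^{\star})-\theta'(A^{\star})=-(\theta(A^{\star\,c})-\theta'(A^{\star\,c}))$, so these two sums coincide. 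Adding them and dividing by two yields $\theta(A^{\star})-\theta'(A^{\star})=\tfrac{1}{2}\sum_{x\in\Omega}|\theta(x)-\theta'(x)|=d_{\TV}(\theta,\theta')$.

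Next I would show that no set $A$ beats $A^{\star}$. For arbitrary $A\subset\Omega$, split the signed sum:
\[
\theta(A)-\theta'(A)=\sum_{x\in A\cap A^{\star}}\bigl(\theta(x)-\theta'(x)\bigr)+\sum_{x\in A\cap A^{\star\,c}}\bigl(\theta(x)-\theta'(x)\bigr),
\]
where the first sum is nonnegative and bounded above by $\theta(A^{\star})-\theta'(A^{\star})=d_{\TV}(\theta,\theta')$, and the second sum is nonpositive and bounded below by $\theta(A^{\star\,c})-\theta'(A^{\star\,c})=-d_{\TV}(\theta,\theta')$. Hence $|\theta(A)-\theta'(A)|\leq d_{\TV}(\theta,\theta')$, which combined with the previous paragraph gives the claimed equality. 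There is no real obstacle here beyond bookkeeping; the key observation is simply that the Hahn-type decomposition $\Omega=A^{\star}\sqcup A^{\star\,c}$ converts the $\ell_{1}$ definition of $d_{\TV}$ into a set-based extremum.
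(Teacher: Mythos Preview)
Your proof is correct and is the standard argument for this identity. Note, however, that the paper does not supply its own proof of this fact: it is stated as a cited result (from Levin--Peres, \emph{Markov Chains and Mixing Times}, Proposition~4.2) and used without further justification. Your argument is essentially the one found in that reference, so there is nothing to compare against.
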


\begin{fact}[{Coupling Lemma \cite[Prop. 4.7]{LP17Markov}}]
	\label{fact:dTVCoupling}For any two distributions $\theta$ and
	$\theta'$ over the same finite base set $\Omega$, there exists a
	distribution $D$ over $\Omega\times\Omega$ such that:
	\begin{enumerate}
		\item The marginal distributions of $D$ are equal, respectively, to $\theta$
		and $\theta'$. 
		\item 
		\[
		d_{\TV}\left(\theta,\theta'\right)=\Pr_{\left(x,y\right)\sim D}\left[x\ne y\right]\,\,\text{.}
		\]
	\end{enumerate}
\end{fact}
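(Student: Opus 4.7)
My plan is to construct an explicit \emph{maximal coupling}, placing as much mass as possible on the diagonal of $\Omega \times \Omega$. First I would set $p \coloneqq \sum_{\omega \in \Omega} \min(\theta(\omega), \theta'(\omega))$ and observe, via the termwise identity $|a-b| = a + b - 2\min(a,b)$, that
\[
d_{\TV}(\theta, \theta') = \frac{1}{2}\sum_{\omega \in \Omega}|\theta(\omega) - \theta'(\omega)| = 1 - p.
\]
If $p = 1$, then $\theta = \theta'$ and the distribution supported on the diagonal with $D(\omega,\omega) = \theta(\omega)$ trivially satisfies both conditions, so I will assume $p < 1$.

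Next I would define $D$ by assigning mass $\min(\theta(\omega), \theta'(\omega))$ to each diagonal point $(\omega,\omega)$, and for $\omega_1 \neq \omega_2$ setting
\[
D(\omega_1,\omega_2) = \frac{(\theta(\omega_1)-\theta'(\omega_1))^+ \cdot (\theta'(\omega_2)-\theta(\omega_2))^+}{1-p},
\]
where $(x)^+ \coloneqq \max(x,0)$. The key auxiliary identity is
\[
\sum_{\omega \in \Omega}(\theta(\omega)-\theta'(\omega))^+ \;=\; \sum_{\omega \in \Omega}(\theta'(\omega)-\theta(\omega))^+ \;=\; 1 - p;
\]
the two sums are equal because $\sum_\omega \theta(\omega) = \sum_\omega \theta'(\omega) = 1$ forces the positive and negative parts of $\theta - \theta'$ to have equal total mass, and together they reconstruct $\sum_\omega |\theta(\omega)-\theta'(\omega)| = 2(1-p)$.

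To finish, I would verify both conditions. Summing $D(\omega_1, \cdot)$ over $\omega_2 \in \Omega$ yields
\[
\min(\theta(\omega_1),\theta'(\omega_1)) + (\theta(\omega_1)-\theta'(\omega_1))^+ \cdot \frac{\sum_{\omega_2}(\theta'(\omega_2)-\theta(\omega_2))^+}{1-p} = \min(\theta(\omega_1),\theta'(\omega_1)) + (\theta(\omega_1)-\theta'(\omega_1))^+,
\]
which equals $\theta(\omega_1)$ by a case split on the sign of $\theta(\omega_1) - \theta'(\omega_1)$; the second marginal is symmetric. Since the total diagonal mass is $p$, the off-diagonal mass is $1 - p$, giving $\Pr_{(x,y)\sim D}[x \neq y] = 1 - p = d_{\TV}(\theta,\theta')$. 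The argument is entirely algebraic bookkeeping; no genuine obstacle arises beyond tracking the positive-part identity and the normalization $1/(1-p)$.
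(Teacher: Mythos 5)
Your construction is correct: the diagonal mass $\min(\theta(\omega),\theta'(\omega))$ plus the product-form off-diagonal mass normalized by $1-p$ is the standard maximal coupling, the marginal computation via the case split on the sign of $\theta(\omega_1)-\theta'(\omega_1)$ is right, and the identity $d_{\TV}(\theta,\theta')=1-p$ follows exactly as you say from $|a-b|=a+b-2\min(a,b)$. The paper itself gives no proof of this Fact --- it simply cites Levin--Peres, Proposition 4.7 --- and your argument is essentially the one found there, so there is nothing to compare beyond noting that you have supplied the omitted standard proof in full.
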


\begin{fact}[{\cite[Ex. 4.4]{LP17Markov}}]
	\label{fact:dTVProduct}Let $\theta=\prod_{i=1}^{t}\theta_{i}$ and
	$\theta'=\prod_{i=1}^{t}\theta_{i}'$ be product measures. Then,
	\[
	d_{\TV}\left(\theta,\theta'\right)\le\sum_{i=1}^{t}d_{\TV}\left(\theta_{i},\theta_{i}'\right)
	\]
\end{fact}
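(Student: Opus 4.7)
The plan is to prove Fact \ref{fact:dTVProduct} by constructing an explicit coupling of $\theta$ and $\theta'$ out of optimal coordinate-wise couplings, and then bounding the disagreement probability via a union bound.

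First, I would invoke Fact \ref{fact:dTVCoupling} coordinate by coordinate: for each $1 \le i \le t$, let $D_i$ be a distribution on $\Omega_i \times \Omega_i$ (where $\Omega_i$ is the base set of $\theta_i$ and $\theta_i'$) whose marginals are $\theta_i$ and $\theta_i'$ and which satisfies
\[
\Pr_{(x_i, y_i) \sim D_i}\bigl[x_i \ne y_i\bigr] = d_{\TV}(\theta_i, \theta_i').
\]
Next, I would form the product coupling $D = \prod_{i=1}^{t} D_i$ on $(\prod_i \Omega_i) \times (\prod_i \Omega_i)$. Because the $D_i$ are independent and each has the correct pair of marginals, the two marginals of $D$ are exactly $\prod_i \theta_i = \theta$ and $\prod_i \theta_i' = \theta'$. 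A direct union bound then gives
\[
\Pr_{(\overline{x}, \overline{y}) \sim D}\bigl[\overline{x} \ne \overline{y}\bigr] = \Pr\bigl[\exists i : x_i \ne y_i\bigr] \le \sum_{i=1}^{t} \Pr_{(x_i, y_i) \sim D_i}\bigl[x_i \ne y_i\bigr] = \sum_{i=1}^{t} d_{\TV}(\theta_i, \theta_i').
\]

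To conclude, I would use the ``easy direction'' of the coupling identity, namely that any coupling $D$ with marginals $\theta, \theta'$ satisfies $d_{\TV}(\theta, \theta') \le \Pr_{(\overline{x},\overline{y}) \sim D}[\overline{x} \ne \overline{y}]$. This can be derived in one line from Fact \ref{fact:dTVMaxEvent}: for any event $A$ in the product space, $\theta(A) - \theta'(A) = \Pr[\overline{x} \in A] - \Pr[\overline{y} \in A] \le \Pr[\overline{x} \in A,\ \overline{y} \notin A] \le \Pr[\overline{x} \ne \overline{y}]$, and taking the maximum over $A$ yields the bound. Combined with the previous display, this gives exactly $d_{\TV}(\theta, \theta') \le \sum_{i=1}^{t} d_{\TV}(\theta_i, \theta_i')$.

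There is no real obstacle here; the only mild subtlety is that Fact \ref{fact:dTVCoupling} as stated provides the existence of an optimal coupling (the hard ``achieving'' direction), whereas the argument only needs the trivial upper-bound direction. This can be noted in passing, or alternatively the ``easy direction'' can be established directly via Fact \ref{fact:dTVMaxEvent} as sketched above, so that the proof does not even invoke the nontrivial content of the coupling lemma.
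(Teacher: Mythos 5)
Your proof is correct. The paper itself gives no argument for this fact --- it is stated with a citation to \cite[Ex.\ 4.4]{LP17Markov} --- and your coupling argument is exactly the standard proof one would find there: take optimal couplings $D_i$ from Fact \ref{fact:dTVCoupling}, form the product coupling, apply a union bound, and finish with the easy direction of the coupling inequality (which, as you note, follows in one line from Fact \ref{fact:dTVMaxEvent} and does not require the nontrivial ``achieving'' content of the coupling lemma). Nothing is missing.
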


\subsubsection{Partial $S$-graphs and runs of algorithms}
\begin{defn}
	[Partial $S$-graph]Let $S$ be a finite set. A \emph{partial $S$-graph}
	is a directed graph $H$ whose edges are labelled by $S$, such that
	for every $s\in S$, every vertex of $H$ has at most one outgoing
	edge labelled $s$ and at most one incoming edge labelled $s$.
	
	We denote the vertex set of $H$ by $V\left(H\right)$, and $E\left(H\right)$
	will be the set of ($S$-labelled) edges in $H$. For vertices $x_{1},x_{2}\in V(H)$ and $s\in S$, we say that $x_{1}\overset{s^{-1}}{\rightedge}x_{2}$
	is an edge of $H$ if the same is true for $x_{2}\overset{s}{\rightedge}x_{1}$. When computing the
	cardinality $\left|E\left(H\right)\right|$ of $E\left(H\right)$,
	the edges $x_{2}\overset{s}{\rightedge}x_{1}$ and $x_{1}\overset{s^{-1}}{\rightedge}x_{2}$
	count as one edge.
\end{defn}

We denote the set of all partial $S$-graphs $H$ such that $V\left(H\right)\subset\left[n\right]$
by $\PG_{S}\left(n\right)$. Given $H\in\PG_{S}\left(n\right)$, we
write $r\left(H\right)=\left|V\left(H\right)\right|-k$, where $k$
is the number of connected components of $H$. For example, $r\left(H\right)=\left|V\left(H\right)\right|-1$
if $H$ is connected, and $r\left(H\right)=0$ if $H$ has no edges.

\begin{defn}
	[Paths in a partial $S$-graph] Let $H$ be a partial $S$-graph,
	$x\in V(H)$ and $w\in F_{S}$. Suppose that the reduced form of $w$
	is given by $w_{k}\cdots w_{1}$ ($w_{i}\in S^{\pm}$ for each $1\le i\le k$).
	We write $w_{H}x=y$, and say that $w$ is an \emph{$H$-path} for $x$, if there exist $x_{0},\ldots,x_{k}\in V(H)$
	such that $x_{0}=x$, $x_{k}=y$, and $H$ contains a $w_{i}$-labelled
	edge $x_{i-1}\rightedge x_{i}$ for each $1\le i\le k$ (in this case
	$x_{0},\dotsc,x_{k}$ are uniquely defined). If, in addition, the
	vertices $x_{0},\dotsc,x_{k}$ are distinct, except for the possibility
	that $x_{0}=x_{k}$, we say that $w$ is a\emph{ simple $H$-path}
	for $x$.
	
	Let 
	\[
	\paths_{H}\left(x\right)=\left\{ w\in F_{S}\mid w\text{ is an }H\text{-path for }x\right\} \,\,\text{,}
	\]
	\[
	\spaths_{H}\left(x\right)=\left\{ w\in F_{S}\mid w\text{ is a simple }H\text{-path for }x\right\} \,\,\text{,}
	\]
	\[
	\bp_{H}\left(x\right)=\left\{ w'^{-1}\cdot w\mid w,w'\in\spaths_{H}\left(x\right)\right\} 
	\]
	and
	\[
	\pstab_{H}\left(x\right)=\left\{ w\in\paths_H(x)\cap\bp_{H}\left(x\right)\mid w_{H}x=x\right\} \,\,\text{.}
	\]
\end{defn}

The definition of $\bp_{H}\left(x\right)$ ensures that the set $\pstab_H\left(x\right)$
contains all the information about simple paths emerging from $x$
and ending at the same vertex. Indeed, for $w,w'\in\spaths_{H}\left(x\right)$,
we have $w_{H}x=w'_{H}x$ if and only if $w'^{-1}w\in\pstab_H\left(x\right)$.
\begin{defn}
	[Inclusion of a partial $S$-graph]Let $G\in\calG_{S}\left(n\right)$
	and $H\in\PG_{S}\left(n\right)$. Write $H\subset G$ if $G$ contains
	all the labelled directed edges of $H$. Equivalently, $H\subset G$
	if $w_{G}x=w_{H}x$ for all $x\in V\left(H\right)$ and $w\in\spaths_{H}\left(x\right)$.
\end{defn}

\begin{rem}
	Crucially, in contrast to other common definitions of inclusion of
	graphs, here we care about vertex labels. For example, if $H\subset G$
	and $H$ has the edge $3\overset{s}{\rightedge}8$, then $G$ must
	also have the edge $3\overset{s}{\rightedge}8$. 
\end{rem}

Let $\cM$ be a deterministic algorithm that makes exactly $q$
queries on its input $G\in\calG_{S}\left(n\right)$. Let $H_{\cM,G}\in\PG_{S}\left(n\right)$
be the partial $S$-graph whose edge set is the set of edges of $G$
queried by $\cM$ in its run on $G$, with vertex set consisting of
the vertices incident to these edges. More formally, in graph-theoretic
terms each query of $\calM$ is of the form: ``what is the label
of the vertex $s_{G}x$?'' for given $s\in S^{\pm}$ and $x\in\left[n\right]$.
We denote this query by $\left(x,s\right)$. Denote the sequence of
queries that $\cM$ makes in its run on $G$ by $\left(x^{\cM,G,1},s^{\cM,G,1}\right),\ldots,\left(x^{\cM,G,q},s^{\cM,G,q}\right)$.
Then the vertex set of $H_{\cM,G}$ is $\left\{ x^{\cM,G,1},\ldots,x^{\cM,G,q}\right\} \cup\left\{ s_{G}^{\cM,G,1}x^{\cM,G,1},\ldots,s_{G}^{\cM,G,q}x^{\cM,G,q}\right\} $,
and its edges are $x^{\cM,G,i}\stackrel{s^{\cM,G,i}}{\rightedge}s_{G}^{\cM,G,i}x^{\cM,G,i}$,
$1\le i\le q$. 

Clearly $H_{\cM,G}\subset G$. The key observation is that if an $S$-graph
$G'\in\calG_{S}\left(n\right)$ contains $H_{\cM,G}$ then $\cM$
has an identical run on $G$ and on $G'$. Indeed, since $\cM$ is
deterministic, it always starts with the same query. Since both $G$
and $G'$ contain the edge $x^{\cM,G,1}\stackrel{s^{\cM,G,1}}{\rightedge}s_{G}^{\cM,G,1}x^{\cM,G,1}$,
the answer to this query is the same in both runs. Consequently, the
second query of $\cM$ is also the same, and so on.

We denote 
\[
\calR_{\cM}\left(n\right)=\left\{ H_{\cM,G}\mid G\in\calG_{S}\left(n\right)\right\} 
\]
and 
\begin{equation}
\calR_{\cM,r}\left(n\right)=\left\{ H\in\calR_{\cM}\left(n\right)\mid r\left(H\right)=r\right\} \label{eq:DefOfR_M,r(n)}
\end{equation}
for $r\geq0$. For $G\in\calG_{S}\left(n\right)$, the partial $S$-graph
$H=H_{\cM,G}$ is the unique $H\in\calR_{\cM}\left(n\right)$ such
that $H\subset G$. If $H_{\cM,G}=H_{\cM,G'}$ then $\cM$ has identical
runs on $G$ and $G'$. In particular, these runs terminate with the
same result. We proved:
\begin{lem}
	\label{lem:SufficientConditionForMToHaveSameResult}Let $\cM$ be
	a deterministic algorithm and let $G,G'\in\calG_{S}\left(n\right)$.
	Suppose that $H_{\cM,G}=H_{\cM,G'}$. Then
	\[
	\cM\text{ accepts }G\iff\cM\text{ accepts }G'\,\,\text{.}
	\]
\end{lem}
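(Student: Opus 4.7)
The plan is a short induction on the number of queries performed by $\cM$, which formalizes the heuristic already given in the paragraph preceding the lemma. Write $H = H_{\cM,G} = H_{\cM,G'}$. First I would observe that $H \subset G$ holds by the very definition of $H_{\cM,G}$ (it is carved out from the edges of $G$ actually probed during the run), and similarly $H_{\cM,G'} \subset G'$, so that the hypothesis $H_{\cM,G} = H_{\cM,G'}$ yields $H \subset G$ and $H \subset G'$ simultaneously.

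The inductive statement I would prove is: for every $1 \le i \le q$, the first $i$ queries of $\cM$ on input $G$ coincide with the first $i$ queries of $\cM$ on input $G'$, and the oracle answers to these queries also coincide. The base case $i = 1$ uses only determinism of $\cM$: the first query $(x^{\cM,G,1}, s^{\cM,G,1})$ is a function of the initial internal state of $\cM$ alone, so it equals $(x^{\cM,G',1}, s^{\cM,G',1})$; moreover the corresponding edge $x^{\cM,G,1} \overset{s^{\cM,G,1}}{\rightedge} s^{\cM,G,1}_G x^{\cM,G,1}$ lies in $H$ by construction of $H_{\cM,G}$, hence in both $G$ and $G'$, forcing the two oracle answers to agree. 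For the inductive step, if the histories of queries and answers agree through step $i-1$, then determinism of $\cM$ forces the $i$-th queries to agree, and the same argument as before--using that the $i$-th probed edge belongs to $H_{\cM,G} = H$--shows that the oracle returns the same answer in both runs.

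Since the accept/reject decision is a deterministic function of the complete transcript of queries and answers, and the two transcripts coincide, $\cM$ returns the same verdict on $G$ as on $G'$. I do not anticipate any obstacle: the single substantive point is the (immediate) fact that each edge queried by $\cM$ on input $G$ belongs to $H_{\cM,G}$, so that the hypothesis $H_{\cM,G} = H_{\cM,G'}$ is enough to synchronise the runs at every step.
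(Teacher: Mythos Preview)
Your proposal is correct and follows exactly the approach taken in the paper: the paper's proof is the informal inductive argument given in the paragraph immediately preceding the lemma (``since $\cM$ is deterministic, it always starts with the same query \ldots\ the answer to this query is the same in both runs \ldots\ the second query of $\cM$ is also the same, and so on''), and your write-up is simply a more careful formalisation of that same reasoning.
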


\subsubsection{Vertex relabelling of an $S$-graph}
\begin{defn}
	[$S$-graph relabelling] \label{def:GraphRerelabelling}Let $G\in\calG_{S}\left(n\right)$
	and let $\pi$ be a permutation over $\left[n\right]$. The \emph{vertex-relabelled
		graph} $G\pi\in\calG_{S}\left(n\right)$ is the graph with vertex
	set $\left[n\right]$ and edge set $\left\{ \pi^{-1}x\overset{s}{\rightedge}\pi^{-1}\left(s_{G}x\right)\mid x\in\left[n\right]\right\} $. 
\end{defn}

\begin{rem}
	\label{rem:Rerelabelling}For each $w\in F_{S}$, the following diagram
	commutes:
	\[
	\xymatrix{[n]\ar[r]^{w_{G\pi}}\ar[d]^{\pi} & [n]\ar[d]^{\pi}\\{}
		[n]\ar[r]^{w_{G}} & [n]
	}
	\,\,\text{.}
	\]
	In other words, $w_{\left(G\pi\right)}x=\pi^{-1}\left(w_{G}\pi x\right)$
	for $x\in V\left(G\pi\right)$.
	
	We think of $\pi$ as a relabelling function, in the sense that $G\pi$
	is obtained from $G$ by changing the name of each vertex $\pi\left(x\right)\in V\left(G\right)$
	to $x$.
\end{rem}

\begin{rem}
	\label{rem:rerelabellingPreserves} Let $G\in\calG_{S}\left(n\right)$
	and $\pi\in\Sym\left(n\right)$. It is straightforward to verify that
	$G\in\GSOL_{E}\left(n\right)$ implies $G\pi\in\GSOL_{E}\left(n\right)$,
	and that $G\in\GSOL_{E}^{\ge\eps}\left(n\right)$ implies $G\pi\in\GSOL_{E}^{\ge\eps}\left(n\right)$
	($\eps>0$).
\end{rem}

\begin{rem}
	For each $\pi\in\Sym\left(n\right)$, the map $G\mapsto G\pi\colon\calG_{S}\left(n\right)\to\calG_{S}\left(n\right)$
	defines a right action of $\Sym\left(n\right)$ on $\calG_{S}\left(n\right)$.
	For $G\in\calG_{S}\left(n\right)$, the set $\left\{ G\pi\mid\pi\in\Sym\left(n\right)\right\} $
	is an orbit of this action. The distribution of $G\pi$ when $\pi\sim\U\left(\Sym\left(n\right)\right)$
	is the uniform distribution on this orbit.
\end{rem}

\subsubsection{Proving Proposition \ref{prop:LSMNegative} \label{subsec:ProofLSMNegative}}

We shall prove the following technical lemma.
\begin{lem}
	\label{lem:LSMNegativeMainTechnical}Fix two $S$-graphs $G_{0},G_{1}\in\calG_{S}\left(n\right)$.
	Fix $q\in\NN$ with $1\le q\le n^{\frac{1}{4}}$ and let $\cM$ be
	a deterministic algorithm, limited to making at most $q$ queries.
	Let $\theta_{i}$, $i\in\left\{ 0,1\right\} ,$ be the distribution (over $\calR_{\calM}(n)$)
	of the partial $S$-graph $H_{\cM,G_{i}\pi}$, where $\pi\sim \U\left(\Sym(n)\right)$.
	Then
	\[
	d_{\TV}\left(\theta_{1},\theta_{2}\right)\le\left(2q\right)^{q+2}\left(d_{\TV}\left(N_{G_{0},B_{2q}},N_{G_{1},B_{2q}}\right)+cn^{-\frac{1}{4}}\right)\,\,\text{,}
	\]
	where $c>0$ is a universal constant.
\end{lem}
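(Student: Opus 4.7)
The plan is to compute $\theta_i(H)$ for each $H\in\calR_{\cM}(n)$ exactly as an embedding count, approximate that count using the local statistics $N_{G_i,B_{2q}}$, and then sum the resulting per-$H$ discrepancy via a tree-sum over the execution of $\cM$.

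First, since $\cM$ is deterministic, Lemma \ref{lem:SufficientConditionForMToHaveSameResult} gives that $H_{\cM,G_i\pi}=H$ is equivalent to $H\subset G_i\pi$ for $H\in\calR_\cM(n)$, whence
\[
\theta_i(H)=\Pr_\pi\bigl[H\subset G_i\pi\bigr]=\frac{(n-|V(H)|)!}{n!}\cdot\mathrm{Emb}(H,G_i),
\]
where $\mathrm{Emb}(H,G_i)$ counts the edge-preserving injections $V(H)\hookrightarrow[n]$ respecting $G_i$. Decomposing $H$ into its connected components $C_1,\dotsc,C_k$ with distinguished roots $v_j$, I would show that an embedding is determined by the images $u_j$ of the roots; that $u$ is a valid root for $C$ iff $\stab_{G_i}(u)\cap\bp_C(v)=\pstab_C(v)$; and that since $\bp_C(v)\subset B_{2q}$ the fraction $p_i(C):=n^{-1}|\{u:\phi_u\text{ is valid for }C\}|$ is a function of $N_{G_i,B_{2q}}$ alone, so $|p_0(C)-p_1(C)|\le d_{\TV}(N_{G_0,B_{2q}},N_{G_1,B_{2q}})$ by Fact \ref{fact:dTVMaxEvent}. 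A pairwise-overlap argument (for each pair of components, fixing one root restricts the other to at most $|V(H)|^2$ positions that cause overlap) bounds the disjointness error by $O(q^4)n^{k-1}$. Combined with $(n-|V(H)|)!/n!=(1+O(q^2/n))n^{-|V(H)|}$ and the identity $k-|V(H)|=-r(H)$, this yields
\[
\theta_i(H)=n^{-r(H)}\prod_{j=1}^k p_i(C_j)+O(q^4/n)\cdot n^{-r(H)},
\]
and the telescoping inequality $|\prod_j a_j-\prod_j b_j|\le\sum_j|a_j-b_j|$ for $a_j,b_j\in[0,1]$ gives the per-$H$ bound $|\theta_0(H)-\theta_1(H)|\le n^{-r(H)}\bigl(2q\,d_{\TV}(N_{G_0,B_{2q}},N_{G_1,B_{2q}})+O(q^4/n)\bigr)$.

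Second, I would bound $\sum_{H\in\calR_\cM(n)}n^{-r(H)}$ by a tree-sum over execution transcripts of $\cM$. Each such $H$ corresponds uniquely to a consistent answer sequence $(a_1,\dotsc,a_q)$, and $n^{-r(H)}=\prod_i n^{-\delta_i}$ where $\delta_i\in\{0,1\}$ is the contribution of step $i$ to $r(H)$. A case analysis on whether the $i$-th query vertex $x_i$ is already in $V(H_{i-1})$, is a fresh label, or the query $(x_i,s_i)$ has been made before, shows that in every case $\sum_{a_i\text{ consistent}}n^{-\delta_i}\le 2q+1$: the worst case occurs when $x_i$ is already seen and $a_i$ may range over the up-to-$2q$ vertices of its component (each contributing $\delta_i=0$, weight $1$), while the fresh answers (contributing $\delta_i=1$) total at most $n\cdot n^{-1}=1$. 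Multiplying over the $q$ steps yields $\sum_H n^{-r(H)}\le(2q+1)^q\le e^{1/2}(2q)^q$. Combining with the per-$H$ bound,
\[
\sum_H|\theta_0(H)-\theta_1(H)|\le e^{1/2}(2q)^q\bigl(2q\,d_{\TV}(N_{G_0,B_{2q}},N_{G_1,B_{2q}})+O(q^4/n)\bigr)\le(2q)^{q+2}\bigl(d_{\TV}(N_{G_0,B_{2q}},N_{G_1,B_{2q}})+c\,n^{-1/4}\bigr),
\]
where the $n^{-1/4}$ term arises from $q^4/n\le q^3\cdot n^{-3/4}\le n^{-1/2}$ under $q\le n^{1/4}$, absorbed into the extra factor $(2q)^2$. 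Dividing by $2$ gives the claimed bound on $d_{\TV}(\theta_0,\theta_1)$.

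The hard part will be rigorously verifying the per-step estimate $\sum_{a_i}n^{-\delta_i}\le 2q+1$. One must carefully enumerate the consistency constraints on $a_i$ (e.g., injectivity of $s_G$, compatibility with earlier queries) and track how each possible $a_i$ affects $|V(H)|$ and the number of components, so that the $\sim n$ choices of fresh answer and the $O(q)$ choices of recurring answer balance correctly against the respective weights $n^{-1}$ and $1$. The overlap error and the $(n-|V(H)|)!/n!$ approximation also require careful bookkeeping to verify that all sub-leading terms combine into the single $n^{-1/4}$ contribution.
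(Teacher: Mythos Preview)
Your proposal is correct and follows essentially the same approach as the paper. The paper separates the argument into a counting lemma (Lemma~\ref{lem:BoundedNumberOfRuns}, bounding $|\calR_{\cM,r}(n)|\le\binom{q}{r}n^r(2q)^{q-r}$ via the same increasing/nonincreasing step analysis you describe) and a per-$H$ probability estimate (Corollary~\ref{cor:ProbOfSubgraphDiff}, derived via the component decomposition and conditioning in Lemmas~\ref{lem:ProbOfConnectedSubgraph} and~\ref{lem:ProbOfSubgraph}), then combines them; you fold the counting and the $n^{-r(H)}$ weighting into a single tree-sum, which yields the same $(2q+1)^q$ bound. One minor slip: your line ``$q^4/n\le q^3\cdot n^{-3/4}\le n^{-1/2}$'' is false at $q=n^{1/4}$; the correct bookkeeping is $q^4/n\le (2q)^2\cdot q^2 n^{-1}\le(2q)^2\cdot n^{-1/2}\le(2q)^2\cdot n^{-1/4}$, which is what you actually need to absorb the error into $(2q)^{q+2}\cdot cn^{-1/4}$.
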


Before proving Lemma \ref{lem:LSMNegativeMainTechnical}, we show
that it implies Proposition \ref{prop:LSMNegative}.
\begin{proof}[Proof of Proposition \ref{prop:LSMNegative}]
	By the hypothesis, $E$ is $q\left(\eps\right)$-testable. Fix $\eps>0$
	and write $q_{\eps}=q\left(\eps\right)$. Let $P=B_{2q_{\eps}}\cup E$.
	Fix $G_{0}\in\GSOL_{E}\left(n\right)$ and $G_{1}\in\GSOL_{E}^{\ge\eps}\left(n\right)$,
	and denote $\delta=d_{\TV}\left(N_{G_{0},P},N_{G_{1},P}\right)$.
	Our goal is to show that
	\[
	\delta\ge q_{\eps}^{-O\left(q_{\eps}\right)}\,\,\text{,}
	\]
	where the implied constant is universal. We first deal with the case
	\[
	n<\max\left\{ \frac{\left(2q_{\eps}\right)^{4q_{\eps}+8}\cdot c^{4}}{0.057},q_{\eps}^{4}\right\} \,\,\text{.}
	\]
	where $c$ is the constant from Lemma \ref{lem:LSMNegativeMainTechnical}.
	We note that, since $G_{0}\in\GSOL_{E}\left(n\right)$, we have $\stab_{G_{0},E}\left(x\right)=E$
	for every $x\in\left[n\right]$. However, the same is not true for
	$\stab_{G_{1},E}(x)$ since $G_{1}\notin\GSOL_{E}\left(n\right)$. Since $E\subseteq P$, Corollary \ref{cor:d_TVIncreasingInP} yields (for the leftmost inequality)
	\[
	\delta\ge d_{\TV}\left(N_{G_{0},E},N_{G_{1},E}\right)\ge \Pr_{x\sim \U([n])}\left[\stab_{G_0,E}(x)=E\right]-\Pr_{x\sim \U([n])}\left[\stab_{G_1,E}(x)=E\right]\ge\frac{1}{n}\ge q_{\eps}^{-O\left(q_{\eps}\right)}\,\,\text{.}
	\]
	
	We proceed, assuming that 
	\begin{equation}
	n\ge\max\left\{ \frac{\left(2q_{\eps}\right)^{4q_{\eps}+8}\cdot c^{4}}{0.057},q_{\eps}^{4}\right\} \,\,\text{.}\label{eq:LSMNegativenIsLarge}
	\end{equation}
	Let $D$ be the distribution of the random $S$-graph $G_{i}\pi$,
	where $i$ is the result of a fair coin flip, and $\pi$ is independently
	sampled from $\U\left(\Sym\left(n\right)\right)$ uniformly. Applying
	Lemma \ref{lem:minimax} to $D$ yields a deterministic algorithm
	$\cM$, limited to making $q_{\eps}$ queries, such that
	\begin{align}
	\Pr_{G\sim D}\left[\text{\ensuremath{\calM} solves the input \ensuremath{G} correctly}\right]\ge0.99\,\,\text{.}\label{eq:MIsAGoodAlgorithm}
	\end{align}

	Let $\theta_{i}$, $i\in\left\{ 0,1\right\} ,$ be the distribution
	of the partial $S$-graph $H_{\cM,G_{i}\pi}$, where $\pi\sim \U\left(\Sym(n)\right)$. We claim that
	\begin{equation}\label{eq:LSMNegativedThetasLowerBound}
	d_{\TV}(\theta_0,\theta_1) \ge 0.49\,\,\text{.}
	\end{equation}
	
	Before proving \eqref{eq:LSMNegativedThetasLowerBound}, we show that it implies the proposition. Note that every $H\in\calR_{\cM}\left(n\right)$ has $\left|E\left(H\right)\right|\le q_{\eps}\le n^{\frac{1}{4}}$,
	where the rightmost inequality is due to (\ref{eq:LSMNegativenIsLarge}).
	Hence,
	\begin{align*}
	d_{\TV}\left(\theta_{0},\theta_{1}\right)\le & \left(2q_{\eps}\right)^{q_{\eps}+2}\left(d_{\TV}\left(N_{G_{0},B_{2q}},N_{G_{1},B_{2q}}\right)+cn^{-\frac{1}{4}}\right) & \text{by Lemma \ref{lem:LSMNegativeMainTechnical}}\\
	\le & \left(2q_{\eps}\right)^{q_{\eps}+2}\left(\delta+cn^{-\frac{1}{4}}\right) & \text{by Corollary \ref{cor:d_TVIncreasingInP}}\\
	\le & \left(2q_{\eps}\right)^{q_{\eps}+2}\delta+0.057^{1/4}\,\,\text{.} & \text{by \eqref{eq:LSMNegativenIsLarge}}
	\end{align*}
	Together with \eqref{eq:LSMNegativedThetasLowerBound}, this yields
	\[
	\delta\ge\left(2q_{\eps}\right)^{-q_{\eps}-2}\left(0.49-\underbrace{0.057^{1/4}}_{< 0.49}\right)\ge q_{\eps}^{-O\left(q_{\eps}\right)}\,\,\text{,}
	\]
	and the proposition follows.
	
	We turn to proving \eqref{eq:LSMNegativedThetasLowerBound}. Fact \ref{fact:dTVCoupling} yields a distribution $Q$ over $\calR_{\cM}\left(n\right)\times\calR_{\cM}\left(n\right)$
	with respective marginal distributions $\theta_{0}$ and $\theta_{1}$,
	such that 
	\[
	\Pr_{\left(H_{0},H_{1}\right)\sim Q}\left[H_{0}\ne H_{1}\right]=d_{\TV}\left(\theta_{0},\theta_{1}\right)\,\,\text{.}
	\]
	Let $\overline{Q}$ be a distribution over $\left(\Sym(n)\right)^{2}$
	which generates a pair of permutations $\left(\pi_{0},\pi_{1}\right)$
	by first sampling $\left(H_{0},H_{1}\right)\sim Q$, and then independently
	sampling $\pi_{0}\sim \U\left(\left\{ \pi\in\Sym(n)\mid H_{\cM,G_{0}\pi}=H_{0}\right\} \right)$
	and $\pi_{1}\sim \U\left(\left\{ \pi\sim\Sym(n)\mid H_{\cM,G_{1}\pi}=H_{1}\right\} \right)$.
	Observe that both marginal distributions of $\overline{Q}$ are equal
	to the uniform distribution on $\Sym(n)$. Indeed, for $i\in\{0,1\}$
	and $\alpha\in\Sym(n)$,
	\begin{align*}
	\Pr_{\left(\pi_{0},\pi_{1}\right)\sim\overline{Q}}\left[\pi_{i}=\alpha\right] & =\Pr_{\left(H_{0},H_{1}\right)\sim Q}\left[H_{i}=H_{\cM,G_{i}\alpha}\right]\cdot \frac{1}{\left|\left\{ \pi\in\Sym(n)\mid H_{\cM,G_{i}\pi}=H_{\calM,G_i\alpha}\right\} \right|}\\
	& =\Pr_{H_i\sim \theta_i}\left[H_i=H_{\cM,G_i\alpha}\right]\cdot \frac 1{\left|\left\{ \pi\in\Sym(n)\mid H_{\cM,G_{i}\pi}=H_{\cM,G_i\alpha}\right\} \right|}\\&=
	\frac{\left|\left\{ \pi\in\Sym(n)\mid H_{\cM,G_{i}\pi}=H_{\cM,G_i\alpha}\right\} \right|}{\left |\Sym(n)\right|}\cdot \frac 1{\left|\left\{ \pi\in\Sym(n)\mid H_{\cM,G_{i}\pi}=H_{\cM,G_i\alpha}\right\} \right|} \\
	&=\frac{1}{\left|\Sym(n)\right|}\,\,\text{.}
	\end{align*}
	Furthermore,
	
	\[
	\Pr_{\left(\pi_{0},\pi_{1}\right)\sim\overline{Q}}\left[H_{\cM,G_{0}\pi_{0}}\ne H_{\cM,G_{1}\pi_{1}}\right]=\Pr_{\left(H_{0},H_{1}\right)\sim Q}\left[H_{0}\ne H_{1}\right]=d_{\TV}\left(\theta_{0},\theta_{1}\right)\,\,\text{.}
	\]
	Then 
	\begin{align*}
	0.99 & \le\Pr_{G\sim D}\left[\text{\ensuremath{\calM} correctly solves the input \ensuremath{G}}\right] & \text{by (\ref{eq:MIsAGoodAlgorithm})}\nonumber \\
	& =\frac{1}{2}\left(\Pr_{\pi_{0}\sim \U\left(\Sym(n)\right)}\left[\cM\text{ correctly solves }G_{0}\pi_{0}\right]+\Pr_{\pi_{1}\sim \U\left(\Sym(n)\right)}\left[\cM\text{ correctly solves }G_{1}\pi_{1}\right]\right)\nonumber \\
	& =\frac 12 \left(\Pr_{\left(\pi_{0},\pi_{1}\right)\sim\overline{Q}}
	\left[\cM\text{ correctly solves }G_{0}\pi_{0}\right]+\Pr_{\left(\pi_{0},\pi_{1}\right)\sim\overline{Q}}
	\left[\cM\text{ correctly solves }G_{1}\pi_{1}\right]\right)\nonumber \\
	& =\frac{1}{2}\Pr_{\left(\pi_{0},\pi_{1}\right)\sim\overline{Q}}\left[\cM\text{ correctly solves exactly one of }G_{0}\pi_{0}\text{ and }G_{1}\pi_{1}\right]\nonumber \\
	& \,\,\,\,\,+\Pr_{\left(\pi_{0},\pi_{1}\right)\sim\overline{Q}}\left[\cM\text{ correctly solves both }G_{0}\pi_{0}\text{ and }G_{1}\pi_{1}\right]\nonumber \\
	& \le\frac{1}{2}+\Pr_{\left(\pi_{0},\pi_{1}\right)\sim\overline{Q}}\left[\cM\text{ correctly solves both }G_{0}\pi_{0}\text{ and }G_{1}\pi_{1}\right]\nonumber \\
	& =\frac{1}{2}+\Pr_{\left(\pi_{0},\pi_{1}\right)\sim\overline{Q}}\left[\cM\text{ accepts }G_{0}\pi_{0}\text{ and rejects }G_{1}\pi_{1}\right] & \text{by Remark \ref{rem:rerelabellingPreserves}}\nonumber \\
	& \le\frac{1}{2}+\Pr_{\left(\pi_{0},\pi_{1}\right)\sim\overline{Q}}\left[\cM\text{ returns different results on }G_{0}\pi_{0}\text{ and }G_{1}\pi_{1}\right] & \text{}\nonumber \\
	& \le\frac{1}{2}+\Pr_{\left(\pi_{0},\pi_{1}\right)\sim\overline{Q}}\left[H_{\cM,G_{0}\pi_{0}}\ne H_{\cM,G_{1}\pi_{1}}\right] & \text{by Lemma \nonumber \ref{lem:SufficientConditionForMToHaveSameResult}}\\
	& =\frac{1}{2}+d_{\TV}\left(\theta_{0},\theta_{1}\right)\,\,\text{,}\nonumber
	\end{align*}
	and \eqref{eq:LSMNegativedThetasLowerBound} follows.
\end{proof}

The rest of this section is devoted to proving Lemma \ref{lem:LSMNegativeMainTechnical}.
The following lemma bounds the cardinality of the set $\calR_{\calM,r}\left(n\right)$
(defined in (\ref{eq:DefOfR_M,r(n)})). 
\begin{lem}
	\label{lem:BoundedNumberOfRuns}Let $\cM$ be a deterministic algorithm, limited to making at most $q$ queries per run. Let
	$n,r\in\NN$. Then 
	\[
	\left|\calR_{\cM,r}\left(n\right)\right|\le\binom{q}{r}n^{r}\left(2q\right)^{q-r}\,\,\text{.}
	\]
\end{lem}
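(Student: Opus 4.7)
The plan is to encode each element $H \in \calR_{\cM,r}(n)$ by an answer sequence in $[n]^q$ and then count. Without loss of generality, I assume $\cM$ always makes exactly $q$ queries on every input (padding with dummy repeated queries if needed). Since $\cM$ is deterministic, the $i$-th query $(x_i, s_i)$ is a fixed function of the previous answers $(y_1, \dotsc, y_{i-1})$, and so the partial graph $H = H_{\cM, G}$ is completely determined by the answer sequence $(y_1, \dotsc, y_q) \in [n]^q$. Hence $|\calR_{\cM, r}(n)|$ is bounded above by the number of answer sequences that yield a partial graph with $r(H) = r$, and this is the quantity I will estimate.

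To count these sequences, I will track $r_i = r(H_i)$, where $H_i$ is the partial graph built from the first $i$ queried edges. A short case analysis on how many of $x_i, y_i$ lie in $V(H_{i-1})$, and on whether they belong to the same connected component of $H_{i-1}$, shows that $r_i - r_{i-1} \in \{0, 1\}$. I will call step $i$ \emph{productive} if this increment equals $1$. Since $r_0 = 0$, achieving $r(H_q) = r$ requires exactly $r$ productive steps among the $q$ positions, which contributes a factor of $\binom{q}{r}$ to the count.

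Next I bound the number of options for $y_i$ at each step. At a productive step, $y_i$ may be any of at most $n$ vertices. At a non-productive step, the case analysis forces one of two mutually exclusive situations: either $x_i \in V(H_{i-1})$ and $y_i$ must lie in the same connected component of $H_{i-1}$ as $x_i$, giving at most $|V(H_{i-1})| \le 2(i-1) \le 2q$ choices; or $x_i \notin V(H_{i-1})$ and $y_i = x_i$, a self-loop on a freshly introduced isolated vertex, giving a single choice. In either subcase there are at most $2q$ options. Multiplying across steps yields the claimed bound $\binom{q}{r}\, n^r\, (2q)^{q-r}$.

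The only genuinely delicate point is the non-productive case in which $x_i$ is a previously unseen vertex: one must check that $y_i = x_i$ is the unique answer keeping $|V(H_i)| - k(H_i)$ equal to $|V(H_{i-1})| - k(H_{i-1})$, since both $|V|$ and $k$ must jump by exactly $1$. Everything else is routine bookkeeping on component membership and on the crude bound $|V(H_{i-1})| \le 2(i-1)$ coming from the fact that each preceding step introduces at most two new vertices.
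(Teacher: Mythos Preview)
Your proof is correct and follows essentially the same approach as the paper: both track how $r(H_i)$ changes step by step, classify steps as increasing or nonincreasing, and bound the number of possible answers by $n$ at increasing steps and by $2q$ at nonincreasing steps. The only cosmetic differences are that the paper phrases the count as a recurrence on $|\calR_{\cM,r',i}(n)|$ rather than a direct enumeration of answer sequences, and bounds the nonincreasing answer slightly more crudely (by the full set of vertices seen so far rather than by the connected component of $x_i$), but both routes give the same $\binom{q}{r}n^r(2q)^{q-r}$ bound.
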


\begin{proof}
	For $G\in\calG_{S}\left(n\right)$ and $0\leq i\leq q$, let $H_{i}^{G}$
	be the partial $S$-graph such that the edges of $H_{i}^{G}$ are
	the edges of $G$ queried by $\cM$ in the first $i$ queries, and
	the vertices of $H_{i}^{G}$ are those that touch these edges. In
	particular, $H_{0}^{G}$ is the empty partial $S$-graph, and $H_{q}^{G}=H_{\cM,G}$.
	Note that the query $\left(x^{\cM,G,i},s^{\cM,G,i}\right)$ is determined
	by $H_{i-1}^{G}$. The answer to this query, for which there are at
	most $n$ possibilities, depends on $G$.
	
	Clearly, $r\left(H_{0}^{G}\right)=0$ and $r\left(H_{q}^{G}\right)=r\left(H_{\cM,G}\right)$.
	Furthermore, in each step, $r\left(H_{i}\right)=r\left(H_{i-1}\right)+1$
	or $r\left(H_{i}\right)=r\left(H_{i-1}\right)$. Respectively, we
	say that the $i$-th step is \emph{increasing} or \emph{nonincreasing}.
	If the $i$-th step is nonincreasing then the vertex given as an answer
	to the $i$-th query is in $\left\{ x^{\cM,G,j}\mid1\le j\leq i\right\} \cup\left\{ s^{\cM,G,j}x^{\cM,G,j}\mid1\le j\leq i-1\right\} $.
	In particular, there are at most $2q$ possible answers to the query
	in a nonincreasing step. Write $\calR_{\cM,r',i}\left(n\right)=\left\{ H_{i}^{G}\mid\text{\ensuremath{G\in\calG_{S}\left(n\right)} and \ensuremath{r\left(H_{i}^{G}\right)=r'}}\right\} $.
	By the above,
	
	\[
	\left|\calR_{\cM,r',i}\left(n\right)\right|\le n\left|\calR_{\cM,r'-1,i-1}\left(n\right)\right|+2q\left|\calR_{\cM,r',i-1}\left(n\right)\right|\quad\forall1\leq i\leq q\,\,\,\forall 0\le r'\le r \,\,\text{,}
	\]
	and thus
	\[
	\left|\calR_{\cM,r}\left(n\right)\right|=\left|\calR_{\cM,r,q}\left(n\right)\right|\le\binom{q}{r}n^{r}\left(2q\right){}^{q-r}\,\,\text{.}
	\]
\end{proof}
The next lemma provides a condition equivalent to $H\subset G$ when
$H$ is a connected partial $S$-graph.
\begin{lem}
	[An equivalent condition for  inclusion of a  partial $S$-graph] \label{lem:SubgraphCharacterization}Fix
	$G\in\calG_{S}\left(n\right)$ and $H\in\PG_{S}\left(n\right)$, where
	$H$ is connected and nonempty. Treating $H$ as undirected, fix an undirected spanning tree $T$ for it. Fix some $y_{0}\in V\left(H\right)$. For every $y\in V\left(H\right)$, fix a word
	 $w^{y}\in\spaths_{H}\left(y_{0}\right)$ such that $w_{H}^{y}y_{0}=y$, and the simple $H$-path from $y_0$ to $y$, induced by $w^y$, proceeds along edges of $T$. Then, $H\subset G$ if and only if
	
	\begin{equation}
	\stab_{G}\left(y_{0}\right)\cap\bp_{H}\left(y_{0}\right)=\pstab_H\left(y_{0}\right)\label{eq:EventADef}
	\end{equation}
	and
	
	\begin{equation}
	w_{G}^{y}y_{0}=y\quad\forall y\in V\left(H\right)\,\,\text{.}\label{eq:EventB_uDef}
	\end{equation}
\end{lem}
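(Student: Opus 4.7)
For the forward direction, I will assume $H \subset G$ and derive (1) and (2). Condition (2) is immediate: each $w^y$ is a simple $H$-path with $w^y_H y_0 = y$ by construction, and since $H \subset G$ the same walk is traced in $G$, yielding $w^y_G y_0 = y$. For (1), the inclusion $\pstab_H(y_0) \subseteq \stab_G(y_0) \cap \bp_H(y_0)$ follows because $H \subset G$ lets every closed $H$-walk be executed in $G$, so any $w \in \pstab_H(y_0)$ satisfies $w_G y_0 = w_H y_0 = y_0$. For the converse inclusion, take $u = w_1^{-1} w_2 \in \stab_G(y_0) \cap \bp_H(y_0)$ with $w_i \in \spaths_H(y_0)$; the endpoints $y_i := w_{i,G} y_0 = w_{i,H} y_0$ must coincide because $u_G y_0 = y_0$. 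The unreduced concatenation $w_1^{-1} w_2$ spells a closed $H$-walk at $y_0$, and since $F_S$-reduction removes only immediate backtracks (an edge traversed forward and then immediately backward, or vice versa), the reduced form $u$ still spells a closed $H$-walk, so $u \in \paths_H(y_0)$ with $u_H y_0 = y_0$, placing it in $\pstab_H(y_0)$.

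For the backward direction, assume (1) and (2). The plan is to show that every edge $x \overset{s}{\rightedge} y$ of $H$ satisfies $s_G x = y$, which is precisely $H \subset G$. The key step is to consider the element $u = (w^y)^{-1} s w^x \in F_S$; once $u \in \pstab_H(y_0)$ is established, condition (1) yields $u \in \stab_G(y_0)$, and unpacking $u_G y_0 = y_0$ together with condition (2) (which says $w^x_G y_0 = x$ and $w^y_G y_0 = y$) gives $s_G x = y$. The conditions $u \in \paths_H(y_0)$ and $u_H y_0 = y_0$ follow as in the forward direction because the unreduced word $(w^y)^{-1} s w^x$ spells the closed $H$-walk $y_0 \overset{w^x}{\rightedge} x \overset{s}{\rightedge} y \overset{(w^y)^{-1}}{\rightedge} y_0$, and $F_S$-reduction preserves closed $H$-walks.

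The delicate point is $u \in \bp_H(y_0)$, which requires writing $u = w_1^{-1} w_2$ with $w_1, w_2 \in \spaths_H(y_0)$. If $x \overset{s}{\rightedge} y$ lies in $T$, then either $w^y = s w^x$ or $w^x = s^{-1} w^y$ in $F_S$, so $u$ reduces to $e = e^{-1} e \in \bp_H(y_0)$ trivially. Otherwise, root $T$ at $y_0$ and let $z$ be the least common ancestor of $x$ and $y$. If $y \ne z$, then $y$ does not lie on the tree path from $y_0$ to $x$, so $s w^x$ spells a simple $H$-path from $y_0$ to $y$, and the decomposition $u = (w^y)^{-1}(s w^x)$ exhibits $u \in \bp_H(y_0)$. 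If $y = z$ (so $y$ is an ancestor of $x$ in $T$), then $x$ does not lie on the tree path from $y_0$ to $y$, and $s^{-1} w^y$ spells a simple $H$-path from $y_0$ to $x$, giving the decomposition $u = (s^{-1} w^y)^{-1} w^x$.

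The main obstacle will be cleanly executing the case analysis for $u \in \bp_H(y_0)$ in the non-tree-edge case, including verifying that the putative simple paths $s w^x$ and $s^{-1} w^y$ are reduced words in $F_S$. A cancellation at the seam would force the final tree step into $x$ (respectively, $y$) to coincide with the non-tree edge $x \overset{s}{\rightedge} y$; since each vertex of a partial $S$-graph has at most one outgoing and one incoming $s$-edge, this would force $x \overset{s}{\rightedge} y$ itself to lie in $T$, contradicting the non-tree-edge hypothesis.
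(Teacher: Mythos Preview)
Your proof is correct and follows essentially the same strategy as the paper. In the backward direction, the paper handles the ``delicate point'' more tersely: it simply declares that, without loss of generality, $y_2$ is not an internal vertex on the $T$-path from $y_0$ to $y_1$ (swapping the roles of $y_1,y_2$ and replacing $s$ by $s^{-1}$ if necessary), and then asserts $s w^{y_1}\in\spaths_H(y_0)$ directly. Your LCA case split is just an explicit unpacking of this WLOG, and your additional check that $s w^x$ is reduced (via the at-most-one-$s$-edge property of partial $S$-graphs) fills in a detail the paper leaves implicit.
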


\begin{proof}
	Suppose that $H\subset G$. Then $w_{G}y_{0}=w_{H}y_{0}$ for all
	$w\in\spaths_{H}\left(y_{0}\right)$, and (\ref{eq:EventB_uDef}) follows.
	The $\supset$ inclusion in (\ref{eq:EventADef}) is clear. To prove
	the $\subset$ inclusion, let $w\in\stab_{G}\left(y_{0}\right)\cap\bp_{H}\left(y_{0}\right)$
	and write $w=v'^{-1}v$ where $v,v'\in\spaths_{H}\left(y_{0}\right)$.
	Then $y_{0}=w_{G}y_{0}=v_{G}'^{-1}v_{G}y_{0}$, and thus $v'_{G}y_{0}=v_{G}y_{0}$.
	Hence $v'_{H}y_{0}=v_{H}y_{0}$, and therefore $w=v'^{-1}v\in\pstab_H\left(y_{0}\right)$.
	
	Conversely, suppose that (\ref{eq:EventADef}) and (\ref{eq:EventB_uDef})
	hold, and take an edge $y_{1}\overset{s}{\rightedge}y_{2}$ of $H$. Without loss of generality, assume that $y_2$ is not an internal vertex in the path, in $T$, from $y_0$ to $y_1$. Then, $sw^y_1\in \spaths_H(y_0)$, and so $\left(w^{y_{2}}\right)^{-1}sw^{y_{1}}\in \bp_H(y_0)$. 
	Note that $\left(sw^{y_{1}}\right)_{H}y_{0}=y_{2}=w_{H}^{y_{2}}y_{0}$.
	Thus, $\left(w^{y_{2}}\right)^{-1}sw^{y_{1}}\in\pstab_H\left(y_{0}\right)$.
	By (\ref{eq:EventADef}), $\left(sw^{y_{1}}\right)_{G}y_{0}=w_{G}^{y_{2}}y_{0}$.
	Consequently, (\ref{eq:EventB_uDef}) implies that $s_{G}y_{1}=y_{2}$,
	so $G$ contains the edge $y_{1}\overset{s}{\rightedge}y_{2}$.
\end{proof}
Next, for fixed $G\in\calG_{S}\left(n\right)$ and connected $H\in\PG_{S}\left(n\right)$,
we study
\begin{equation}
\Pr_{\pi\sim \U\left(\Sym\left(n\right)\right)}\left(H\subset G\pi\right)\,\,\text{.}\label{eq:connected-inclusion-event}
\end{equation}
We use Lemma \ref{lem:SubgraphCharacterization}. Fix some $y_0\in V(H)$ and a set of words $\{w^y\}_{y\in V(H)}$ as in the lemma. The probability that
\begin{equation}\label{eq:y0InCorrectClass}
\stab_{G\pi}(y_0)\cap \bp_H(y_0) = \pstab_H(y_0)
\end{equation} is $N_{G,\bp_{H}\left(y_{0}\right)}\left(\pstab_H\left(y_{0}\right)\right)$, since $\pi\left(y_{0}\right)$ is distributed uniformly
on $[n]$. Conditioned on the event \eqref{eq:y0InCorrectClass}, the probability that $$w_{G\pi}^yy_0=y\quad\forall y\in V(H)$$ is approximately
$n^{-\left(V\left(H\right)-1\right)}$. Lemma \ref{lem:ProbOfConnectedSubgraph}
below provides a more precise estimate of \eqref{eq:connected-inclusion-event}. Furthermore, the lemma shows
that (\ref{eq:connected-inclusion-event}) does not change by much
if we condition on the event $\pi\mid_{L}=f$ for a small subset $L$
of $\left[n\right]$ and an injective function $f\colon L\to\left[n\right]$.
\begin{lem}
	\label{lem:ProbOfConnectedSubgraph}Fix $G\in\calG_{S}\left(n\right)$,
	$H\in\PG_{S}\left(n\right)$, and suppose that $H$ is connected and has at least one edge. Let $y_{0}$ be an arbitrary vertex of $H$. Then
	\begin{equation}
	\Pr_{\pi\sim \U\left(\Sym\left(n\right)\right)}\left(H\subset G\pi\right)=\frac{\left(n-\left|V\left(H\right)\right|\right)!}{\left(n-1\right)!}N_{G,\bp_{H}\left(y_{0}\right)}\left(\pstab_H\left(y_{0}\right)\right)\,\,\text{.}\label{eq:ProbContainHConnected}
	\end{equation}
	Moreover, let $L\subset\left[n\right]\setminus V\left(H\right)$ and
	let $f\colon L\to\left[n\right]$ be an injective function. Then
	\begin{equation}
	\left|\Pr_{\pi\sim \U\left(\Sym\left(n\right)\right)}\left(H\subset G\pi\mid\pi\mid_{L}=f\right)-p\right|\le\frac{\left(n-|L|-\left|V(H)\right|\right)!}{\left(n-|L|-1\right)!}\cdot\frac{|L|\cdot\left|V(H)\right|}{n}\,\,\text{,}\label{eq:ProbContainHConnectedConditioned}
	\end{equation}
	where $\pi_{\mid L}$ denotes the restriction of $\pi$ to the set
	$L$ and 
	\[
	p=\frac{\left(n-\left|V\left(H\right)\right|-|L|\right)!}{\left(n-1-|L|\right)!}\cdot N_{G,\bp_{H}\left(y_{0}\right)}\left(\pstab_H\left(y_{0}\right)\right)\,\,\text{.}
	\]
\end{lem}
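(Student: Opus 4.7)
The idea is to apply the characterization of inclusion given by Lemma \ref{lem:SubgraphCharacterization} and then count permutations $\pi$ directly. Fix an arbitrary vertex $y_0 \in V(H)$, a spanning tree $T$ of the undirected graph underlying $H$, and a system of reduced words $\{w^y\}_{y \in V(H)}$ as in Lemma \ref{lem:SubgraphCharacterization} (with $w^{y_0}$ the empty word). Using Remark \ref{rem:Rerelabelling}, the two conditions of Lemma \ref{lem:SubgraphCharacterization}, applied to $G\pi$ in place of $G$, translate to
\begin{enumerate}
\item[(a)] $\stab_G(\pi(y_0)) \cap \bp_H(y_0) = \pstab_H(y_0)$, and
\item[(b)] $w^y_G\,\pi(y_0) = \pi(y)$ for every $y \in V(H)$.
\end{enumerate}
Condition (a) depends only on $\pi(y_0)$. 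Let $A = \{x \in [n] : \stab_G(x) \cap \bp_H(y_0) = \pstab_H(y_0)\}$, so that $|A|/n = N_{G,\bp_H(y_0)}(\pstab_H(y_0))$.

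The key preliminary observation is that for each $x_0 \in A$, the $|V(H)|$ values $\{w^y_G x_0 : y \in V(H)\}$ are pairwise distinct. Indeed, if $w^y_G x_0 = w^{y'}_G x_0$, then $(w^{y'})^{-1}w^y \in \stab_G(x_0) \cap \bp_H(y_0) = \pstab_H(y_0)$, whence $w^y_H y_0 = w^{y'}_H y_0$, and so $y = y'$. This allows the counting to proceed without overcounting. For the unconditional statement (\ref{eq:ProbContainHConnected}), I would condition on $\pi(y_0) = x_0$ (probability $1/n$); if $x_0 \notin A$ then (a) fails, while if $x_0 \in A$ then (b) becomes the requirement that $\pi$ send the remaining $|V(H)| - 1$ vertices of $V(H)$ to prescribed distinct targets, which has probability $(n-|V(H)|)!/(n-1)!$. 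Summing gives the desired formula.

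For the conditioned statement (\ref{eq:ProbContainHConnectedConditioned}), I would run the same argument under the event $\pi|_L = f$. Now $\pi(y_0)$ is uniform on $[n] \setminus f(L)$, so we restrict to $x_0 \in A \setminus f(L)$. Moreover, condition (b) can only hold if $w^y_G x_0 \notin f(L)$ for every $y \in V(H)$, since $\pi(y) = f(z)$ with $y \notin L$, $z \in L$ contradicts injectivity. Let $A_f$ be the set of $x_0 \in A$ satisfying all these non-clash constraints. Because each $w^y_G$ is a permutation of $[n]$, the set $\{x_0 : w^y_G x_0 \in f(L)\}$ has cardinality exactly $|L|$ for each $y$, whence
\[
|A| - |A_f| \le |L| \cdot |V(H)|.
\]
For $x_0 \in A_f$ the conditional probability that (b) holds (given $\pi|_L = f$ and $\pi(y_0) = x_0$) is $(n - |L| - |V(H)|)!/(n - |L| - 1)!$, yielding
\[
\Pr(H \subset G\pi \mid \pi|_L = f) = \frac{|A_f|}{n-|L|}\cdot\frac{(n-|L|-|V(H)|)!}{(n-|L|-1)!}.
\]

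Comparing this expression with $p$, the difference factors as $\bigl|\tfrac{|A_f|}{n-|L|} - \tfrac{|A|}{n}\bigr|$ times the falling-factorial ratio. The two terms in the decomposition
\[
\frac{|A_f|}{n-|L|} - \frac{|A|}{n} = \frac{|A_f| - |A|}{n-|L|} + \frac{|A|\cdot|L|}{n(n-|L|)}
\]
are both elementarily controlled using $|A| \le n$ and the bound $|A| - |A_f| \le |L|\cdot|V(H)|$ derived above, giving the required estimate (\ref{eq:ProbContainHConnectedConditioned}). The main obstacle is really just the bookkeeping in this last step; the conceptual content is entirely in the characterization via Lemma \ref{lem:SubgraphCharacterization} and the distinctness claim for the targets $\{w^y_G x_0\}$.
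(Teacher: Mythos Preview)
Your approach is essentially the paper's: both invoke Lemma~\ref{lem:SubgraphCharacterization}, translate via Remark~\ref{rem:Rerelabelling} into a condition on $\pi(y_0)$ and a system of prescribed values for $\pi$ on $V(H)\setminus\{y_0\}$, prove the targets $\{w_G^y x_0\}_y$ are distinct when $x_0$ lies in the ``good'' set, and count. The paper unfolds the count sequentially via a chain rule over events $B_i$, while you compute it in one shot after conditioning on $\pi(y_0)=x_0$; the latter is slightly more direct but not materially different.

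There is, however, a genuine arithmetic slip in your last step. Bounding the two terms of
\[
\frac{|A_f|}{n-|L|}-\frac{|A|}{n}=\frac{|A_f|-|A|}{n-|L|}+\frac{|A|\cdot|L|}{n(n-|L|)}
\]
separately using only $|A|\le n$ and $|A|-|A_f|\le |L|\,|V(H)|$ yields $\frac{|L|(|V(H)|+1)}{n-|L|}$, which does \emph{not} imply the claimed $\frac{|L|\,|V(H)|}{n}$. The paper sidesteps this by a two-step comparison: writing $W_2=\bigcup_{y\ne y_0}(w_G^y)^{-1}(f(L))$ (so $A_f=(A\setminus W_2)\setminus f(L)$), it first uses $d_{\TV}\!\bigl(U([n]\setminus f(L)),U([n])\bigr)=|L|/n$ to get $\bigl|\frac{|A_f|}{n-|L|}-\frac{|A\setminus W_2|}{n}\bigr|\le |L|/n$, and then $\bigl|\frac{|A\setminus W_2|}{n}-\frac{|A|}{n}\bigr|\le(|V(H)|-1)|L|/n$. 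Summing gives exactly $|L|\,|V(H)|/n$. Your argument is easily repaired by this same intermediate comparison; alternatively, a case analysis using the extra constraint $|A_f|\le n-|L|$ (which you did not invoke) also recovers the sharp bound.
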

\begin{rem}\label{rem:ProbOfConnectedSubgraphSpecial}
	In the setting of Lemma \ref{lem:ProbOfConnectedSubgraph}, when $L\le O(n^{1/4})$ and $|V(H)|\le O(n^{1/4})$, \eqref{eq:ProbContainHConnectedConditioned} yields
	$$\Pr_{\pi\sim \U\left(\Sym\left(n\right)\right)}\left(H\subset G\pi\mid\pi\mid_{L}=f\right) = n^{-(V(H)-1)}\cdot \left(N_{G,\bp_{H}\left(y_{0}\right)}\left(\pstab_H\left(y_{0}\right)\right)\pm O\left(n^{-\frac 12}\right)\right). $$
\end{rem}

\begin{proof}[Proof of Lemma \ref{lem:ProbOfConnectedSubgraph}]
	Since (\ref{eq:ProbContainHConnected}) is a special case of (\ref{eq:ProbContainHConnectedConditioned}),
	we only prove the latter. Let $D=\left\{ \pi\in\Sym\left(n\right)\mid\pi\mid_{L}=f\right\} $,
	and note that the distribution of $\pi\sim \U\left(\Sym\left(n\right)\right)$,
	conditioned on the event $\pi\mid_{L}=f$, is the uniform distribution
	on $D$.
	
	Fix $y_{0}\in V\left(H\right)$ and write $V\left(H\right)=\left\{ y_{0},y_{1},\ldots,y_{|V\left(H\right)|-1}\right\} $. For every $0\le i\le |V(H)|-1$, fix a word $w^{y_i}\in \spaths_{H}(y_0)$ such that $w_H^{y_i}y_0=y_i$. 
	Let
	\[
	A=\left\{ \pi\in\Sym\left(n\right)\mid\stab_{G\pi}\left(y_{0}\right)\cap\bp{}_{H}\left(y_{0}\right)=\pstab_H\left(y_{0}\right)\right\} \,\,\text{.}
	\]
	For $0\le i\le\left|V\left(H\right)\right|-1$, let
	\[
	B_{i}=\left\{ \pi\in\Sym\left(n\right)\mid w_{G\pi}^{y_{i}}y_{0}=y_{i}\right\} \,\,\text{.}
	\]
	
	By Lemma \ref{lem:SubgraphCharacterization},
	\begin{equation}
	\Pr_{\pi\sim \U\left(D\right)}\left(H\subset G\pi\right)=\Pr_{\pi\sim \U\left(D\right)}\left(\pi\in A\cap\bigcap_{i=0}^{|V(H)|-1}B_{i}\right)\,\,\text{.}\label{eq:connected-subgraph-proof-first-formulation}
	\end{equation}
	Let 
	\[
	\overline{A}=\left\{ \pi\in A\mid w_{G\pi}^{y_{i}}y_{0}\notin L\text{ for all }1\le i\le\left|V(H)\right|-1\right\} \,\,\text{.}
	\]
	Observe that
	\[
	A\cap\bigcap_{i=1}^{|V(H)|-1}B_{i}\subset\overline{A}\,\,\text{.}
	\]
	Indeed, $\pi\in B_{i}$ implies that $w_{G\pi}^{y_{i}}y_{0}=y_{i}$,
	and in particular $w_{G\pi}^{y_{i}}y_{0}\notin L$. Hence, since $\overline A\subseteq A$,
	\begin{equation}
	A\cap\bigcap_{i=1}^{|V(H)|-1}B_{i}=\overline{A}\cap\bigcap_{i=1}^{|V(H)|-1}B_{i}\,\,\text{.}\label{eq:connected-subgraph-olA-vs-A}
	\end{equation}
	
	Next, we show that $\ol A\subset B_{0}$. It always holds that $w_{H}^{y_{0}}(y_{0})=y_{0}$,
	so $w^{y_{0}}\in\pstab_H\left(y_{0}\right)$. Consequently, $\pi\in\ol A$
	implies $w^{y_{0}}\in\stab_{G\pi}\left(y_{0}\right)$, which implies
	$\pi\in B_{0}$. Therefore, by (\ref{eq:connected-subgraph-proof-first-formulation})
	and (\ref{eq:connected-subgraph-olA-vs-A}), 
	\begin{equation}
	\Pr_{\pi\sim \U\left(D\right)}\left(H\subset G\pi\right)=\Pr_{\pi\sim \U\left(D\right)}\left(\pi\in\overline{A}\right)\cdot\prod_{i=1}^{\left|V\left(H\right)\right|-1}\Pr_{\pi\sim \U\left(D\right)}\left(\pi\in B_{i}\mid\pi\in\overline{A}\cap\bigcap_{j=0}^{i-1}B_{j}\right)\,\,\text{.}\label{eq:connected-subgraph-proof-0}
	\end{equation}
	We claim that 
	\begin{equation}
	\left|\Pr_{\pi\sim \U\left(D\right)}\left(\pi\in\overline{A}\right)-N_{G,\bp_{H}\left(y_{0}\right)}\left(\pstab_H\left(y_{0}\right)\right)\right|\leq\frac{\left|L\right|\cdot|V(H)|}{n}\label{eq:connected-subgraph-proof-1}
	\end{equation}
	and
	\begin{equation}
	\Pr_{\pi\sim \U\left(D\right)}\left(\pi\in B_{i}\mid\pi\in\overline{A}\cap\bigcap_{j=0}^{i-1}B_{j}\right)=\frac{1}{n-|L|-i}\quad\forall1\leq i\leq\left|V\left(H\right)\right|-1\,\,\text{,}\label{eq:connected-subgraph-proof-2}
	\end{equation}
	and thus (\ref{eq:ProbContainHConnectedConditioned}) follows from
	(\ref{eq:connected-subgraph-proof-0}).
	
	To prove (\ref{eq:connected-subgraph-proof-1}), let 
	\[
	W_{1}=\left\{ x\in[n]\mid\stab_{G}\left(x\right)\cap\bp{}_{H}\left(y_{0}\right)=\pstab_H\left(y_{0}\right)\right\} 
	\]
	and
	\[
	W_{2}=\left\{ x\in[n]\mid w_{G}^{y_{i}}x\in f\left(L\right)\text{ for some }1\le i\le\left|V(H)\right|-1\right\} \,\,\text{,}
	\]
	and note that $\pi\in\ol A$ if and only if $\pi y_{0}\in W_{1}\setminus W_{2}$.
	Also, observe that $\pi y_{0}$ is distributed uniformly on $\left[n\right]\setminus f\left(L\right)$
	when $\pi\sim \U\left(D\right)$. Hence, 
	\begin{align}
	\left|\Pr_{\pi\sim \U\left(D\right)}\left(\pi\in\overline{A}\right)-\frac{|W_{1}\setminus W_{2}|}{n}\right| & =\left|\Pr_{x\sim \U\left(\left[n\right]\setminus f\left(L\right)\right)}\left(x\in W_{1}\setminus W_{2}\right)-\frac{|W_{1}\setminus W_{2}|}{n}\right|\nonumber \\
	& =\left|\Pr_{x\sim \U\left(\left[n\right]\setminus f\left(L\right)\right)}\left(x\in W_{1}\setminus W_{2}\right)-\Pr_{x\sim \U\left([n]\right)}\left(x\in W_{1}\setminus W_{2}\right)\right|\nonumber \\
	& \le d_{\TV}\left(\U\left(\left[n\right]\setminus f\left(L\right)\right),\U\left(\left[n\right]\right)\right) & \text{by Fact \ref{fact:dTVMaxEvent}}\nonumber \\
	& =\frac{\left|L\right|}{n}\,\,\text{.}\label{eq:connected-subgraph-proof-3}
	\end{align}
	Now, 
	\[
	\left|W_{1}\right|=N_{G,\bp_{H}\left(y_{0}\right)}\left(\pstab_H\left(y_{0}\right)\right)\cdot n
	\]
	and 
	\[
	0\le\left|W_{2}\right|\le\left(|V(H)|-1\right)\cdot|L|\,\,\text{.}
	\]
	Thus 
	\[
	\left|\frac{\left|W_{1}\setminus W_{2}\right|}{n}-N_{G,\bp_{H}\left(y_{0}\right)}\left(\pstab_H\left(y_{0}\right)\right)\right|\le\frac{\left(|V(H)|-1\right)\cdot\left|L\right|}{n}\,\,\text{,}
	\]
	which implies (\ref{eq:connected-subgraph-proof-1}) by virtue of
	(\ref{eq:connected-subgraph-proof-3}) and the triangle inequality.
	
	We turn to proving (\ref{eq:connected-subgraph-proof-2}). First note
	that if $\pi\in\ol A$ then $w_{G\pi}^{y_{0}}y_{0},w_{G\pi}^{y_{1}}y_{0},\dotsc,w_{G\pi}^{y_{\left|V\left(H\right)\right|-1}}y_{0}$
	are distinct. Indeed, if $w_{G\pi}^{y_{i}}y_{0}=w_{G\pi}^{y_{j}}y_{0}$
	then $\left(w^{y_{j}}\right)^{-1}w^{y_{i}}\in\stab_{G}\left(y_{0}\right)\cap\bp_{H}\left(y_{0}\right)$,
	and thus $\left(w^{y_{j}}\right)^{-1}w^{y_{i}}\in\pstab_H(y_{0})$
	because $\pi\in\ol A$. Consequently,
	\[
	y_{j}=w_{H}^{y_{j}}y_{0}=w_{H}^{y_{i}}y_{0}=y_{i}\,\,\text{,}
	\]
	and so $i=j$ since $y_{0},\dotsc,y_{\left|V\left(H\right)\right|-1}$
	are distinct. 
	
	In particular, for $1\le i\le\left|V(H)\right|-1$, the event $\pi\in\ol A\cap\bigcap_{j=0}^{i-1}B_{j}$
	implies that 
	\[
	\pi^{-1}\left(w_{G}^{y_{i}}(\pi y_{0})\right)=w_{G\pi}^{y_{i}}y_{0}\notin\left\{ y_{0},\dotsc,y_{i-1}\right\} \,\,\text{,}
	\]
	and so
	\begin{equation}
	w_{G}^{y_{i}}(\pi y_{0})\notin\left\{ \pi y_{0},\dotsc,\pi y_{i-1}\right\} \,\,\text{.}\label{eq:SGraphVerticesDistinctDueToRecentEvents}
	\end{equation}
	Observe that, since 
	\[
	\pi\in B_{j}\iff w_{G}^{y_{j}}\left(\pi y_{0}\right)=\pi y_{j},
	\]
	the event $\pi\in\ol A\cap\bigcap_{j=0}^{i-1}B_{j}$ is determined
	solely by the restriction $\pi_{\mid\left\{ y_{0},\dotsc,y_{i-1}\right\} }$.
	In other words, for a permutation $\pi\in D$ we have 
	\begin{equation}
	\pi\in\ol A\cap\bigcap_{j=0}^{i-1}B_{j}\iff\pi_{\mid\left\{ y_{0},\dotsc,y_{i-1}\right\} }\in K\label{eq:connected-subgraph-proof-piAndV}
	\end{equation}
	where 
	\[
	K=\left\{ g\colon\left\{ y_{0},\dotsc,y_{i-1}\right\} \to\left[n\right]\setminus f(L)\mid g\text{ is injective and }\text{\ensuremath{\pi_{\mid y_{0},\dotsc,y_{i-1}}=g} implies }\text{\ensuremath{\pi\in}\ensuremath{\ol A\cap\bigcap_{j=0}^{i-1}B_{j}}}\right\} \,\,\text{.}
	\]
	Fix some $g\in K$, and consider a random permutation $\pi\sim \U\left(D\right)$,
	conditioned on $\pi_{\mid\left\{ y_{0},\dotsc,y_{i-1}\right\} }=g$.
	Note that $\pi y_{i}$ is distributed uniformly on the set $[n]\setminus\left(L\cup\left\{ \pi y_{0},\dotsc,\pi y_{i-1}\right\} \right)$.
	Since $\pi\in\overline{A}$, we have $w_{G}^{y_{i}}(\pi y_{0})\notin L$.
	Together with (\ref{eq:SGraphVerticesDistinctDueToRecentEvents}),
	it follows that
	\begin{align*}
	 \Pr_{\pi\sim \U\left(D\right)}\left(\pi\in B_{i}\mid\pi_{\mid\left\{ y_{0},\dotsc,y_{i-1}\right\} }=g\right)
	 &=\Pr_{\pi\sim \U\left(D\right)}\left(\pi y_{i}=w_{G}^{y_{i}}(\pi y_{0})\mid\pi_{\mid\left\{ y_{0},\dotsc,y_{i-1}\right\} }=g\right)\\
	& =\frac{1}{\left|[n]\setminus\left(L\cup\left\{ \pi y_{0},\dotsc,\pi y_{i-1}\right\} \right)\right|}=\frac{1}{n-L-i}\,\,\text{.}
	\end{align*}
	Finally, (\ref{eq:connected-subgraph-proof-2}) follows by virtue
	of (\ref{eq:connected-subgraph-proof-piAndV}).
\end{proof}
We next generalize Lemma \ref{lem:ProbOfConnectedSubgraph} to the
case in which $H$ is not necessarily connected, provided that it
does not have too many edges.
\begin{lem}
	\label{lem:ProbOfSubgraph}Fix $G\in\calG_{S}\left(n\right)$ and
	$H\in\PG_{S}\left(n\right)$. Denote the connected components of $H$
	by $C_{1},\ldots,C_{k}$, and suppose that every connected component has at least one edge. Fix $x_{i}\in V\left(C_{i}\right)$
	for each $1\le i\le k$. Suppose that
	\begin{equation}
	\left|E\left(H\right)\right|\le n^{\frac{1}{4}}\,\,\text{.}\label{eq:HSmallerThanRootn}
	\end{equation}
	Then,
	\[
	\Pr_{\pi\sim \U\left(\Sym\left(n\right)\right)}\left(H\subset G\pi\right)=n^{-r\left(H\right)}\prod_{i=1}^{k}N_{G,\bp_{H}\left(x_{i}\right)}\left(\pstab_H\left(x_{i}\right)\right)+O\left(n^{-r\left(H\right)-\frac{1}{4}}\right)\,\,\text{,}
	\]
	where the implied constant is universal.
\end{lem}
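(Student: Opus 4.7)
I would reduce the disconnected case to the connected case by sequential conditioning on $\pi|_{V_{i-1}}$, where $V_i = V(C_1) \cup \cdots \cup V(C_i)$. Since the components $C_1, \dots, C_k$ are mutually vertex-disjoint, the hypothesis $L \cap V(C_i) = \emptyset$ in Lemma \ref{lem:ProbOfConnectedSubgraph} is automatically satisfied when $L = V_{i-1}$. Also observe that $\bp_H(x_i) = \bp_{C_i}(x_i)$ and $\pstab_H(x_i) = \pstab_{C_i}(x_i)$, since $H$-paths emanating from $x_i$ must remain in its connected component.

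Define $P_i = \Pr_{\pi}(C_1, \dots, C_i \subset G\pi)$ and $A_i = (n - |V_i|)!/(n - |V_{i-1}| - 1)!$ and $N_i = N_{G, \bp_{C_i}(x_i)}(\pstab_{C_i}(x_i))$. By symmetry, conditional on the event $\{C_1, \dots, C_{i-1} \subset G\pi\}$, the random variable $\pi$ is uniform on the union of cosets corresponding to its valid restrictions $f\colon V_{i-1} \to [n]$. Applying the second part of Lemma \ref{lem:ProbOfConnectedSubgraph} to $C_i$ with $L = V_{i-1}$ gives, for every such $f$, the bound $\Pr(C_i \subset G\pi \mid \pi|_{V_{i-1}} = f) = A_i N_i \pm A_i |V_{i-1}||V(C_i)|/n$. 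Crucially the bound is independent of $f$, so averaging yields $|P_i - A_i N_i P_{i-1}| \le P_{i-1} \cdot A_i |V(H)|^2/n$. From $|E(H)| \le n^{1/4}$ we get $|V(H)| \le 2n^{1/4}$ and $k \le n^{1/4}$ (each $C_i$ has at least one edge). Letting $Q_i = \prod_{j=1}^{i} A_j N_j$ and unrolling the triangle-inequality recursion $|P_i - Q_i| \le A_i |P_{i-1} - Q_{i-1}| + P_{i-1} \cdot A_i |V(H)|^2/n$ yields
\[
|P_k - Q_k| \le \frac{|V(H)|^2}{n} \sum_{i=1}^{k} \prod_{j=i}^{k} A_j.
\]
Since $A_j \le 2^{r(C_j)} n^{-r(C_j)}$ for $n$ large, each tail product $\prod_{j \ge i} A_j$ is $O(n^{-\sum_{j \ge i} r(C_j)}) \le O(n^{-r(H)})$ (after absorbing the drop in the earlier factors into the constant). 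Each summand is thus $O(n^{-r(H)-1/2})$, and summing over $k \le n^{1/4}$ terms gives $|P_k - Q_k| = O(n^{-r(H) - 1/4})$.

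Finally, the product $\prod_{i=1}^{k} A_i$ is the reciprocal of $\prod_{\ell}(n - \ell)$ over a set of $r(H)$ indices $\ell \le |V(H)| \le 2n^{1/4}$, so each factor equals $n(1 \pm O(n^{-3/4}))$, giving $\prod_i A_i = n^{-r(H)}(1 + O(r(H) n^{-3/4})) = n^{-r(H)}(1 + O(n^{-1/2}))$. Hence $Q_k = n^{-r(H)} \prod_{i=1}^{k} N_i + O(n^{-r(H) - 1/2})$, and combining with the previous bound proves the lemma. \textbf{The main obstacle} is that naively summing $k$ conditional errors could pick up a factor of $n^{1/4}$, but this is tamed by the observation that each individual factor $A_i$ already contributes $n^{-r(C_i)}$, so the $n^{-1/2}$ per-step slack comfortably absorbs the $n^{1/4}$ factor from the number of components.
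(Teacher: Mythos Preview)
Your overall strategy---sequential conditioning on $\pi|_{V_{i-1}}$ and reducing each step to Lemma \ref{lem:ProbOfConnectedSubgraph}---is exactly the paper's approach. The gap is in the error accounting. After writing
\[
|P_k - Q_k| \le \frac{|V(H)|^2}{n}\sum_{i=1}^{k}\prod_{j=i}^{k} A_j,
\]
you claim each tail product $\prod_{j\ge i} A_j$ is $O(n^{-r(H)})$. This is false: $\prod_{j\ge i} A_j$ is only $O(n^{-\sum_{j\ge i} r(C_j)})$, and for $i>1$ that exponent is strictly smaller than $r(H)$. Concretely, if $H$ has $n^{1/4}$ components each consisting of a single edge (so $r(H)=n^{1/4}$), the $i=k$ term alone contributes $\tfrac{|V(H)|^2}{n}\cdot A_k \asymp n^{-3/2}$, which is astronomically larger than the target $O(n^{-r(H)-1/4})$. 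The parenthetical ``after absorbing the drop in the earlier factors into the constant'' cannot help once you have already bounded $P_{i-1}\le 1$.

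The fix is to keep the factor $P_{i-1}$ in the unrolled recursion and bound it by $P_{i-1}\le (1+\varepsilon)^{i-1}\prod_{j<i}A_j$ (an immediate consequence of the one--step bound $P_j\le P_{j-1}A_j(1+\varepsilon)$). Then $P_{i-1}\prod_{j\ge i}A_j\le (1+\varepsilon)^{k}\prod_{j=1}^{k}A_j=O(n^{-r(H)})$, and summing over $i\le k\le n^{1/4}$ with $\varepsilon=O(n^{-1/2})$ gives the desired $O(n^{-r(H)-1/4})$. Equivalently---and this is how the paper packages it---use Remark \ref{rem:ProbOfConnectedSubgraphSpecial} to write each conditional probability as $n^{-r(C_i)}\bigl(N_i\pm O(n^{-1/2})\bigr)$; then multiplying the $k$ factors pulls out $n^{-r(H)}$ automatically and the remaining product $\prod_i(N_i\pm O(n^{-1/2}))$ differs from $\prod_i N_i$ by at most $(1+O(n^{-1/2}))^k-1=O(kn^{-1/2})=O(n^{-1/4})$.
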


\begin{proof}
	Clearly, 
	\[
	\Pr_{\pi\sim \U\left(\Sym\left(n\right)\right)}\left(H\subset G\pi\right)=\prod_{i=1}^{k}\Pr_{\pi\sim \U\left(\Sym\left(n\right)\right)}\left(C_{i}\subset G\pi\mid\bigcup_{j=1}^{i-1}C_{j}\subset G\pi\right)\,\,\text{,}
	\]
	where by $C_{i}\subset G\pi$ we mean that $G\pi$ contains every
	edge in the connected component $C_{i}$ of $H$. To bound the $i$-th
	term of this product, let $L_{i}=\bigcup_{j=1}^{i-1}C_{j}$, and note
	that the event $\bigcup_{j=1}^{i-1}C_{j}\subset G\pi$ is determined
	solely by the restriction $\pi_{\mid L_{i}}$. Indeed, $G\pi$ contains
	the edge $a\overset{s}{\rightedge b}$ of $C_{i}$ if and only if
	$s_{G}\left(\pi a\right)=\pi b$. Hence there exists a set $D_{i}$
	of injective functions $f\colon L_{i}\to[n]$, such that 
	\[
	\bigcup_{j=1}^{i-1}C_{j}\subset G\pi\iff\pi_{\mid L_{i}}\in D_{i}\,\,\text{.}
	\]
	Thus, we can write 
	\begin{equation}
	\Pr_{\pi\sim \U\left(\Sym\left(n\right)\right)}\left(C_{i}\subset G\pi\mid\bigcup_{j=1}^{i-1}C_{j}\subset G\pi\right)=\sum_{f\in D_{i}}\Pr_{\pi\sim \U\left(\Sym(n)\right)}\left(\pi_{\mid L_{i}}=f\mid\pi_{\mid L_{i}}\in D_{i}\right)\cdot\Pr_{\pi\sim \U\left(\Sym\left(n\right)\right)}\left(C_{i}\subset G\pi\mid\pi_{\mid L_{i}}=f\right)\,\,\text{.}\label{eq:ProbOfSubgraph1}
	\end{equation}
	By Remark \ref{rem:ProbOfConnectedSubgraphSpecial},
	\begin{equation}
	\Pr_{\pi\sim \U\left(\Sym(n)\right)}\left(C_{i}\subset G\pi\mid\pi_{\mid L_{i}}=f\right)=n^{-\left(|C_{i}|-1\right)}\cdot\left(N_{G,\bp_{H}\left(x_{i}\right)}\left(\pstab_H\left(x_{i}\right)\right)\pm O\left(n^{-\frac{1}{2}}\right)\right)\,\,\text{,}\label{eq:ProbOfSubgraph2}
	\end{equation}
	where we used \eqref{eq:HSmallerThanRootn}, and the fact that both
	$|C_{i}|$ and $|L_{i}|$ are bounded from above by $2\left|E\left(H\right)\right|$.
	Now, (\ref{eq:ProbOfSubgraph1}) and (\ref{eq:ProbOfSubgraph2}) yield
	\[
	\Pr_{\pi\sim \U\left(\Sym\left(n\right)\right)}\left(C_{i}\subset G\pi\mid\bigcup_{j=1}^{i-1}C_{j}\subset G\pi\right)=n^{-\left(\left|C_{i}\right|-1\right)}\cdot\left(N_{G,\bp_{H}\left(x_{i}\right)}\left(\pstab_H\left(x_{i}\right)\right)\pm O\left(n^{-\frac{1}{2}}\right)\right)\,\,\text{.}
	\]
	Since $r\left(H\right)=\sum_{i=1}^{k}\left(\left|C_{i}\right|-1\right)$,
	and $N_{G,P}(Q)\le1$ for every $P$ and $Q$, it follows that 
	\begin{align*}
	\Pr_{\pi\sim \U\left(\Sym\left(n\right)\right)}\left(H\subset G\pi\right) & =\prod_{i=1}^{k}\Pr_{\pi\sim \U\left(\Sym\left(n\right)\right)}\left(C_{i}\subset G\pi\mid\bigcup_{j=1}^{i-1}C_{j}\subset G\pi\right)\\
	& =n^{-r(H)}\left(\prod_{i=1}^{k}N_{G,\bp_{H}\left(x_{i}\right)}\left(\pstab_H\left(x_{i}\right)\right)+O\left(\sum_{i=1}^{k}\binom{k}{i}\cdot n^{-\frac{i}{2}}\right)\right)\,\,\text{,}
	\end{align*}
	which yields the lemma since $k\le\left|E\left(H\right)\right|\le n^{\frac{1}{4}}$. 
\end{proof}
\begin{cor}
	\label{cor:ProbOfSubgraphDiff}Let $G_{0},G_{1}\in\calG_{S}\left(n\right)$,
	and take $H\in\PG_{S}\left(n\right)$. Suppose that $\left|E(H)\right|\le n^{\frac{1}{4}}$, and that every connected component of $H$ has at least one edge.
	Then,
	\[
	\left|\Pr_{\pi\sim \U\left(\Sym\left(n\right)\right)}\left(H\subset G_{0}\pi\right)-\Pr_{\pi\sim \U\left(\Sym\left(n\right)\right)}\left(H\subset G_{1}\pi\right)\right|\le n^{-r\left(H\right)}\cdot\left|E(H)\right|\cdot d_{\TV}\left(N_{G_{0},B_{2\left|E(H)\right|}},N_{G_{1},B_{2\left|E(H)\right|}}\right)+cn^{-r\left(H\right)-\frac{1}{4}}\,\,\text{,}
	\]
	for a universal constant $c>0$.
\end{cor}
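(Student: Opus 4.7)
The plan is to apply Lemma~\ref{lem:ProbOfSubgraph} to both $G_0$ and $G_1$, subtract, and then reduce the resulting difference of products of probabilities $N_{G_j,\bp_H(x_i)}(\pstab_H(x_i))$ to the global total-variation quantity appearing in the statement. Concretely, Lemma~\ref{lem:ProbOfSubgraph} yields, for each $j\in\{0,1\}$,
\[
\Pr_{\pi\sim U(\Sym(n))}\bigl(H\subset G_j\pi\bigr)=n^{-r(H)}\prod_{i=1}^{k}N_{G_j,\bp_H(x_i)}\bigl(\pstab_H(x_i)\bigr)+O\bigl(n^{-r(H)-1/4}\bigr),
\]
so subtracting gives the asserted $O(n^{-r(H)-1/4})$ error term plus the factor $n^{-r(H)}$ times a difference of products.

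To handle the difference of products, I would use the standard telescoping bound: for $a_i,b_i\in[0,1]$, $|\prod a_i-\prod b_i|\le\sum_{i=1}^{k}|a_i-b_i|$. Each single-index difference $|N_{G_0,\bp_H(x_i)}(\pstab_H(x_i))-N_{G_1,\bp_H(x_i)}(\pstab_H(x_i))|$ is at most $d_{\TV}(N_{G_0,\bp_H(x_i)},N_{G_1,\bp_H(x_i)})$ by Fact~\ref{fact:dTVMaxEvent} (evaluating both distributions on the singleton event $\{\pstab_H(x_i)\}$ in $\Subsets(\bp_H(x_i))$). Then Corollary~\ref{cor:d_TVIncreasingInP} lets me replace $\bp_H(x_i)$ by any superset, provided I justify $\bp_H(x_i)\subseteq B_{2|E(H)|}$.

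For the word-length bound, the key observation is that a simple $H$-path $w=w_m\cdots w_1$ starting at $x_i$ traverses vertices $x_0,\dots,x_m$ which are distinct apart from possibly $x_0=x_m$; in particular all the edges used along the path are distinct, so $m\le|E(C_i)|\le|E(H)|$. Hence every element of $\bp_H(x_i)=\{w'^{-1}w\}$ has free-group length at most $2|E(H)|$, i.e.\ $\bp_H(x_i)\subseteq B_{2|E(H)|}$. Combined with the telescoping step, this gives the bound $k\cdot d_{\TV}(N_{G_0,B_{2|E(H)|}},N_{G_1,B_{2|E(H)|}})$ for the difference of products. Finally, since every connected component of $H$ contains at least one edge, the number $k$ of components is at most $|E(H)|$, yielding the factor $|E(H)|$ in the statement.

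There is no serious obstacle here: the argument is an essentially mechanical combination of Lemma~\ref{lem:ProbOfSubgraph}, the telescoping inequality for products in $[0,1]$, Fact~\ref{fact:dTVMaxEvent}, and Corollary~\ref{cor:d_TVIncreasingInP}. The only spot requiring a moment's care is verifying the inclusion $\bp_H(x_i)\subseteq B_{2|E(H)|}$, where one must recall that simplicity of the path forces its edges (not just its vertices) to be distinct, so its length is bounded by the number of edges of the relevant component rather than by the number of vertices.
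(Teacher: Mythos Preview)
Your proof is correct and follows essentially the same route as the paper: apply Lemma~\ref{lem:ProbOfSubgraph} to $G_0$ and $G_1$, bound the difference of the products $\prod_i N_{G_j,\bp_H(x_i)}(\pstab_H(x_i))$, pass from $\bp_H(x_i)$ to $B_{2|E(H)|}$ via Corollary~\ref{cor:d_TVIncreasingInP}, and use $k\le|E(H)|$. The only cosmetic difference is that the paper bounds the difference of products by first invoking Fact~\ref{fact:dTVMaxEvent} on the product event and then Fact~\ref{fact:dTVProduct}, whereas you use the elementary telescoping inequality $|\prod a_i-\prod b_i|\le\sum|a_i-b_i|$ and then Fact~\ref{fact:dTVMaxEvent} term by term; both routes land on $\sum_i d_{\TV}(N_{G_0,\bp_H(x_i)},N_{G_1,\bp_H(x_i)})$. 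Your explicit verification that $\bp_H(x_i)\subseteq B_{2|E(H)|}$ (via distinctness of edges along a simple $H$-path) is a point the paper leaves implicit.
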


\begin{proof}
	Suppose that $H$ has $k$ connected components, and let $x_{1},\ldots,x_{k}$
	be representative vertices of these components. For $j\in\left\{ 0,1\right\} $,
	Lemma \ref{lem:ProbOfSubgraph} yields 
	\begin{equation}
	\Pr_{\pi\sim \U\left(\Sym\left(n\right)\right)}\left(H\subset G_{j}\pi\right)=n^{-r\left(H\right)}\prod_{i=1}^{k}N_{G_{j},\bp_{H}\left(x_{i}\right)}\left(\pstab_H\left(x_{i}\right)\right)+O\left(n^{-r\left(H\right)-\frac{1}{4}}\right)\,\,\text{.}\label{eq:ProbHSubgraphNoTrivials}
	\end{equation}
	Now,
	
	\begin{align}
	& \left|\prod_{i=1}^{k}N_{G_{0},\bp_{H}\left(x_{i}\right)}\left(\pstab_H\left(x_{i}\right)\right)-\prod_{i=1}^{k}N_{G_{1},\bp_{H}\left(x_{i}\right)}\left(\pstab_H\left(x_{i}\right)\right)\right|\nonumber \\
	& \le d_{\TV}\left(\prod_{i=1}^{k}N_{G_{0},\bp_{H}\left(x_{i}\right)},\prod_{i=1}^{k}N_{G_{1},\bp_{H}\left(x_{i}\right)}\right) & \text{by Fact \ref{fact:dTVMaxEvent}}\nonumber \\
	& \le\sum_{i=1}^{k}d_{\TV}\left(N_{G_{0},\bp_{H}\left(x_{i}\right)},N_{G_{1},\bp_{H}\left(x_{i}\right)}\right) & \text{by Fact \ref{fact:dTVProduct}}\nonumber \\
	& \le k\cdot d_{\TV}\left(N_{G_{0},B_{2\left|E(H)\right|}},N_{G_{1},B_{2\left|E(H)\right|}}\right) & \text{by Corollary \ref{cor:d_TVIncreasingInP}}\nonumber \\
	& \le\left|E(H)\right|\cdot d_{\TV}\left(N_{G_{0},B_{2\left|E(H)\right|}},N_{G_{1},B_{2\left|E(H)\right|}}\right) & \text{since }k\le\left|E(H)\right|.\label{eq:ProbOfSubgraphProductInequality}
	\end{align}
	The claim follows from (\ref{eq:ProbHSubgraphNoTrivials}) and (\ref{eq:ProbOfSubgraphProductInequality}).
\end{proof}
Lemma \ref{lem:LSMNegativeMainTechnical} now follows from Lemma \ref{lem:BoundedNumberOfRuns}
and Corollary \ref{cor:ProbOfSubgraphDiff}.
\begin{proof}[Proof of Lemma \ref{lem:LSMNegativeMainTechnical}]
	Denote $\delta=d_{\TV}\left(N_{G_{0},B_{2q}},N_{G_{1},B_{2q}}\right)$.
	Then, 
	
	\begin{align*}
	d_{\TV}\left(\theta_{0},\theta_{1}\right) & =\frac{1}{2}\sum_{H\in\calR_{\cM}\left(n\right)}\left|\Pr_{\pi\sim \U\left(\Sym(n)\right)}\left(H_{\cM,G_{0}\pi}=H\right)-\Pr_{\pi\sim \U\left(\Sym(n)\right)}\left(H_{\cM,G_{1}\pi}=H\right)\right|\\
	& =\frac{1}{2}\sum_{H\in\calR_{\cM}\left(n\right)}\left|\Pr_{\pi\sim \U\left(\Sym(n)\right)}\left(H\subset G_{0}\pi\right)-\Pr_{\pi\sim \U\left(\Sym(n)\right)}\left(H\subset G_{1}\pi\right)\right|\\
	&
	\le\frac{1}{2}\sum_{H\in\calR_{\cM}(n)}\left(n^{-r\left(H\right)}\cdot\left|E(H)\right|\cdot d_{\TV}\left(N_{G_0,B_{2|E(H)|}},N_{G_1,B_{2|E(H)|}}\right)+cn^{-r\left(H\right)-\frac{1}{4}}\right) & \text{by Corollary \ref{cor:ProbOfSubgraphDiff}}\\
	& \le\frac{1}{2}\sum_{H\in\calR_{\cM}(n)}\left(n^{-r\left(H\right)}\cdot\left|E(H)\right|\cdot\delta+cn^{-r\left(H\right)-\frac{1}{4}}\right) & \text{by Corollary \ref{cor:d_TVIncreasingInP}}\\
	& \le\frac{1}{2}\sum_{H\in\calR_{\cM}(n)}\left(n^{-r\left(H\right)}\cdot q\cdot\delta+cn^{-r\left(H\right)-\frac{1}{4}}\right)\\
	& \le\frac{1}{2}\sum_{r=0}^{q}\left|\calR_{\cM,r}\left(n\right)\right|\cdot\left(n^{-r}\cdot q\cdot\delta+cn^{-r-\frac{1}{4}}\right)\\
	& \le\frac{1}{2}\sum_{r=0}^{q}\binom{q}{r}\left(2q\right)^{q-r}\cdot\left(q\delta+cn^{-\frac{1}{4}}\right) & \text{by Lemma \ref{lem:BoundedNumberOfRuns}}\\
	& \le\frac{1}{2}\sum_{r=0}^{q}\left(2q\right)^{q}\left(q\delta+cn^{-\frac{1}{4}}\right)\\
	& \le\left(2q\right)^{q+2}\left(\delta+cn^{-\frac{1}{4}}\right)\,\,\text{.}
	\end{align*}
\end{proof}

\section*{Acknowledgements}
O.B.\ has received funding from the European Research Council (ERC) under the European Union's Horizon 2020 research and innovation programme (grant agreement No. 803711).
A.L.\ is supported by a grant of the Institute for Advanced Study and by the European Research Council (ERC) under the European Union Horizon 2020 research and innovation program (Grant No. 692854).
J.M.\ is partially supported by NSF grant CCF-1814603.

J.M.\ wishes to thank Venkatesan Guruswami for a helpful conversation, and Nicolas Resch and Jo\~{a}o Ribeiro for useful comments.

\bibliographystyle{plain}
\bibliography{testability}

\appendix

\section{An explicit description of certain systems of relations \label{app:Equations}}

\subsection{\label{appendix:abels}A system of relations which is testable but not stable}

Here we describe instable testable systems of relations using Theorem \ref{thm:MainPositivePrime}, as discussed in Section \ref{subsec:testableInstable}.
Fix a prime number $p$.
Let

\[
S_p=\left\{ d_{2},d_{3},s_{12},s_{13},s_{23},s_{24},s_{34}\right\} \,\,\text{,}
\]
and consider the following system of relations over the alphabet $S_p$:

\begin{align*}
	E_{p}= & \left\{ d_{2}d_{3}=d_{3}d_{2},s_{12}s_{34}=s_{34}s_{12}\right\} \cup\\
	& \left\{ s_{23}s_{12}=s_{12}s_{23}s_{13},s_{34}s_{23}=s_{23}s_{34}s_{24}\right\} \cup\\
	& \left\{ s_{13}s_{12}=s_{12}s_{13},s_{13}s_{23}=s_{23}s_{13},s_{24}s_{23}=s_{23}s_{24}\right\} \cup\\
	& \left\{ s_{24}s_{34}=s_{34}s_{24},s_{13}s_{24}=s_{24}s_{13}\right\} \cup\\
	& \left\{ s_{12}d_{2}=d_{2}s_{12}^{p},s_{12}d_{3}=d_{3}s_{12}\right\} \cup\\
	& \left\{ d_{2}s_{23}=s_{23}^{p}d_{2},s_{23}d_{3}=d_{3}s_{23}^{p}\right\} \cup\\
	& \left\{ s_{34}d_{2}=d_{2}s_{34},d_{3}s_{34}=s_{34}^{p}d_{3}\right\}.
\end{align*}

As explained below, this system is testable by Theorem \ref{thm:MainPositivePrime}, yet instable due to \cite[Theorem 1.3(ii)]{BLT}. 

It was shown in \cite{Abels} that $\langle S_{p}\mid E_{p}\rangle$ is a presentation of \emph{Abels' group} $A_p$, which is defined as follows:
\[
A_{p}=\left\{ \left(\begin{array}{cccc}
1 & * & * & *\\
0 & p^{m} & * & *\\
0 & 0 & p^{n} & *\\
0 & 0 & 0 & 1
\end{array}\right)\subset\GL_{4}\ZZ\left[1/p\right]\mid m,n\in\ZZ\right\} \,\,\text{.}
\]
Here $\ZZ\left[1/p\right]$ is the ring of rational number whose denominator
is a power of $p$, and $\GL_{4}\ZZ\left[1/p\right]$ is the group
of $4\times4$ matrices with entries in $\ZZ\left[1/p\right]$ and
determinant $\pm p^{l}$, $l\in\ZZ$.

An isomorphism $\langle S_{p}\mid E_{p}\rangle\overset{\sim}{\longrightarrow}A_{p}$
is given by
\[
d_{2}\mapsto\left(\begin{array}{cccc}
	1\\
	& p\\
	&  & 1\\
	&  &  & 1
\end{array}\right),\quad d_{3}\mapsto\left(\begin{array}{cccc}
	1\\
	& 1\\
	&  & p\\
	&  &  & 1
\end{array}\right)
\]
\[
s_{ij}\mapsto I+e_{ij}\,\,\text{,}
\]
where $e_{ij}$ is the $4\times4$ matrix with $1$ on the $\left(i,j\right)$-entry
and $0$ elsewhere. 

The group $A_{p}$ is solvable because it is contained in the group
of upper triangular matrices. Hence $A_p$ is  amenable. In other words, $E_p$ satisfies the hypothesis of Theorem \ref{thm:MainPositivePrime}, and so it is testable.

On the other hand, \cite[Theorem 1.3(ii)]{BLT} characterizes the systems of relations that are stable, among those that satisfy the hypothesis of Theorem \ref{thm:MainPositivePrime}.
By this characterization (which is given in terms of \emph{invariant random subgroups} and \emph{cosoficity}), $E_p$ is instable \cite[Corollary 8.7]{BLT}.

\subsection{A non-testable system of relations}
\label{appendix:SL}

Let $m\geq3$.
Here, as discussed in Section \ref{subsec:non-testable}, we use Theorem \ref{thm:MainNegativePrime} to provide
a non-testable system of relations $E_{m}$, over an
alphabet $S_{m}^\pm$, such that $\Gamma\left(E_{m}\right)\coloneqq\langle S_{m}\mid E_{m}\rangle\cong\SL_{m}\ZZ$:

\[
S_{m}=\left\{ s_{ij}\mid i,j\in\left[m\right],i\neq j\right\} \,\,\text{,}
\]
\begin{align*}
	E_{m}= & \left\{ s_{ij}s_{kl}=s_{kl}s_{ij}\mid i,j,k,l\in\left[m\right],j\neq k,i\neq l\right\} \cup\\
	& \left\{ s_{ij}s_{jk}=s_{ik}s_{jk}s_{ij}\mid i,j,k\in\left[m\right],i\neq j,j\neq k,k\neq i\right\} \cup\\
	& \left\{ \left(s_{12}s_{21}^{-1}s_{12}\right)^{4}=1\right\} \,\,\text{.}
\end{align*}

By \cite[Corollary 10.3]{MilnorKTheory}, $\langle S_{m}\mid E_{m}\rangle$ is a presentation of the group $\SL_{m}\ZZ$. The isomorphism given by
\[
s_{ij}\mapsto I+e_{ij}\,\,\text{,}
\]
where $e_{ij}$ is the $m\times m$ matrix with $1$ on the $\left(i,j\right)$-entry
and $0$ elsewhere. The group $\SL_m\ZZ$ is well known to satisfy Property $\T$, and has infinitely many finite quotients. This means that $E_m$ satisfies the hypothesis of Theorem \ref{thm:MainNegativePrime}, and therefore, it is not testable.

\section{A note on sampling inverses of permutations\label{app:AvoidingInverses}}

For a given permutation $\sigma\in\Sym\left(n\right)$ and $x\in\left[n\right]$,
our model assumes that an algorithm $\calM$ can read $\sigma x$ by
making a single query. We also allow $\calM$ to read $\sigma^{-1}x$
with a single query. It is also natural to study testability in a
model that allows to read $\sigma x$, but not $\sigma^{-1}x$, with
a single query. Here we describe how every testable (resp. stable)
system gives rise to a closely related system which is testable (resp.
stable) even under the model where sampling inverses is not allowed (see Definitions \ref{def:testable-equations} and \ref{def:StableRelations}).

Let $E$ be a system of equations over $S^{\pm}$. We say the $E$
is \emph{inverseless} if all equations in $E$ are over the alphabet
$S$, that is, letters from $S^{-1}$ are not used.
For example, $E_2^{\comm}=\{\eq{XY=YX}\}$ is inverseless.
Thus, $\SAS_{k}^{E_2^{\comm}}$, $k\in\NN$, can be implemented without sampling
inverses (simply by checking whether $\eq{XY}x_j=\eq{YX}x_j$ rather than
$\eq{X^{-1}Y^{-1}XY}x_j=x_j$ in Line 2 of Algorithm \ref{alg:SAS}).
Similarly, $\SAS_{k}^{E}$ can be implemented without sampling inverses whenever $E$ is inverseless.

In general, $E$ gives rise to an inverseless system of equations
$E'$ as follows. Extend the alphabet by defining $\overline{S}=\left\{ \overline{s}_{1},\dotsc,\overline{s}_{d}\right\} $
and $S'=S\cup\overline{S}$. Define a system of equations $\tilde{E}$
over $S'$ by starting from $E$ and replacing each occurrence of
each $s_{i}^{-1}$, $1\leq i\leq d$, by $\overline{s}_{i}$. Finally,
set $E'=\tilde{E}\cup\underbrace{\left\{ s_{i}\overline{s}_{i}=1\mid1\leq i\leq d\right\} }_{\eqqcolon E_{\inv}}$.
Then $E'$ is an inverseless system of equations, i.e., it is a system
of equations over $S'$ rather than $\left(S'\right)^{\pm}$. Furthermore,
$E$ is stable if and only if $E'$ is stable because the groups $\Gamma\left(E\right)$
and $\Gamma\left(E'\right)$ are isomorphic and by Proposition \ref{prop:stability-group-property}.

Moreover, by Proposition \ref{prop:testability-group-property}, $E$
is testable if and only if $E'$ is. Additionally, the algorithm $\LSM_{k,P,\delta}^{E}$ (see Section \ref{subsec:intro-LSM})
can be altered to avoid sampling inverses as follows:
%Write $\left(\sigma_{1},\dotsc,\sigma_{d},\overline{\sigma}_{1},\dotsc,\overline{\sigma}_{d}\right)$ for the input permutations.
First run $\SAS_{k'}^{E_{\inv}}$ for
a large enough $k'$, and reject if $\SAS_{k'}^{E_{\inv}}$ rejects.
Otherwise, run $\LSM_{k,P,\delta}^{E}$ without sampling inverses
by replacing each query of $s_{i}^{-1}x$ by a query of $\overline{s}_{i}x$,
and accept if $\LSM_{k,P,\delta}^{E}$ accepts.

%By the above discussion, every stability and testability result result
%can be converted to a result about the restricted model that does
%not allow to sample inverses. For example, an inverseless system $E$
%such that $\Gamma\left(E\right)$ is amenable is stable in the restricted
%model if and only if $\Gamma\left(E\right)$ satisfies the cosoficity
%condition of Theorem \ref{thm:BLT}. Such a system $E$ gives rise
%to a system $E'$, as described above, with $\Gamma\left(E'\right)=\Gamma\left(E\right)$,
%and $E'$ is testable under the restricted model by the above discussion
%and Theorem \ref{thm:amenable-groups-are-testable}.

\section{A review of free groups and group presentations\label{app:FreeGroupAndPresentations}}

Here we give a brief introduction to free groups and their universal
property, and to group presentations (see \cite{Johnson}).

Let $S=\left\{ s_{1},\dotsc,s_{d}\right\} $ be a set of letters.
Write $S^{-1}=\left\{ s_{1}^{-1},\dotsc,s_{d}^{-1}\right\} $ for
the set of formal inverses of the letters in $S$, and let $S^{\pm}\coloneqq S\cup S^{-1}$.
We define $\left(s_{i}^{-1}\right)^{-1}\coloneqq s_{i}$, and so $s\mapsto s^{-1}$
becomes an involution on $S^{\pm}$. Write $w_{1}\sim w_{2}$ for
words $w_{1}$ and $w_{2}$ over $S^{\pm}$ if there is a sequence
of words $w_{1}=u_{0},\dotsc,u_{n}=w_{2}$ such that $u_{i+1}$ is
obtained from $u_{i}$ by adding or removing a \emph{null subword}, i.e.,
a subword of the form
$ss^{-1}$, $s\in S^{\pm}$. For example, $s_{1}s_{2}s_{2}^{-1}s_{1}\sim s_{1}s_{1}s_{3}^{-1}s_{3}$.
A word $w$ over the alphabet $S^{\pm}$ is \emph{reduced} if it does
not contain a null subword as above. Every
word $w$ over $S^{\pm}$ is equivalent to a unique reduced word called
the \emph{reduced form} of $w$. For words $w_{1},w_{2},w'_{1},w'_{2}$
over $S^{\pm}$ such that $w_{1}\sim w'_{1}$ and $w_{2}\sim w'_{2}$,
we have $w_{1}w_{2}\sim w'_{1}w'_{2}$. In particular, the reduced
forms of $w_{1}w_{2}$ and $w'_{1}w'_{2}$ are equal.

The \emph{free group} $F_{S}$ over $S$ is the set of reduced words
over $S^{\pm}$, endowed with the following multiplication operation:
for $w_{1},w_{2}\in F_{S}$, $w_{1}\cdot w_{2}$ is the unique reduced
word equivalent to the concatenation $w_{1}w_{2}$. We sometimes abuse
notation and view a word $w$ over $S^{\pm}$, not necessarily reduced,
as an element of $F_{S}$. We do so only when making statements where
only the equivalence class of $w$ matters.

It is worth noting the following alternative way to define $F_S$, although we are not using it in this paper.
One can realize the free group $F_S$ as the group of
$\sim$-equivalence classes of
(not necessarily reduced) of words over $S^{\pm}$, with the multiplication defined by $[w_1]\cdot [w_2] \coloneqq [w_1w_2]$, where $[w]$ denotes the $\sim$-equivalence class of a word $w$.
This point of view also makes it easy to define \emph{group presentations}, i.e., to define the group $\langle S \mid R\rangle$ for a set of words $R$ over $S^\pm$: We define an equivalence class $\sim_R$ on the set of words over $S^\pm$ just as we defined $\sim$, only that we consider all subwords of the form $vr^{\pm 1} v^{-1}$ to be null subwords, for every word $v$ over $S^{\pm}$ and $r\in R$, in a addition to the null subwords $ss^{-1}$, $s\in S$. We then let $\langle S \mid R\rangle$ be the group of $\sim_R$-equivalence classes, with the multiplication of classes defined as in $F_S$ above.
We now go back to thinking of $F_S$ as the set of reduced words and define $\langle S \mid R\rangle$ in a different, equivalent, way.

The free group $F_{S}$ has the following important property, known
as \emph{the universal property of $F_{S}$}: for every group $\Gamma$
and function $f\colon S\to\Gamma$, there is exactly one group homomorphism
$\tilde{f}\colon F_{S}\to\Gamma$ such that $\tilde{f}\left(s_{i}\right)=f\left(s_{i}\right)$
for all $1\le i\le d$. The map $f\mapsto\tilde{f}$ is a bijection
from the set of functions $S\to\Gamma$ to the set of homomorphisms
$F_{S}\to\Gamma$.

In particular, for a group $\Gamma$ generated by $\gamma_{1},\dotsc,\gamma_{d}\in\Gamma$,
there is a unique homomorphism $\pi\colon F_{S}\to\Gamma$ such that
$\pi\left(s_{i}\right)=\gamma_{i}$ for each $1\leq i\leq d$.
The map $\pi$ is surjective
because its image $\pi\left(F_{S}\right)$ contains a generating set
for $\Gamma$. That is, every group $\Gamma$ generated by $d$ elements
is a quotient of $F_{S}$. For example, $\Gamma_{0}\coloneqq\ZZ^{2}$
is a quotient of $F_{\left\{ s_{1},s_{2}\right\} }$ as exhibited
by the unique homomorphism $\pi_{0}\colon F_{\left\{ s_{1},s_{2}\right\} }\to\ZZ^{2}$
sending $s_{1}\mapsto\left(1,0\right)$ and $s_{2}\mapsto\left(0,1\right)$.

The kernel $\ker\pi$ of $\pi\colon F_{S}\to\Gamma$ is a normal subgroup
of $F_{S}$. It is often useful to have a subset $R$ of $\ker\pi$
such that $\ker\pi=\lla R\rra$. Here $\lla R\rra$ denotes the normal
closure of $R$ in $F_{S}$, that is, the smallest normal subgroup
of $F_{S}$ that contains $R$. Concretely, $\lla R\rra$ consists
of all elements of the form $\prod_{i=1}^{m}v_{i}r_{i}^{\eps_{i}}v_{i}^{-1}$,
where $m\geq0$, $v_{i}\in F_{S}$, $r_{i}\in R$ and $\eps_{i}\in\left\{ \pm1\right\} $.
In the cases of interest of this paper, $\ker\pi=\lla R\rra$ where
$R$ is a finite set. For $\pi_{0}\colon F_{\left\{ s_{1},s_{2}\right\} }\to\ZZ^{2}$
as in the example, it can be shown that $\ker\pi_{0}=\lla R_{0}\rra$
for $R_{0}=\left\{s_{1}^{-1}s_{2}^{-1}s_{1}s_{2}\right\} $.

The surjective homomorphism $\pi\colon F_{S}\to\Gamma$ gives rise
to an isomorphism $F_{S}/\ker\pi\overset{\sim}{\longrightarrow}\Gamma$.
If $\ker\pi=\lla R\rra$ for $R\subset\ker\pi$, we write
$\langle S\mid R\rangle$ for the quotient group $F_{S}/\ker\pi$
and say that the group $\langle S\mid R\rangle$, which is isomorphic
to $\Gamma$, is a \emph{presentation} of $\Gamma$. For a system
of relations $E$, we write $\langle S\mid E\rangle$ for $\langle S\mid R_{E}\rangle$
(where $R_E$ is as in the introduction).

For example,
\[
\langle s_{1},s_{2}\mid s_{1}s_{2}=s_{2}s_{1}\rangle=\langle s_{1},s_{2}\mid s_{1}^{-1}s_{2}^{-1}s_{1}s_{2}\rangle\cong\ZZ^{2}\,\,\text{.}
\]

The group $\langle S\mid R\rangle$ is a \emph{finite presentation}
if $S$ and $E$ are finite sets. In this paper we also write $\Gamma\left(E\right)$
for $\langle S\mid E\rangle$ when the set $S$ is understood from
the context.

\end{document}